\newenvironment{alphenumerate}
				{ \begin{enumerate}
				}
				{\end{enumerate}
				}
\newcommand{\scalar}[2]{\langle#1\,,#2\rangle}
\newcommand{\pair}[2]{(#1\,,#2)}
\newcommand{\norm}[1]{\|#1\|}
\newcommand{\normm}[1]{|||#1|||}
\renewcommand{\H}{\mathcal{H}}
\renewcommand{\L}{\mathcal{L}}
\newcommand{\D}{\mathcal{D}}
\renewcommand{\d}{\mathrm{d}}
\newcommand{\pO}{{\partial \Omega}}
\newcommand{\C}{\mathcal{C}}
\newcommand{\1}{\mathbb{I}}
\newcommand{\R}{\mathbb{R}}
\newcommand{\bx}{\mathbf{x}}
\newcommand{\bk}{\mathbf{k}}
\newcommand{\dom}{\mathrm{Dom}}
\newcommand{\map}[5][\phantom]{
	\begin{array}{ccc}
		\mathllap{#1{\,:\,}}{#2}&\to&{#3}\\
			{#4}&\rightsquigarrow&{#5}
	\end{array}	
	}
\newtheorem{theorem}{Theorem}
\newtheorem{problem}[theorem]{Problem}
\newtheorem{definition}[theorem]{Definition}
\newtheorem{lemma}[theorem]{Lemma}
\newcounter{myexample}[section]
\newenvironment{example}[1][]{
			\refstepcounter{myexample}
			\vspace{0.4cm}
			\noindent 
			{\bf Example \arabic{section}.\arabic{myexample}\normalfont{#1}{\bf.}}
			%\vspace{0.4cm}
			%\newline
			\noindent
			}
			{
			\hfill$\blacksquare$\newline
			}
\numberwithin{equation}{section}
\numberwithin{theorem}{section}
\numberwithin{myexample}{section}
\begin{document}

\title[S.A. extensions of symmetric op. and applications to Quantum Physics]{On the theory of self-adjoint extensions of symmetric operators and its applications to Quantum Physics}

\author{A. Ibort}
\address{ICMAT and Depto. de Matem\'aticas, Univ. Carlos III de Madrid, Avda. de la
Universidad 30, 28911 Legan\'es, Madrid, Spain.}
\email{albertoi@math.uc3m.es}
\author{J.M. P\'{e}rez-Pardo}
\address{INFN - Sezione di Napoli, Complesso Universitario di Monte Sant'Angelo, via Cintia, Edificio 6, 80126 Naples, Italy}
\email{juanma@na.infn.it}

\begin{abstract}
This is a series of 5 lectures around the common subject of the construction of self-adjoint extensions of symmetric operators and its applications to Quantum Physics. We will try to offer a brief account of some recent ideas in the theory of self-adjoint extensions of symmetric operators on Hilbert spaces and their applications to a few specific problems in Quantum Mechanics.   
\end{abstract}

\maketitle
\tableofcontents

%%%%%%%%%%%%%%%%%%%%%%%%%%%%%%%%%%%%%%%%%%%%%%%%%%%%%%%%%%%%%%%%%%%%%%%%%%%%%%%%%%%%%%%%%%%%%%%%%%%%%%%%%%%%%%%%%%%%%%%%%%%%%%%%%%%%%%%%%%%%%%%%%%%%%%%%%%%%%%%%%%%%%%%%%%%%%%%%%%%%%%%%%%%%%%%%%%%%%%%%%%%%%%%%%%%%%%%%%%%%%%%%%%%%%%%%%%%%%%%%

\section{Introduction}\label{sec:introduction}

In this series we do not pretend to offer a review of the basic theory of self-adjoint extensions of symmetric operators which is already well-known and has been described extensively in various books (see for instance \cite{reed75}, \cite{akhiezer61b}, \cite{We80}).  
Instead, what we try to offer the reader in these notes is a selection of problems inspired by Quantum Mechanics  where the study of self-adjoint extensions of symmetric operators constitutes a basic ingredient.   

The reader may find in the set of lectures \cite{Ib12} a recent discussion on the theory of self-adjoint extensions of Laplace-Beltrami and Dirac operators in manifolds with boundary, as well as a family of examples and applications.  In a sense, the present lectures can be considered as a follow up.  

Thus, in the current series, Lecture I will be devoted to analyse the problem of determining self-adjoint extensions of bipartite systems whose components are defined by symmetric operators.  An application of such ideas to control entangled states will be discussed.  Lecture II will cover the recent approach developed in \cite{ibortlledo13} to the theory of self-adjoint extensions of the Laplace-Beltrami operator from the point of view of their corresponding quadratic forms.   New results on the theory as well as the introduction of the notion of admissible unitary operators at the boundary will be discussed.   Lecture III will be devoted to discuss the fundamentals of the theory of self-adjoint extensions of Dirac-like operators in manifolds with boundary and Lecture IV will cover the study of the construction of self-adjoint extensions for operators defining non-semibounded quadratic forms like the Dirac operators considered in the previous lecture.
Finally, Lecture V will cover the situation where symmetries are present. This discussion is closely related to the results presented in \cite{ibortlledo14}.  Self-adjoint extensions with symmetry will be described both in von Neumann's picture and using the theory of quadratic forms.  Explicit examples will be provided for the Laplace-Beltrami operator.

%%%%%%%%%%%%%%%%%%%%%%%%%%%%%%%%%%%%%%%%%%%%%%%%%%%%%%%%%%%%%%%%%%%%%%%%%%%%%%%%%%%%%%%%%%%%%%%%%%%%%%%%%%%%%%%%%%%%%%%%%%%%%%%%%%%%%%%%%%%%%%%%%%%%%%%%%%%%%%%%%%%%%%%%%%%%%%%%%%%%%%%%%%%%%%%%%%%%%%%%%%%%%%%%%%%%%%%%%%%%%%%%%%%%%%%%%%%%%%%%

\newpage

\section{Lecture I. Self-adjoint extensions of bipartite systems}\label{sec:bipartite}

The first lecture is devoted to discuss the problem of self-adjoint extensions of bipartite systems where one or both of the individual systems are defined by symmetric but not self-adjoint operators and the complete description of the bipartite system requires a self-adjoint extension of the overall system to be fixed.    This situation will arise, for instance, whenever we have a quantum system which is coupled or controlled by another quantum system like a particle in a box or in a trap, where the boundary conditions determine the self-adjoint extension of the system.  However, and this is the problem we will analyse here, are these all the possibilities for such a system or, as it often happens in quantum systems, will the tensor product unwind other possibilities and there are other self-adjoint extensions of the bipartite system that will go beyond the individual ones?
The answer, as it will be shown later, is positive and we discuss some simple applications. In particular it is shown that an appropriate choice of self-adjoint extensions of the bipartite system, would allow to generate entangled states.

Before embarking in the formal definitions we would like to pose more precisely the main problem that will be analysed in this lecture showing the vast field of possibilities that arise when considering the set of self-adjoint extensions.

\begin{problem}\label{prob:bipartite}
Consider two quantum systems $A$ and $B$, that we shall call auxiliary and bulk system respectively. System $A$ is defined on a Hilbert space $\H_A$ and equivalently system $B$ is defined on $\H_B$. Now consider that the dynamics in each of these systems is not completely determined in the sense of \cite{reed75}, i.e., the dynamics is given in terms of just densely defined symmetric Hamiltonian operators $H_A$ and $H_B$ and a self-adjoint extension needs to be specified in order to define individual proper dynamics. What are the self-adjoint extensions of the composite system $A\otimes B$\,?
\end{problem}

We will provide a partial answer to this problem and a conjecture on the final solution. Since the aim of these lectures is to introduce the topics to a wide audience we will skip the most technical details. Most of them can be found in \cite{Ibort12}.

The situation exhibited in Problem \ref{prob:bipartite} is not the standard one. Usually one considers the same situation but with both Hamiltonian operators being already self-adjoint. In such a case the Hilbert space of the bipartite system becomes the tensor product of the Hilbert spaces of the parties, 
$$\H=\H_A\otimes\H_B$$ and the dynamics is described by the self-adjoint operator 
\begin{equation}\label{eq:sepHamiltonian}
H_{AB}=H_A\otimes\1+\1\otimes H_B\;.
\end{equation}
In such a case the unitary evolution of the system factorizes in terms of the unitary evolution in each of the subsystems, this is $$U_t=\exp it H_{AB}=\exp it H_A\otimes \exp it H_B=U_t^{A}\otimes U_t^{B}\;.$$
This latter kind of dynamics is called separable, i.e., the evolution of the subsystems is independent. 

If one considers now the situation exposed in Problem~\ref{prob:bipartite}, one needs first to select a self-adjoint extension for the symmetric Hamiltonian operator $H_{AB}$ in order to determine the dynamics of the bipartite system. In contrast with what one could expect, there are self-adjoint extensions leading to non-separable dynamics even if the symmetric operator $H_{AB}$ is of the form \eqref{eq:sepHamiltonian}. We analyse this with more detail in the next sections.

%%%%%%%%%%%%%%%%%%%%%%%%%%%%%%%%%%%%%%%%%%%%%%%%%%%%%%%%%%%%
%%%%%%%%%%%%%%%%%%%%%%%%%%%%%%%%%%%%%%%%%%%%%%%%%%%%%%%%%%%%

\subsection{A simple example: The half-line--spin 1/2 bipartite system}
As our first example we consider the situation where one of the parties, the auxiliary system, is described by a symmetric but not self-adjoint operator. The bulk system, system $B$, is going to be a finite dimensional system, and therefore automatically self-adjoint. The auxiliary system and the bulk system are given as follows:
\begin{itemize}
\item[$A$:] {$\H_A=\L^2\left([0,\infty)\right)\;,\quad H_A=-\frac{\d^2}{\d x^2}\;,\quad \dom(H_A)=\C_c^{\infty}(0,\infty)$\,.}
\item[$B$:] {$\H_B=\mathbb{C}^2\;,\quad H_B$ is a Hermitean matrix with eigenvalues $\lambda_1>\lambda_2$\,.}
\end{itemize}
Notice that $H_A$ is symmetric but not self-adjoint on its domain. Our aim is now to compute all the self-adjoint extensions of $A\times B$ and determine which ones define separable dynamics.

Before discussing it we may wonder if we can say something general in this situation. Actually system $B$ is already self-adjoint. If we determine the set of self-adjoint extensions of system $A$, that we denote by $\mathcal{M}_A$\,, shouldn't the set of self-adjoint extensions of the composite system $\mathcal{M}_{AB}$ be such that $\mathcal{M}_{AB}=\mathcal{M}_A$\,?
It is well known that in this case we have that $\mathcal{M}_A=\mathcal{U}(1)$\,, see for instance \cite{asorey05,ibortlledo13,Ibort13,kochubei75,bruning08}. However, as we will see by means of a computation later, the set of self-adjoint extensions of the composite system is much bigger.

In order to address the problem we can use von Neumann's abstract characterization of the sets of self-adjoint extensions of symmetric operators. The results can be summarized in the following theorem.

\begin{theorem}[von Neumann]\label{thm:vonneumann}
	The set of self-adjoint extensions of a densely defined, symmetric operator $T$ on a complex, separable Hilbert space, $\mathcal{M}(T)$, is in one to one correspondence with the set of unitary operators $\mathcal{U}(\mathcal{N}_+,\mathcal{N_-})$\,, where $$\mathcal{N}_{\pm}=\operatorname{ker}(T^{\dagger}\mp i\1)=\operatorname{ran}(T\pm i\1)^{\bot}$$ are the deficiency spaces of $T$\,.
	
	Moreover, if we denote by $T_K$ the self-adjoint extension determined by the unitary operator $K:\mathcal{N}_+\to\mathcal{N_-}$ we have that $$\dom(T_K)=\dom(T)\oplus(\1+K)\mathcal{N}_+$$ $$T_K\Phi=T\Phi_0+i(\1-K)\xi_+\;,\quad\Phi=\Phi_0+(\1+K)\xi_+\in\dom(T_K)\;.$$
	The self-adjoint extension $T_K$ will be called in what follows the von Neumann extension of $T$ defined by $K$.
\end{theorem}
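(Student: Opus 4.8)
The plan is to reduce the whole statement to the case of a \emph{closed} symmetric operator, since $T$ and its closure $\overline{T}$ share the same adjoint $T^\dagger$ and the same self-adjoint extensions, and then to organise the argument around the graph inner product $\scalar{\phi}{\psi}_\Gamma=\scalar{\phi}{\psi}+\scalar{T^\dagger\phi}{T^\dagger\psi}$ on $\dom(T^\dagger)$. The cornerstone is the \emph{first von Neumann decomposition}
$$\dom(T^\dagger)=\dom(\overline{T})\oplus\mathcal{N}_+\oplus\mathcal{N}_-\,,$$
an orthogonal direct sum with respect to $\scalar{\cdot}{\cdot}_\Gamma$. To establish it I would first observe that for a symmetric operator $\norm{(\overline{T}\pm i\1)\phi}^2=\norm{\overline{T}\phi}^2+\norm{\phi}^2$, so $\overline{T}\pm i\1$ are bounded below and have closed range; hence $\H=\operatorname{ran}(\overline{T}+i\1)\oplus\mathcal{N}_+=\operatorname{ran}(\overline{T}-i\1)\oplus\mathcal{N}_-$. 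Orthogonality of the three summands in the graph inner product is then a short computation using $T^\dagger\xi_\pm=\pm i\xi_\pm$ together with the defining relations $\mathcal{N}_\pm=\operatorname{ran}(\overline{T}\pm i\1)^\bot$, while the fact that they span $\dom(T^\dagger)$ follows by taking an element graph-orthogonal to all three and showing it must vanish.

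Next I would parametrise the symmetric extensions. Any symmetric $S$ with $\overline{T}\subseteq S\subseteq T^\dagger$ is the restriction of $T^\dagger$ to $\dom(\overline{T})\oplus G$ for some subspace $G\subseteq\mathcal{N}_+\oplus\mathcal{N}_-$. Writing $\phi=\phi_0+\xi_++\xi_-$ and $\psi=\psi_0+\eta_++\eta_-$ in the decomposition, a direct calculation with $T^\dagger\xi_\pm=\pm i\xi_\pm$ and the relations above yields the boundary form
$$\scalar{T^\dagger\phi}{\psi}-\scalar{\phi}{T^\dagger\psi}=2i\big(\scalar{\xi_-}{\eta_-}-\scalar{\xi_+}{\eta_+}\big)\,,$$
all bulk and cross terms cancelling. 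Symmetry of $S$ is exactly the vanishing of this form on $G\times G$; putting $\phi=\psi$ forces $\norm{\xi_+}=\norm{\xi_-}$, so $G$ is the graph of a linear isometry $V\colon\D_+\to\mathcal{N}_-$ defined on a closed subspace $\D_+\subseteq\mathcal{N}_+$, and $\dom(S)=\dom(\overline{T})\oplus(\1+V)\D_+$.

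It then remains to single out the self-adjoint extensions. The plan is to compute $\dom(S^\dagger)$ in the same coordinates: one shows that $S^\dagger$ is the restriction of $T^\dagger$ to those $\psi$ for which the boundary form vanishes against all of $\dom(S)$, and this condition identifies $\dom(S^\dagger)$ with the data dual to $V$. The equality $S=S^\dagger$ then forces $\D_+=\mathcal{N}_+$ and $V(\mathcal{N}_+)=\mathcal{N}_-$, that is, $V$ must be a unitary $K\colon\mathcal{N}_+\to\mathcal{N}_-$; conversely every such $K$ produces a self-adjoint $T_K$. I expect this equivalence — that maximality of the partial isometry in the boundary-form sense is precisely unitarity, hence self-adjointness — to be the main obstacle, since it requires the careful identification of $\dom(S^\dagger)$ rather than the mere verification of symmetry. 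Once $K$ is unitary the stated formulas are immediate: the domain is $\dom(\overline{T})\oplus(\1+K)\mathcal{N}_+$, and for $\Phi=\Phi_0+(\1+K)\xi_+$ one has
$$T_K\Phi=T^\dagger\Phi=\overline{T}\Phi_0+T^\dagger\xi_++T^\dagger(K\xi_+)=T\Phi_0+i(\1-K)\xi_+\,,$$
using $T^\dagger\xi_+=i\xi_+$ and $T^\dagger(K\xi_+)=-i\,K\xi_+$. Finally, the correspondence $K\mapsto T_K$ is injective, because projecting $(\1+K)\mathcal{N}_+$ onto $\mathcal{N}_+$ and $\mathcal{N}_-$ recovers $K$, and surjective by the parametrisation above, which yields the claimed bijection with $\mathcal{U}(\mathcal{N}_+,\mathcal{N}_-)$.
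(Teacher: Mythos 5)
Your proposal is correct and follows the standard von Neumann argument — the graph-norm decomposition $\dom(T^\dagger)=\dom(\overline{T})\oplus\mathcal{N}_+\oplus\mathcal{N}_-$, the classification of symmetric extensions as graphs of isometries via the boundary form $\scalar{T^\dagger\phi}{\psi}-\scalar{\phi}{T^\dagger\psi}=2i\bigl(\scalar{\xi_-}{\eta_-}-\scalar{\xi_+}{\eta_+}\bigr)$, and the identification of self-adjointness with unitarity — which is exactly the proof in the sources (Akhiezer--Glazman, Ch.~2, and Reed--Simon, Ch.~X) that the paper cites in place of giving its own proof. The steps you flag as the main obstacle (computing $\dom(S^\dagger)$ to force $\D_+=\mathcal{N}_+$ and surjectivity of $V$) go through precisely as you outline, so there is no genuine gap.
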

For more details about this theorem and its proof we refer to \cite[Chapter 2]{akhiezer61b}, \cite[Chapter X]{reed75}.\\

Now we can look at our bipartite system and we get the following theorem.

\begin{theorem}\label{thm:compositeAB}
	Let $A$ and $B$ be two subsystems of the composite system $AB$\,. Let $H_A$, $\H_A$ denote the Hamiltonian operator and the Hilbert space of system $A$ and let $H_B$, $\H_B$ denote the Hamiltonian operator and the Hilbert space of system $B$. Consider that $H_A$ is a symmetric Hamiltonian operator, $H_B$ is a self-adjoint operator and consider that the dynamics of the composite system is given by the symmetric operator $$H_{AB}=H_A\otimes\1+\1\otimes H_B\;.$$ Let $\mathcal{N}_{A\pm}$ be the deficiency spaces of system $A$. Then, the deficiency spaces of the bipartite system $\mathcal{N}_{AB\pm}$ satisfy $$\mathcal{N}_{AB\pm}\simeq\mathcal{N}_{A\pm}\otimes\H_B\;.$$
\end{theorem}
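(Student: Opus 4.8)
The plan is to compute the deficiency spaces $\mathcal{N}_{AB\pm}=\operatorname{ker}(H_{AB}^{\dagger}\mp i\1)$ directly from the definition supplied in Theorem~\ref{thm:vonneumann}, exploiting the product structure of $H_{AB}$ together with the self-adjointness of $H_B$. First I would establish that the adjoint of the composite operator splits as $H_{AB}^{\dagger}=H_A^{\dagger}\otimes\1+\1\otimes H_B$. Since $H_B$ is self-adjoint and, in the motivating example, acts on the finite-dimensional space $\H_B=\mathbb{C}^2$ and is therefore bounded, this identity follows by testing the defining relation of the adjoint against elementary tensors and using $H_B^{\dagger}=H_B$; the only point demanding care is the description of $\dom(H_{AB}^{\dagger})$, which I would take to be the natural domain on which the right-hand side is well defined.

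Next I would diagonalize $H_B$. Choosing an orthonormal eigenbasis $\{e_k\}$ of $\H_B$ with $H_B e_k=\lambda_k e_k$, any $\Psi\in\H_A\otimes\H_B$ expands as $\Psi=\sum_k\psi_k\otimes e_k$. The defining equation $(H_{AB}^{\dagger}\mp i\1)\Psi=0$ then decouples componentwise into
$$\left(H_A^{\dagger}-(\pm i-\lambda_k)\1\right)\psi_k=0\qquad\text{for every }k,$$
so that $\psi_k\in\operatorname{ker}\bigl(H_A^{\dagger}-(\pm i-\lambda_k)\1\bigr)$. The crucial observation is that, because each $\lambda_k$ is real, the spectral parameter $\pm i-\lambda_k$ lies in the upper (resp.\ lower) open half-plane, exactly like $\pm i$ itself.

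This is where the main work lies. I would invoke the classical fact from von Neumann's theory that the dimension of $\operatorname{ker}(T^{\dagger}-z\1)$ is constant as $z$ ranges over each connected component of $\mathbb{C}\setminus\mathbb{R}$; consequently $\operatorname{ker}\bigl(H_A^{\dagger}-(\pm i-\lambda_k)\1\bigr)\simeq\mathcal{N}_{A\pm}$ for every $k$. Assembling the components yields
$$\mathcal{N}_{AB\pm}\simeq\bigoplus_k\operatorname{ker}\bigl(H_A^{\dagger}-(\pm i-\lambda_k)\1\bigr)\otimes\mathbb{C}\,e_k\simeq\bigoplus_k\mathcal{N}_{A\pm}\otimes\mathbb{C}\,e_k=\mathcal{N}_{A\pm}\otimes\H_B,$$
which is the claim. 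I expect the delicate step to be precisely the identification of the shifted spaces $\operatorname{ker}(H_A^{\dagger}-(\pm i-\lambda_k)\1)$ with the genuine deficiency spaces $\mathcal{N}_{A\pm}$: equality of dimensions follows from the half-plane invariance of the deficiency indices, but exhibiting the isomorphism in a controlled way, rather than merely counting dimensions, requires relating the eigenspaces of $H_A^{\dagger}$ at different spectral points, and one must also confirm that the formal decoupling respects the operator domains so that the direct sum genuinely reconstitutes $\mathcal{N}_{AB\pm}$.
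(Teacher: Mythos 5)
Your proposal follows essentially the same route as the paper's proof: expand in an orthonormal eigenbasis of $H_B$ (the paper assumes, for simplicity, that $H_B$ has discrete non-degenerate spectrum, exactly as your eigenbasis expansion implicitly requires), decouple the deficiency equation componentwise to get $\psi_k\in\ker\bigl(H_A^{\dagger}-(\pm i-\lambda_k)\1\bigr)$, and invoke the constancy of the deficiency dimension on each open half-plane to identify these shifted kernels with $\mathcal{N}_{A\pm}$ and assemble the isomorphism $\Psi\mapsto\sum_k\alpha_k(\psi_k)\otimes e_k$. The delicate step you flag is handled in the paper in the same way you suggest---choosing, for each $k$, an isomorphism $\alpha_k:\ker\bigl(H_A^{\dagger}-(\pm i-\lambda_k)\1\bigr)\to\mathcal{N}_{A\pm}$ guaranteed by equality of Hilbert-space dimensions---so no further control is needed for the statement as given.
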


Before going to the proof let us consider first the consequences of this result. Because of von Neumann's theorem, the set of self-adjoint extensions of the bipartite system satisfies $$\mathcal{M}_{AB\pm}=\mathcal{U}(\mathcal{N}_{AB+},\mathcal{N}_{AB-})=\mathcal{U}(\mathcal{N}_{A+}\otimes\H_B,\mathcal{N}_{A-}\otimes\H_B)\supsetneq\mathcal{U}(\mathcal{N}_{A+},\mathcal{N}_{A-})\otimes \mathcal{U}(\H_B)\;.$$
The group at the right hand side is a proper subgroup and will be relevant later on. The self-adjoint extensions belonging to this group will be called \emph{decomposable} extensions.

\begin{proof}
We will assume for simplicity that the spectrum of $H_B$ is discrete and non-degenerate. Let $\{\rho_n\}$ denote the complete orthonormal basis given by the eigenvectors of $H_B$, $$H_B\rho_n=\lambda_n\rho_n\;.$$ Any $\Phi\in\H=\H_A\hat{\otimes}\H_B$ admits a unique decomposition in terms of the former orthonormal base 
$$\Phi=\sum_{k}\Phi_k\otimes\rho_k\;.$$
We use the symbol $\hat{\otimes}$ to denote that the closure with respect to the natural topology on the tensor product is taken.

Let $\Phi\in\mathcal{N}_{AB+}$. Then we have that $$H_{AB}^\dagger\Phi=i\Phi\;$$ $$\sum\left(H^\dagger_A\Phi_k+\lambda_k\right)\otimes\rho_k=\sum i\Phi_k\otimes \rho_k$$ $$H^\dagger_A\Phi_k=(i-\lambda_k)\Phi_k\;.$$
Let $z\in\mathbb{C}$. If we denote $\mathcal{N}_{Az}=\ker(H^\dagger_A-\bar{z}\1)$ the equation above shows that $\Phi_k\in\mathcal{N}_{A(\lambda_k+i)}$\,. The dimension of the deficiency spaces is constant along the upper and lower complex half planes, cf. \cite{reed75}. Therefore, there exists for each $k$ an isomorphism $$\alpha_k:\mathcal{N}_{A(\lambda_k+i)}\to \mathcal{N}_{A+}\;,$$ and we can consider the isomorphism
$$
\map[\alpha]{\mathcal{N}_{AB+}}{\mathcal{N}_{A+}\otimes \H_B}{\Phi}{\sum \alpha_k(\Phi_k)\otimes\rho_k \mathrlap{\in \mathcal{N}_{A+}\otimes \H_B\;.}}
$$
Similarly for $\mathcal{N}_{A-}$\,.
\end{proof}

To fix the ideas we can go back to our example and compute the different deficiency spaces explicitly. First we need to compute the deficiency spaces for subsystem $A$\,,
$$\mathcal{N}_{A\pm}=\ker\left(-\frac{\d^2}{\d x^2}\mp i\right)\;.$$
Hence we need to find solutions of the following equation that lie in $\L^2([0,\infty))$\,.
\begin{equation}
	-\frac{\d^2}{\d x^2}\Phi=\pm i \Phi\;.
\end{equation}
The general solution is $\Phi=C_1 \exp \sqrt{\pm i} x + C_2 \exp -\sqrt{\pm i} x=C_1 \exp \frac{1}{\sqrt{2}}(1\pm i) x + C_2 \exp -\frac{1}{\sqrt{2}}(1\pm i) x$\,. Since the solutions must be in $\L^2([0,\infty))$ the coefficient $C_1=0$\,. This leads us to 
$$
	\mathcal{N}_{A+}=\mathrm{span}\left\{ \exp \left(-\frac{1}{\sqrt{2}}(1+ i) x\right)\right\}\;,
$$
$$
	\mathcal{N}_{A-}=\mathrm{span} \left\{ \exp \left(-\frac{1}{\sqrt{2}}(1- i) x\right)\right\}\;,
$$
$$
	\dim \mathcal{N}_{A+} =\dim \mathcal{N}_{A-} = 1\;.
$$
Therefore we have that $\mathcal{N}_{A\pm}\simeq \mathbb{C}$\,. According to the previous theorem and the fact that in this case $\H_B\simeq \mathbb{C}^2$ we have that $$\mathcal{N}_{AB\pm}\simeq\mathbb{C}\otimes\mathbb{C}^2\simeq \mathbb{C}^2\;.$$ Hence the set of self-adjoint extensions of the composite system $\mathcal{M}_{AB}$ strictly contains the set of self-adjoint extensions of subsystem $A$\,.

$$
	\mathcal{M}_A=\mathcal{U}(\mathcal{N}_{A+},\mathcal{N}_{A-})\simeq \mathcal{U}(1)\;.
$$
$$
	\mathcal{M}_{AB}=\mathcal{U}(\mathcal{N}_{AB+},\mathcal{N}_{AB-})\simeq \mathcal{U}(2)\;.
$$

The previous discussion allows us to pose the following conjecture. In terms of the boundary conditions and for the case of the Laplace-Beltrami operator, which is a straightforward generalization of the one-dimensional case treated here, it is easy to parametrize the space of self-adjoint extensions in terms of the set of unitary operators on the Hilbert spaces of boundary data. Namely, in the case were 
$$
\begin{array}{cc}
\H_A=\L^2(\Omega_A) & \H_B=\L^2(\Omega_B)\\
H_A=-\Delta_{A} & H_B=-\Delta_B
\end{array}\,
$$
the self-adjoint extensions will be determined by the unitary operators acting on the induced Hilbert space at the boundary:
$$
\partial(\Omega_A\times\Omega_B)=\pO_A\times \Omega_B \cup \Omega_A\times \pO_B\;.
$$
The space of boundary data then becomes
$$
\L^2(\partial(\Omega_A\times\Omega_B))=\left(\L^2(\pO_A)\otimes \L^2(\Omega_B)\right) \oplus \left(\L^2(\Omega_A)\otimes \L^2(\pO_B)\right)\;.
$$
In this situation one can identify $\L^2(\partial(\Omega))\simeq \mathcal{N}_{\pm}$ and $\L^2(\Omega)\simeq \H$\,, and hence the above equation can be interpreted in the following way
$$\mathcal{N}_{AB\pm}\stackrel{?}{=}\left(\mathcal{N}_{A\pm}\otimes\H_B\right)\oplus  \left(\H_A\otimes\mathcal{N}_{\pm}\right)\;.$$
This would be the generalization of Theorem \ref{thm:compositeAB} to the most general situation where both subsystems are described by symmetric operators, unfortunately a general proof is still missing.

%%%%%%%%%%%%%%%%%%%%%%%%%%%%%%%%%%%%%%%%%%%%%%%%%%%%%%%%%%%%
%%%%%%%%%%%%%%%%%%%%%%%%%%%%%%%%%%%%%%%%%%%%%%%%%%%%%%%%%%%%

\subsection{Separable dynamics and self-adjoint extensions}

We have seen so far that the space of self-adjoint extensions of a composite system is much bigger than the space of self-adjoint extensions of the subsystems. The remarkable fact is that even if the symmetric Hamiltonian operator is of the form in Equation \eqref{eq:sepHamiltonian} there are self-adjoint extensions that lead to non-separable dynamics, i.e. that entangle the subsystems. We are going to use the example above to show this phenomenon. Unfortunately, von Neumann's theorem is not suitable to perform explicit calculations. We are going to use the approach introduced in \cite{asorey05} and further developed in \cite{ibortlledo13}.

For doing so we take advantage of the following isomorphism $$\L^2([0,\infty))\hat{\otimes}\H_B\simeq \L^2([0,\infty);\H_B)\;.$$ In our example this isomorphism becomes $$\H=\L^2(\R^+;\mathbb{C}^2)$$ and we can consider that the elements in $\H$ are pairs 
$$\Phi=	\begin{bmatrix}
			\Phi_1\\\Phi_2
		\end{bmatrix}	
$$ and that the scalar product in $\H$ is given by 
$$\scalar{\Phi}{\Psi}=\int_{\R^+}\scalar{\Phi(x)}{\Psi(x)}_{\H_B}\d x\;.$$ As the manifold is one dimensional in this case and according to \cite{asorey05,ibortlledo13} the Hamiltonian $H$ is self-adjoint if and only if 
$$0=\scalar{\Phi}{H\Psi}-\scalar{H\Phi}{\Psi}=\scalar{\dot{\varphi}}{\psi}_{x=0}-\scalar{\varphi}{\dot{\psi}}_{x=0}\;,$$ where we use small greek letters to denote the restriction to the boundary and dotted small greek letters to denote the restriction of the normal derivative, $\Phi|_{x=0}=\varphi$\,, $-\frac{\d\Phi}{\d x}|_{x=0}=\dot{\varphi}$\,. The maximally isotropic subspaces of the boundary form are in one to one correspondence with the graphs of unitary operators $$U:\L^2(\{0\};\H_B)\to\L^2(\{0\};\H_B)\;.$$ More concretely, given a unitary operator $U\in\mathcal{U}(\L^2(\{0\};\H_B)) \simeq \mathcal{U}(2)$\,, the domain of a self-adjoint extension is characterized by those functions that satisfy the following boundary condition
\begin{equation}\label{eq:asorey1}
	\varphi+i\dot{\varphi}=U(\varphi-i\dot{\varphi})\;.
\end{equation}
Notice that $\varphi,\dot{\varphi}$ take values in $\H_B$\;.\\

We will consider unitary operators of the form $$U=U_A\otimes V\;.$$ As an example of separable dynamics we can take $U_A:\varphi\to e^{i\alpha} \varphi$ and $V=\1$\,. An easy calculation using $\{\rho_a\}_{a=1,2}$\,, the orthonormal base of $\H_B$\,, leads to a splitting of the evolution equation in two parts. Each one with the same boundary condition 
$$
\varphi_a+i\dot{\varphi}_a=e^{i\alpha} (\varphi_a-i\dot{\varphi}_a)\Leftrightarrow 
	\begin{cases}
		\dot{\varphi}_a=\tan\frac{\alpha}{2}\varphi_a & ,\alpha\neq\pi\\
		\varphi_a=0 & ,\alpha= \pi
	\end{cases}\;.
$$

In fact, this is a general result that can be applied to more general situations. A more general theorem that can be found in \cite[Theorem 2]{Ibort12} shows that separable dynamics can be achieved if and only if the boundary conditions are of the form $U=U_A\otimes\1$\,.

%%%%%%%%%%%%%%%%%%%%%%%%%%%%%%%%%%%%%%%%%%%%%%%%%%%%%%%%%%%%
%%%%%%%%%%%%%%%%%%%%%%%%%%%%%%%%%%%%%%%%%%%%%%%%%%%%%%%%%%%%

\subsection{Non-separable dynamics: An example}
Now we consider a unitary operator of the form $U=\1\otimes V$ with 
$$
V=	\begin{pmatrix}
		e^{i\alpha_1} & 0 \\ 0 & e^{i\alpha_2}
	\end{pmatrix}\;.
$$

Using again the decomposition provided by the orthonormal base of $\H_B$ we get in this case 
$$
H=H_A\otimes\1+\1\otimes H_B=
	\begin{pmatrix}
		-\frac{\d^2}{\d x^2}+\lambda_1 & 0 \\ 0 & -\frac{\d^2}{\d x^2}+\lambda_2
	\end{pmatrix}\;.
$$
The boundary condition  $\varphi+i\dot{\varphi}=U(\varphi-i\dot{\varphi})$ reads now
$$
\begin{bmatrix}
	\varphi_1+i\dot{\varphi}_1 \\ \varphi_2+i\dot{\varphi}_2
\end{bmatrix}=
\begin{pmatrix}
	e^{i\alpha_1} & 0 \\ 0 & e^{i\alpha_2}
\end{pmatrix}
\begin{bmatrix}
	\varphi_1-i\dot{\varphi}_1 \\ \varphi_2-i\dot{\varphi}_2
\end{bmatrix}\;,
$$
or in components $\dot{\varphi}_a=\tan\frac{\alpha_a}{2}\varphi_a$, $a=1,2$, $\alpha_a\neq\pi$\;.

The spectral problem for $H$ becomes the following system of differential equations
$$
	\begin{cases}
		-\frac{\d^2}{\d x^2}\Phi_1+\lambda_1\Phi_1=E\Phi_1&\\
		\dot{\varphi}_1=\tan \alpha_1/2 \varphi_1 &,\alpha_1\neq\pi
	\end{cases}\;,
$$
$$
	\begin{cases}
		-\frac{\d^2}{\d x^2}\Phi_2+\lambda_2\Phi_2=E\Phi_2&\\
		\dot{\varphi}_2=\tan \alpha_2/2 \varphi_2 &,\alpha_1\neq\pi
	\end{cases}\;.
$$
Since the manifold is not compact, in general there will be no solutions of the former system, i.e. $H$ will have no point spectrum in general. However, under the assumption that $\lambda_1>\lambda_2>E$ there exist square integrable solutions. Namely, 
$$
\Phi_1(x)=C_1e^{-\sqrt{\lambda_1-E}x}\;,
$$
$$
\Phi_2(x)=C_2e^{-\sqrt{\lambda_2-E}x}\;,
$$
for a fixed value of the energy that satisfies $$E=\lambda_1-\tan^2\frac{\alpha_1}{2}\;,$$ $$E=\lambda_2-\tan^2\frac{\alpha_2}{2}\;.$$ This imposes a compatibility condition for the existence of the eigenvalue:
\begin{equation}\label{eq:impliciteq}
	\sigma:=\lambda_1-\lambda_2=\tan^2\frac{\alpha_1}{2}-\tan^2\frac{\alpha_2}{2}\;.
\end{equation}
This implicit equation gives a family of possible self-adjoint extensions each of which will posses a unique eigenvalue. The implicit curves are plotted in Figure~\ref{fig:compatibility}.

Since for fixed boundary conditions there is only one eigenfunction, the evolution will be stationary provided that the initial state is the eigenfunction. However, we can consider that the boundary conditions are deformed adiabatically. For instance, consider the one parameter family 
$$U(s)=	\begin{pmatrix}
			e^{2 i s} & 0 \\ 0 & e^{2i s'}
		\end{pmatrix}\;,
$$
where $s'$ is uniquely determined by the value of $s$, the fixed value $\sigma$ and the implicit equation \eqref{eq:impliciteq}. Now, if the parameter $s$ is deformed adiabatically, the evolution of the system will be close to the unique eigenfunction of the system
\begin{equation}
	\Phi(t,x)=C_1e^{-\tan{s(t)}x}\otimes\rho_1+C_2e^{-\tan{s'(t)}x}\otimes\rho_2\;.
\end{equation}
Whenever one of the values $s$ or $s'$ vanishes the state becomes separable and it is non-separable in other case.
\begin{figure}[h]
	\center
	\includegraphics[width=9cm]{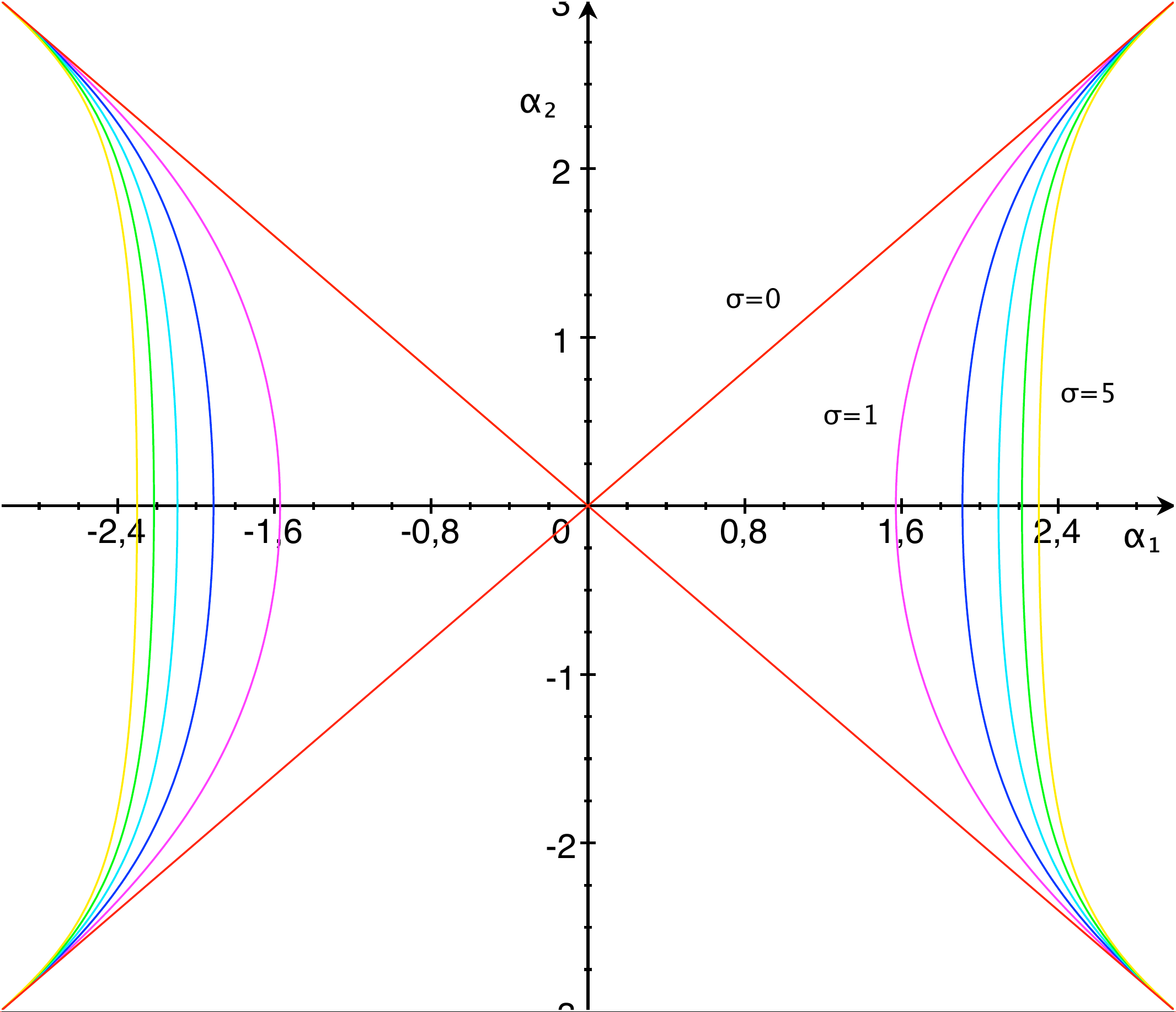}
	\caption{\footnotesize{Implicit curves for the parameters $\alpha_1$, $\alpha_2$ as a function of $\sigma=\lambda_1-\lambda_2$\,.}}\label{fig:compatibility}
\end{figure}
%

%%%%%%%%%%%%%%%%%%%%%%%%%%%%%%%%%%%%%%%%%%%%%%%%%%%%%%%%%%%%%%%%%%%%%%%%%%%%%%%%%%%%%%%%%%%%%%%%%%%%%%%%%%%%%%%%%%%%%%%%%%%%%%%%%%%%%%%%%%%%%%%%%%%%%%%%%%%%%%%%%%%%%%%%%%%%%%%%%%%%%%%%%%%%%%%%%%%%%%%%%%%%%%%%%%%%%%%%%%%%%%%%%%%%%%%%%%%%%%%%

\newpage

\section{Lecture II.  Quadratic forms and self-adjoint extensions}\label{sec:qf}

In the previous lecture we have seen that although von Neumann's theory of self-adjoint extensions is exhaustive, it is not always the most suitable to characterize self-adjoint extensions.  In the particular case of differential operators the traditional way to describe self-adjoint extensions relies in the definition of appropriate boundary conditions. 

The aim of this lecture is to provide insight into this approach by using the Laplace-Beltrami operator as a guiding example.   We will discuss first the analytical difficulties that arise when dealing with this problem and then we will introduce the concept of closable quadratic forms to show how one can use them to avoid some of the difficulties.  

The self-adjoint extensions of the Laplace-Beltrami operator are analysed using this alternative approach.   It is noticeable that the use of quadratic forms in the study of linear operators has an extraordinary long and successful history running from applications to numerical analysis, like min-max methods, to the celebrated Weyl's formula for the asymptotic behaviour of eigenvalues of the Laplace-Beltrami operator, see \cite{LiebLoss} for an introduction and further references.   

In the recent paper \cite{Ibort13}, the ideas discussed in this lecture have been applied successfully to study the spectrum of Schr\"odinger operators in 1D and its stable and accurate numerical computation.

%%%%%%%%%%%%%%%%%%%%%%%%%%%%%%%%%%%%%%%%%%%%%%%%%%%%%%%%%%%%%%%%%%%%%%%%%%%%%%%%%%%%%%%%%%%%%%%%%%%%%%%%%%%%%%%%%%%%%%%%
\subsection{Analytical difficulties} \label{sec:difficulties}

Through the examples in Lecture II (\S \ref{sec:bipartite}) we have seen that one can use the boundary term associated to Green's formula to characterize the self-adjoint extensions of differential operators. In particular we will focus on the Laplace-Beltrami operator but the following considerations can be applied to any formally self-adjoint differential operator. 

Consider a smooth Riemannian manifold $(\Omega,\eta)$ with smooth boundary $\pO$\,. We will only consider situations where $\pO\neq\emptyset$\,. Notice that the boundary of a Riemannian manifold has itself the structure of a Riemannian manifold if one considers as the Riemannian metric $\partial\eta=i^*\eta$\,, the pull-back under the canonical inclusion mapping $i:\pO\to\Omega$ of the metric $\eta$.  The Laplace-Beltrami operator on the Riemannian manifold $(\Omega,\eta)$ is the differential operator
\begin{equation}\label{eq:LB}
	\Delta_\eta\Phi=\frac{1}{\sqrt{|\eta|}}\frac{\partial}{\partial x^i}\sqrt{|\eta|}\eta^{ij}\frac{\partial\Phi}{\partial x^j}\;.
\end{equation}
For the Laplace-Beltrami operator Green's formula reads
$$\scalar{\Phi}{-\Delta\Psi}-\scalar{-\Delta\Phi}{\Psi}=\scalar{\dot{\varphi}}{\psi}_{\pO}-\scalar{\varphi}{\dot{\psi}}_{\pO}\;,$$
where as before we use small greek letters to denote the restriction to the boundary, $\varphi=\Phi|_{\pO}$, and dotted small greek letters to denote the restriction to the boundary of the normal derivative, $\dot{\varphi}=\d\Phi(\nu)|_{\pO}$\;, where $\nu$ is the normal vector field to the boundary pointing outwards. 

As already mentioned, the space of self-adjoint extensions is characterized by the space of maximally isotropic subspaces of the boundary form. It is known, cf. \cite{asorey05,ibortlledo13}, that maximally isotropic subspaces of this boundary form are in one-to-one correspondence with the set of unitary operators $\mathcal{U}(\L^2(\pO))$\;.   This correspondence is established in terms of the boundary equation:
\begin{equation}\label{eq:asorey}
	\varphi-i\dot{\varphi}=U(\varphi+i\dot{\varphi})\;.
\end{equation}
Hence, for each $U$, the equation above defines a self-adjoint domain for the Laplace-Beltrami operator. This equation has to be understood as an equality among vectors in $\L^2(\pO)$. Unfortunately, in dimension higher than one, the boundary data are not generic vectors of $\L^2(\pO)$\,. Recall that the Hilbert space $\L^2(\Omega)$ is defined to be a space of equivalence classes of functions.  Two functions define the same element in $\L^2(\Omega)$ if they differ only in a null measure set. In particular, this means that the restriction to the boundary of generic vectors $\Phi\in\L^2(\Omega)$ is ill defined since the boundary is a null measure set.  In order to overcome this difficulty we need to introduce the concept of Sobolev spaces. These spaces are the natural spaces to define the domains of differential operators.\\

Let $\beta\in\Lambda^1(\Omega)$ be a one-form on the Riemannian manifold $(\Omega,\eta)$\,. We say that $\mathrm{i}_V\beta=\beta(V)$ is a weak derivative of $\Phi\in\L^2(\Omega)$ in the direction $V\in\mathfrak{X}(\Omega)$ if 
$$
\int_\Omega\Phi(x)\mathrm{i}_V\d\Psi(x)\d\mu_\eta(x)=-\int_\Omega\mathrm{i}_V\beta(x)\Psi(x)\d\mu_\eta(x)\,,\quad\forall \Psi\in\C_c^\infty(\Omega)\;,
$$
where $\C^\infty_c(\Omega)$ is the space of smooth functions with compact support in the interior of $\Omega$\,. In such case we denote $\beta=\d\Phi$\,.   Select a locally finite covering $\{ U_\alpha \}$ of $\Omega$, with $U_\alpha$ small enough such that on each open set there is a well defined frame $\{e_i^{(\alpha)}\}\subset\mathfrak{X}(U_\alpha)$, and let $\rho_\alpha$ denote a partition of the unity subordinated to this covering.   The Sobolev space of order 1, $\H^1(\Omega)$, is defined to be 
\begin{align}
	\H^1(\Omega):= & \biggl\{\Phi\in\L^2(\Omega)\Bigr| \exists \beta=\d\Phi\in\L^2(\Omega)\otimes T^*\Omega \text{ such that}  \notag\\
	&\norm{\Phi}^2_1=\int_\Omega |\Phi(x)|^2\d\mu_\eta(x) + \sum_\alpha \rho_\alpha \sum_i\int_{U_\alpha} |\mathrm{i}_{e_i^{(\alpha)}} \d\Phi(x)|^2\d\mu_\eta(x)<\infty \biggr\}\;.\label{eq:H1}
\end{align}
The Sobolev spaces of higher order, $\H^k(\Omega)$, with norm $\norm{\cdot}_k$, are defined accordingly using higher order derivatives. An equivalent definition that is useful in manifolds without boundary is the following. The Sobolev space of order $k\in\mathbb{R^+}$ is defined to be 
$$
\H^k(\Omega):=\biggl\{ \Phi\in\L^2(\Omega) \Bigr| \norm{\Phi}^2_k=\int_\Omega\bar{\Phi}(x)(1-\Delta)^{k/2}\Phi(x)\d\mu_\eta(x)<\infty \biggr\}\;.
$$
Notice that there is no ambiguity in the use of the Laplace operator in this definition since in manifolds without boundary it possesses only one self-adjoint extension. This definition also works for non-integer values of $k$. Sobolev spaces of fractional order at the boundary are indeed the natural spaces for the boundary data. The following important theorem, cf. \cite{adams03,lions72}, establishes this correspondence.

\begin{theorem}[Lions' Trace Theorem]\label{thm:lions}
	The restriction map 
$$
	\map[b]{\C^\infty(\Omega)}{\C^\infty(\pO)}{\Phi}{\Phi|_{\pO}}
$$ can be extended to a continuous, surjective map 
$$
	b\,:\, \H^{k}(\Omega)\to\H^{k-1/2}(\pO)\,\quad k>1/2\;.
$$
\end{theorem}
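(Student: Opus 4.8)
The plan is to reduce the statement to the model half-space $\R^n_+=\{x=(x',x_n):x_n>0\}$ with boundary $\{x_n=0\}\simeq\R^{n-1}$, establish the estimates there by Fourier analysis in the tangential variables, and then globalize to $(\Omega,\eta)$ by a partition of unity subordinated to a collar of $\pO$. I would treat continuity and surjectivity separately, proving the a priori bound on the dense subspace $\C^\infty(\Omega)$ (functions smooth up to the boundary) and then invoking the bounded-linear-extension principle.

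First, for continuity in the model case, I would work with the partial Fourier transform $\tilde\Phi(\xi',x_n)$ of $\Phi$ in the tangential variables $x'$, so that the trace $g=b\Phi$ satisfies $\tilde g(\xi')=\tilde\Phi(\xi',0)$. Writing $\lambda=\lambda(\xi')=(1+|\xi'|^2)^{1/2}$, the starting point is $|\tilde\Phi(\xi',0)|^2=-\int_0^\infty\partial_{x_n}|\tilde\Phi(\xi',x_n)|^2\,\d x_n$, which combined with $2ab\le\lambda a^2+\lambda^{-1}b^2$ gives, after multiplication by $\lambda^{2k-1}$,
$$
\lambda^{2k-1}|\tilde g(\xi')|^2\le\int_0^\infty\left(\lambda^{2k}|\tilde\Phi|^2+\lambda^{2k-2}|\partial_{x_n}\tilde\Phi|^2\right)\d x_n\;.
$$
Integrating over $\xi'\in\R^{n-1}$ and noting that the right-hand side is dominated by $\norm{\Phi}_k^2$ (since both $\lambda^{2k}$ and the symbol of $\lambda^{2k-2}\partial_{x_n}^2$ are controlled by $(1+|\xi|^2)^k$, the symbol of the full $\H^k$-norm) yields $\norm{g}_{k-1/2}\le C\norm{\Phi}_k$. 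Because $\C^\infty(\Omega)$ is dense in $\H^k(\Omega)$, this estimate extends $b$ uniquely to a bounded operator $\H^k(\Omega)\to\H^{k-1/2}(\pO)$.

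For surjectivity I would exhibit a bounded right inverse. Given $g\in\H^{k-1/2}(\pO)$, in the model case I set, in tangential Fourier variables, $\tilde\Phi(\xi',x_n)=\tilde g(\xi')\,\chi\!\left(\lambda(\xi')\,x_n\right)$ for a fixed cutoff $\chi\in\C_c^\infty([0,\infty))$ with $\chi(0)=1$; then $b\Phi=g$ by construction, and running the same symbol computation in reverse gives $\norm{\Phi}_k\le C\norm{g}_{k-1/2}$. Transplanting this extension through the collar coordinates produces a bounded operator $E:\H^{k-1/2}(\pO)\to\H^k(\Omega)$ with $b\circ E=\mathrm{id}$, which proves surjectivity.

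Finally, to pass from the half-space to $(\Omega,\eta)$ I would use a finite atlas consisting of boundary charts that flatten $\pO$ to $\{x_n=0\}$ (boundary normal coordinates on a collar) together with interior charts on which the trace vanishes, and a subordinate partition of unity. The crucial point, and the step I expect to be the main obstacle, is that these chart changes are diffeomorphisms of the closed half-space and that both the bulk norm $\norm{\cdot}_k$ and the fractional boundary norm $\norm{\cdot}_{k-1/2}$ are invariant up to equivalence under them; verifying this for non-integer orders requires the Fourier/interpolation characterization of the fractional spaces rather than the weak-derivative definition in \eqref{eq:H1}, and one must ensure the local estimates patch consistently across overlaps. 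Granting this invariance, summing the local continuity bounds and gluing the local extensions with the partition of unity completes the proof.
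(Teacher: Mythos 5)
First, a point of reference: the paper itself offers no proof of this theorem --- it is quoted from the literature (Adams--Fournier and Lions--Magenes), so your attempt can only be measured against the standard proof, which is indeed the strategy you outline (flatten the boundary, tangential Fourier transform, a priori estimate on smooth functions, density, explicit extension operator for surjectivity). The architecture is right, and your surjectivity construction $\tilde\Phi(\xi',x_n)=\tilde g(\xi')\chi(\lambda(\xi')x_n)$ is the standard one.

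There is, however, a genuine gap in the continuity estimate, and it is located exactly at the hypothesis $k>1/2$ that your argument never uses --- which should have been a warning sign. Your displayed inequality requires dominating the symbol $\lambda^{2k-2}\xi_n^2=(1+|\xi'|^2)^{k-1}\xi_n^2$ by $(1+|\xi'|^2+\xi_n^2)^k$. This holds when $k\ge 1$, but fails for $1/2<k<1$: take $\xi'=0$ and $\xi_n\to\infty$, where the left side grows like $\xi_n^2$ while the right side grows only like $\xi_n^{2k}$. So the integration-by-parts argument, as written, proves the theorem only for $k\ge 1$ (integer $k$ cleanly; fractional $k\ge1$ already needs interpolation to identify the mixed-norm expression with the $\H^k$ norm). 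The sharp statement for all $k>1/2$ is obtained differently: extend $\Phi$ to all of $\R^n$, write $\hat g(\xi')=\frac{1}{2\pi}\int_{\R}\hat\Phi(\xi',\xi_n)\,\d\xi_n$, and apply Cauchy--Schwarz with the weight $(1+|\xi|^2)^{k/2}$, so that
$$
|\hat g(\xi')|^2\;\le\;C\left(\int_{\R}\frac{\d\xi_n}{(1+|\xi'|^2+\xi_n^2)^{k}}\right)\left(\int_{\R}(1+|\xi|^2)^{k}|\hat\Phi(\xi)|^2\,\d\xi_n\right)\;,
$$
where the first factor equals $C_k(1+|\xi'|^2)^{1/2-k}$ and is finite precisely when $k>1/2$; this is where the threshold in the statement comes from. (Alternatively one proves integer orders by your method and interpolates.) Your final paragraph correctly flags the remaining technical burden --- invariance of the fractional norms under boundary-flattening diffeomorphisms and the density of $\C^\infty(\Omega)$ up to the boundary --- but the half-space estimate itself must be repaired before that globalization can be carried out on the full range $k>1/2$ claimed in the theorem.
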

Hence, even though elements of $\L^2(\Omega)$ do not possess well defined restrictions to the boundary, the functions in the Sobolev spaces do have them.

In the particular case of the Laplace-Beltrami operator, which is a second order differential operator, one natural choice for the domain would be the Sobolev space of order 2, $\H^2(\Omega)$. Lion's trace theorem, Theorem~\ref{thm:lions}, establishes that the boundary data necessarily belong to:
$$
(\varphi,\dot{\varphi})\in\H^{3/2}(\pO)\times\H^{1/2}(\pO)\;.
$$

Now we need to look closer to the boundary equation \eqref{eq:asorey}. We have that the boundary data are elements of some Sobolev spaces $\H^k(\partial\Omega)$ for some $k$.  However, the maximally isotropic subspaces are characterized in terms of unitary operators $\mathcal{U}(\L^2(\pO))$ and thus the right hand side of Eq. \eqref{eq:asorey} can be out of any $\H^k(\partial \Omega)$\;.   Hence, the boundary equation is not completely meaningful per se. The regularity of the boundary data plays a role even for simple boundary conditions like Robin boundary conditions. For instance, what happens if the Robin parameter is not continuous along the boundary? Such a situation appears in nature. Robin boundary conditions can model interphases between superconductors and insulators, \cite{asorey13}. A situation like the one shown in Figure \ref{fig:interphase} would be described by a discontinuous Robin parameter.

\begin{figure}
\centering
\includegraphics[width=8cm]{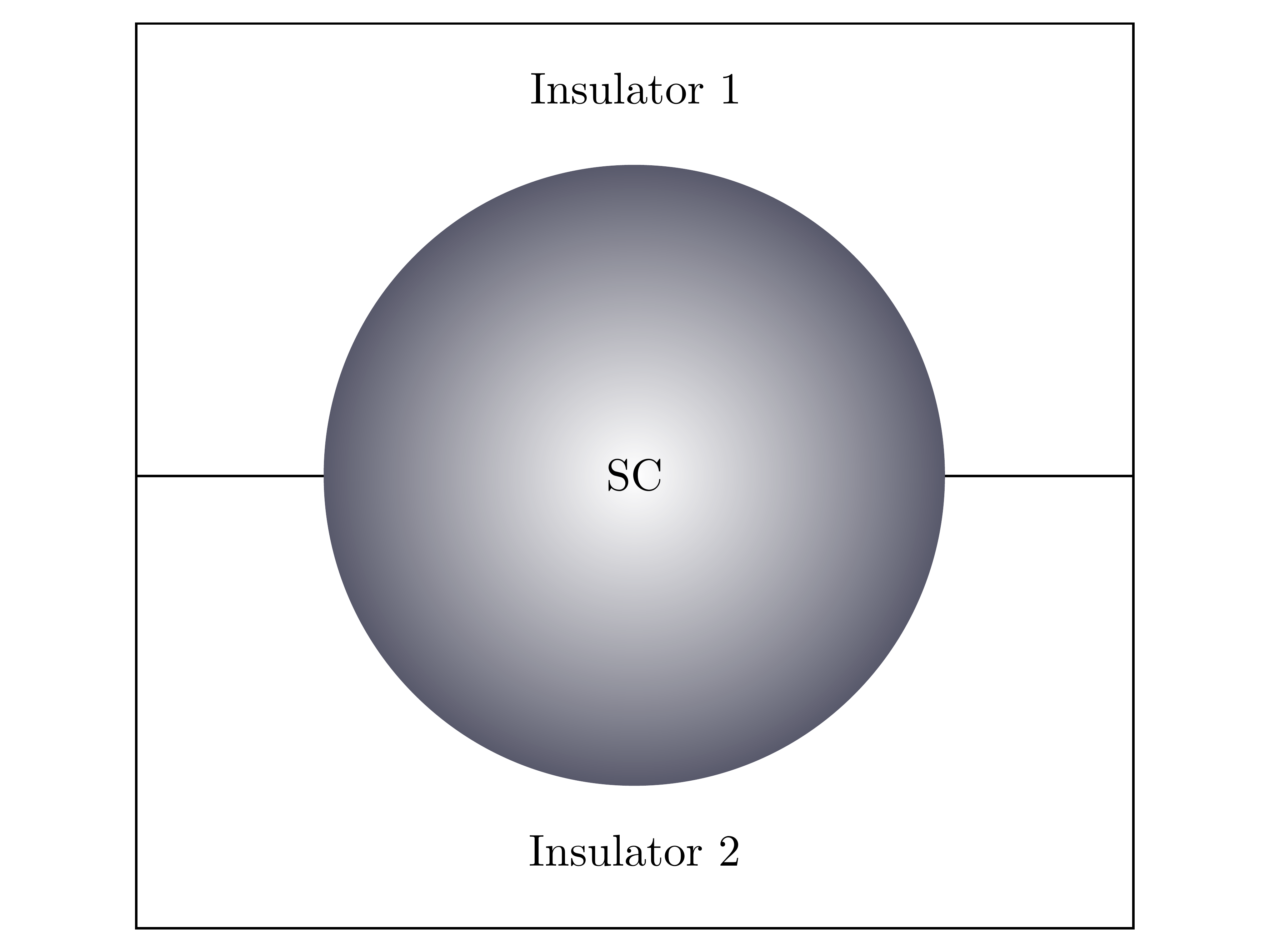}
\caption{\footnotesize{Superconductor surrounded by two different insulators. The boundary conditions between the superconductor and Insulator 1 are described by a constant parameter $\lambda_1$\,. Respectively for Insulator 2.}}\label{fig:interphase}
\end{figure}

There are several ways to handle this difficulty and each one gives rise to different theories or approaches to the problem of self-adjoint extensions. The theory developed by G. Grubb, \cite{grubb68}, expresses the boundary conditions in terms of a family of pseudo-differential operators. Hence the space of self-adjoint extensions is no longer parametrized as a unitary operator acting at the boundary. On the other hand the theory of boundary triples, \cite{bruning08}, keeps the structure of the boundary equation but expresses it in some abstract spaces. In this way the space of self-adjoint extensions can still be parametrized in terms of unitary operators but the relation with the boundary data becomes blurred. The approach that we shall pursue will rely on imposing appropriate conditions on the unitary operators $U\in\mathcal{U}(\L^2(\Omega))$\,. This allows us to keep both, the description in terms of unitary operators and the direct relationship with the space of boundary data. It will fail though to describe the set of all self-adjoint extensions. However, the class of self-adjoint extensions that we will describe in this way is wide enough to include all the well known boundary conditions.\\

Another complication that arises when one tries to describe self-adjoint extensions in terms of boundary conditions is that they might not be enough to characterize a self-adjoint extension. In particular, this can happen if the boundary is not smooth. We show this through the following example.

\begin{figure}[h]
\center
\includegraphics[width=8cm]{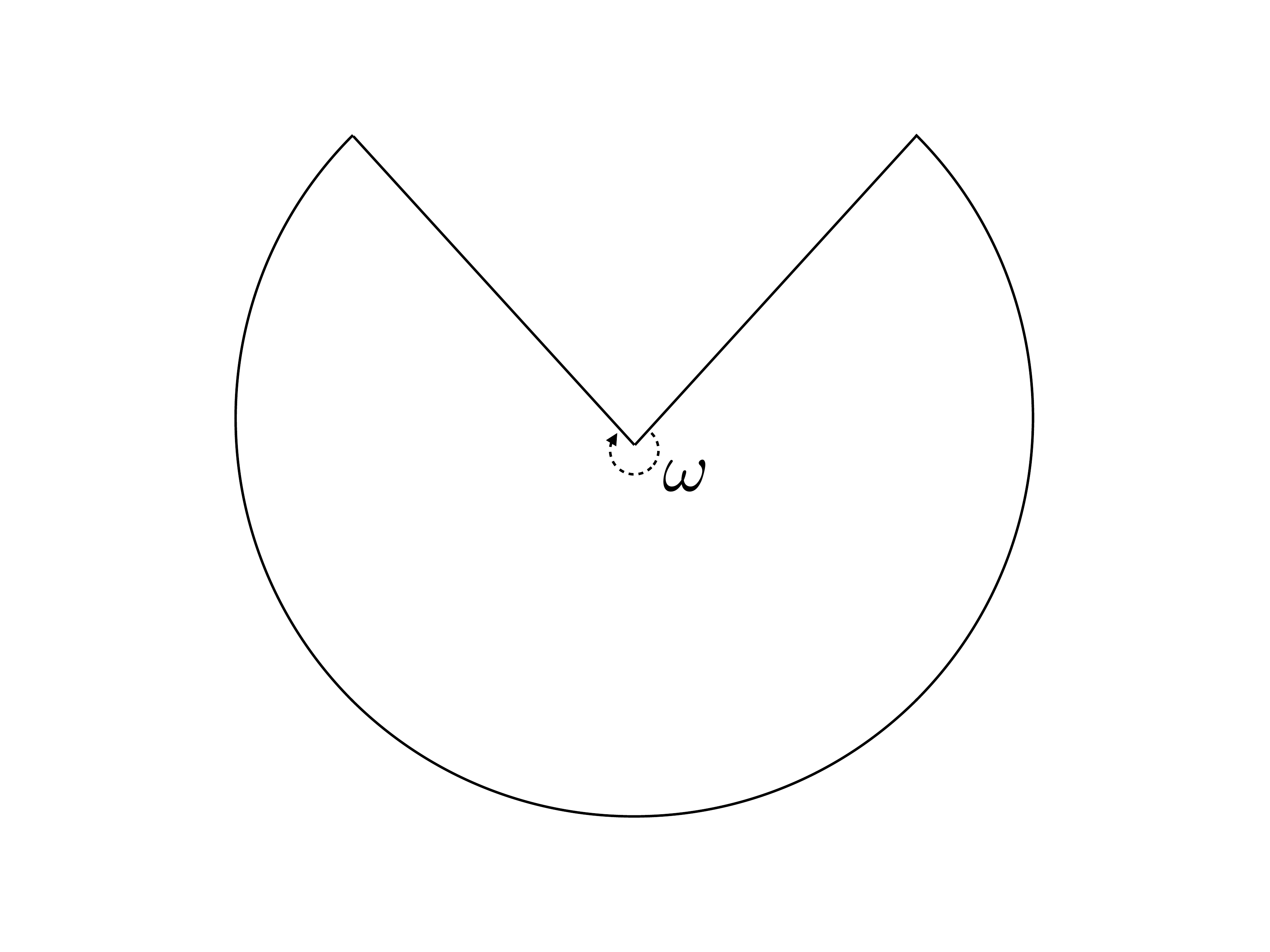}
\caption{\footnotesize{Disk of radius $1$ with a corner of angle $\omega$ removed.}}\label{fig:pizza}
\end{figure}

Consider the two dimensional Riemannian manifold obtained by removing a triangular slice from the unit disk such that the vertex of the triangle is at the centre of the disk, see Figure \ref{fig:pizza}. Now consider the Laplace-Beltrami operator defined on this manifold  with Dirichlet boundary conditions, i.e., $\varphi=\Phi|_{\pO}\equiv 0$\,. Then one can compute the dimension of the deficiency spaces in this case and obtain $$n_\pm=\dim\mathcal{N}_{\pm}=1\;.$$ This shows that the Laplace-Beltrami operator is not a self-adjoint operator in this situation. Another way to see this is to consider the following function in polar coordinates $$\Phi(r,\theta)=r^\beta\sin(\theta/\beta)\;.$$ This function satisfies Dirichlet boundary conditions if  $\beta=\frac{\omega}{\pi}$, where $\omega$ is the angle of the triangular slice removed form the disk. One can check that $\Phi\not\in\H^2(\Omega)$ and hence it is not an element of the domain of the Laplace-Beltrami operator $\D(\Delta)$. However it is easy to verify that $\Delta \Phi \equiv0$ and hence $\Phi\in\D(\Delta^\dagger)$.\\

The latter example shows that the domain of the adjoint operator does strongly depend on the regularity properties of the boundary manifold. In general it is very hard to obtain this domain explicitly. Unfortunately all the approaches mentioned so far to describe self-adjoint extensions depend on the knowledge of the domain of the adjoint operator.

In general, to use boundary conditions to find maximally isotropic subspaces of the boundary form only guarantees that one finds symmetric extensions of the corresponding operators. In order to ensure that those boundary conditions indeed select self-adjoint domains one needs to do further analysis. 

In the next section we shall introduce one of the main tools to describe self-adjoint extensions, namely Kato's Representation Theorem. We will use it to present an alternative way to characterize self-adjoint extensions of differential operators in terms of boundary conditions that overcomes many of the difficulties presented so far. 

%%%%%%%%%%%%%%%%%%%%%%%%%%%%%%%%%%%%%%%%%%%%%%%%%%%%%%%%%%%%%%%%%%%%%%%%%%%%%%%%%%%%%%%%%%%%%%%%%%%%%%%%%%%%%%%%%%%%%%%%
\subsection{Kato's Theorem and Friedrichs' Extensions}

There is an alternative approach to von Neumann's Theorem that allows for the characterization of self-adjoint extensions of differential operators. This approach is based on the use of quadratic forms and allows to characterize uniquely self-adjoint extensions in terms of boundary conditions. We need some definitions first.
\begin{enumerate}
	\item Consider a dense subset $\D\subset\H$. A quadratic form on $\D$ is the evaluation on the diagonal of a hermitean, sesquilinear form, i.e. $Q:\D\times\D\to\mathbb{C}$ such that 
		$$
			Q\pair{\Phi}{\Psi}=\overline{Q\pair{\Psi}{\Phi}}\;.
		$$
		We will use the same symbol to denote the quadratic form and the sesquilinear form 
		$$
			\map[Q]{\D}{\R}{\Phi}{Q\pair{\Phi}{\Phi}}\;.
		$$
		%Notice that $Q(\Phi)\in\mathbb{R}$\,.
	\item We say that a quadratic form is semibounded from below  if there exists a constant $a>0$ such that 
		$$
			Q(\Phi)>-a\norm{\Phi}^2\,,\quad\forall\Phi\in\D\;.
		$$
		Equivalently, one can define semibounded from above. $Q$ is semibounded from above if there exists $a>0$ such that 
		$$
			Q(\Phi)<a\norm{\Phi}^2\,,\quad\forall\Phi\in\D\;.
		$$
	\item A semibounded quadratic form $Q$ is closed if its domain is closed with respect to the graph norm defined by
		$$
			\normm{\Phi}^2_Q=(1+a)\norm{\Phi}^2  \pm Q(\Phi)\;.
		$$
		The different signs correspond to the quadratic form being semibounded either from below or from above. Notice that the factor $(1+a)$ guarantees that the expression above is a norm.
		Even if the quadratic form is not closed one can consider the closure of its domain with respect to this norm, i.e., $\overline{\D}^{\smash{\normm{\cdot}_Q}}$\,. Unfortunately the quadratic form can not always be continuously extended to this domain. When the quadratic form can be continuously extended we say that the quadratic form is closable and we denote the closed extension of the quadratic form in the domain $\overline{\D}^{\smash{\normm{\cdot}_Q}}$ by $\overline{Q}$\,. Closability is strictly related with the continuity properties of the quadratic form on its domain. In this way it can tell us when can one interchange the limits 
		$$
			\lim_{n\to\infty}Q(\Psi_n)=Q(\lim_{n\to\infty}\Psi_n)\;.
		$$
\end{enumerate}

We are ready to enunciate the most important result of this section. For the proof we refer to \cite{adams03,kato95,reed75}.

\begin{theorem}[Kato's representation theorem]\label{thm:kato}
	Let $Q$ be a closed, semibounded quadratic form with domain $\D$. Then it exists a unique, self-adjoint, semibounded operator $T$ with domain $\D(T)\subset\D$ such that 
	$$
		Q\pair{\Phi}{\Psi}=\scalar{\Phi}{T\Psi}\,\quad\forall\Phi\in\D\,,\forall\Psi\in\D(T)\;.
	$$
\end{theorem}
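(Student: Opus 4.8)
My plan is to deduce the statement from the Riesz representation theorem applied in the Hilbert space that the graph norm $\normm{\cdot}_Q$ builds on $\D$. I may assume without loss of generality that $Q$ is semibounded from below, $Q(\Phi)\geq -a\norm{\Phi}^2$; the case bounded from above is recovered by passing to $-Q$. The first step is to regard
$$B\pair{\Phi}{\Psi}=Q\pair{\Phi}{\Psi}+(1+a)\scalar{\Phi}{\Psi}$$
as a sesquilinear form on $\D$, whose associated norm is exactly $\normm{\cdot}_Q$. Semiboundedness makes $B$ positive definite, so it is a genuine inner product, and --- this is where the hypothesis that $Q$ is \emph{closed} enters decisively --- the completeness of $(\D,\normm{\cdot}_Q)$ is precisely the statement that this is a Hilbert space, which I denote $\H_Q$. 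The inclusion $\H_Q\hookrightarrow\H$ is norm decreasing because $\normm{\Phi}_Q\geq\norm{\Phi}$.

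Next I would build a resolvent-type operator. For each $\chi\in\H$ the functional $\Phi\mapsto\scalar{\Phi}{\chi}$ is bounded on $\H_Q$, by $\norm{\chi}\,\normm{\Phi}_Q$, so the Riesz theorem inside $\H_Q$ furnishes a unique $G\chi\in\H_Q$ with $B\pair{\Phi}{G\chi}=\scalar{\Phi}{\chi}$ for all $\Phi\in\H_Q$. Feeding this identity back on itself shows that $G\colon\H\to\H$ is bounded with $\norm{G}\leq1$, self-adjoint and positive (taking $\Phi=G\chi$ gives $\scalar{G\chi}{\chi}=\normm{G\chi}^2_Q\geq0$, and Hermiticity of $B$ gives $\scalar{G\xi}{\chi}=\scalar{\xi}{G\chi}$), and injective (if $G\chi=0$ then $\scalar{\Phi}{\chi}=0$ for all $\Phi$ in the dense subspace $\D$, so $\chi=0$).

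The operator $T$ is then recovered by inversion. Since $G$ is bounded, self-adjoint, injective and positive with spectrum contained in $(0,1]$, its inverse on $\operatorname{ran}(G)$ is a self-adjoint operator with spectrum in $[1,\infty)$; I set $\D(T):=\operatorname{ran}(G)\subset\H_Q=\D$ and $T:=G^{-1}-(1+a)\1$, so that $T$ is self-adjoint and semibounded from below by $-a$. Writing a generic $\Psi\in\D(T)$ as $\Psi=G\chi$ and substituting, the defining identity for $G$ collapses to $\scalar{\Phi}{T\Psi}=B\pair{\Phi}{\Psi}-(1+a)\scalar{\Phi}{\Psi}=Q\pair{\Phi}{\Psi}$ for every $\Phi\in\D$, which is the claimed representation. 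For uniqueness, if $T'$ is self-adjoint with $\D(T')\subset\D$ and also represents $Q$, then for $\Phi\in\D(T)$ and $\Psi\in\D(T')$ the Hermitian symmetry of $Q$ gives $\scalar{T\Phi}{\Psi}=Q\pair{\Phi}{\Psi}=\scalar{\Phi}{T'\Psi}$; this forces $\Psi\in\D(T^\dagger)=\D(T)$ with $T\Psi=T'\Psi$, so $T'\subseteq T$, and two self-adjoint operators in this relation must coincide.

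I expect the main obstacle to be establishing that $T$ is genuinely \emph{self-adjoint} rather than merely symmetric: this is exactly the point that forces one to use the closedness of $Q$, since without the completeness of $\H_Q$ the Riesz step fails and $G$ need not be everywhere defined, bounded and self-adjoint with dense range. Once $G$ has these properties the self-adjointness of $T=G^{-1}-(1+a)\1$ is automatic, and the representation identity, the semibound, and uniqueness are comparatively formal verifications.
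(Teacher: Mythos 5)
Your proof is correct. Note that the paper itself does not prove this theorem: it only states it and refers to \cite{adams03,kato95,reed75} for the proof, so there is no internal argument to compare against. Your construction is essentially the standard one found in those references (and in \cite{davies95}): complete $\D$ under the graph norm to get the Hilbert space $\H_Q$, use the Riesz representation theorem in $\H_Q$ to produce the bounded, positive, self-adjoint, injective operator $G$, and recover the representing operator as $T=G^{-1}-(1+a)\1$. The closedness of $Q$ is used exactly where it must be (completeness of $\H_Q$, without which the Riesz step collapses), the representation identity and the lower bound $T\geq -a$ follow by direct substitution, and your uniqueness argument --- showing $T'\subseteq T$ via the adjoint and invoking maximality of self-adjoint operators --- is the standard and complete one.

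One minor imprecision worth correcting: you assert that the spectrum of $G$ is contained in $(0,1]$. Injectivity only excludes $0$ as an \emph{eigenvalue}; the point $0$ may still belong to the continuous spectrum of $G$, and indeed it does precisely when $T$ is unbounded, which is the typical case of interest. This does not damage the argument, because what you actually use is the fact that the inverse of an injective, bounded, self-adjoint operator is self-adjoint on its range (which is automatically dense, since $\overline{\operatorname{ran}(G)}=(\ker G^\dagger)^\perp=\H$); that fact is correct as stated.
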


As an application we will obtain the domain of the Neumann extension of the Laplace-Beltrami operator.

\begin{example}\label{ex:quadneuman}
Let $\Omega$ be a Riemannian manifold and consider the following quadratic form defined in $\L^2(\Omega)\,.$
$$
	Q(\Phi,\Psi)=\sum_i\scalar{\frac{\partial\Phi}{\partial x_i}}{\frac{\partial\Psi}{\partial x_i}}\,,\quad \Phi,\Psi\in\D=\H^1(\Omega)\;.
$$
Recall that the scalar product in $\H^1(\Omega)$, see Equation \eqref{eq:H1}, satisfies that 
$$
	\norm{\Phi}_1=\norm{\Phi}+Q(\Phi)=\normm{\Phi}_Q\,,
$$
and hence it is immediate that the quadratic form above is closed. According to Kato's theorem there exists a self-adjoint operator $T$ that represents the quadratic form. This self-adjoint operator turns out to be the the Neumann extension of the Laplace-Beltrami operator. Notice that so far we have not imposed any boundary condition, so where is the condition $\dot{\varphi}=0$ coming from? In order to obtain this condition we need the following characterization of the domain of the operator $T$\,. The proof can be found at \cite{davies95}.

\begin{lemma}\label{lem:characterizationdavies}
Let $Q$ be a closed, quadratic form with domain $\D$ and let $T$ be the representing operator. An element $\Psi\in\H$ is in the domain of the operator $T$, i.e. $\Psi\in\D(T)$\,, if and only if $\Psi\in\D$ and there exists $\chi\in\H$ such that for all $\Phi\in\D$
$$
	Q(\Phi,\Psi)=\scalar{\Phi}{\chi}\;.
$$
In such a case one can define $\chi:=T\Psi$\,.
\end{lemma}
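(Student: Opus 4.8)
The plan is to prove the two implications separately, taking the statement of Kato's theorem (Theorem~\ref{thm:kato}) as a black box and exploiting the Hilbert space structure that closedness induces on the form domain. Throughout I assume, as is implicit in the hypotheses, that $Q$ is semibounded from below by $-a$; the semibounded-from-above case is identical after the obvious change of sign. The forward implication is immediate: if $\Psi\in\D(T)$, then $\Psi\in\D$ because $\D(T)\subset\D$, and Kato's theorem gives $Q(\Phi,\Psi)=\scalar{\Phi}{T\Psi}$ for every $\Phi\in\D$, so one simply takes $\chi=T\Psi$.

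For the converse I would first record that closedness of $Q$ makes $(\D,\scalar{\cdot}{\cdot}_Q)$ a Hilbert space, where $\scalar{\Phi}{\Psi}_Q=(1+a)\scalar{\Phi}{\Psi}+Q(\Phi,\Psi)$ is the inner product whose norm is $\normm{\cdot}_Q$. Next I would use elementary spectral theory to observe that the self-adjoint operator $T$ furnished by Kato's theorem satisfies $T\geq -a$, since $\scalar{\Phi}{T\Phi}=Q(\Phi)>-a\norm{\Phi}^2$ for $\Phi\in\D(T)$. Consequently $T+(1+a)\1\geq\1$ is a self-adjoint operator bounded below by a positive constant, hence $0\notin\operatorname{spec}(T+(1+a)\1)$ and it is a bijection from $\D(T)$ onto $\H$ with bounded inverse.

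Now suppose $\Psi\in\D$ and $\chi\in\H$ satisfy $Q(\Phi,\Psi)=\scalar{\Phi}{\chi}$ for all $\Phi\in\D$. Set $\eta=(1+a)\Psi+\chi\in\H$, so that the hypothesis rewrites as $\scalar{\Phi}{\Psi}_Q=\scalar{\Phi}{\eta}$ for all $\Phi\in\D$. By the surjectivity just established there is a $\Psi'\in\D(T)$ with $(T+(1+a)\1)\Psi'=\eta$, and for every $\Phi\in\D$ Kato's identity yields $\scalar{\Phi}{\eta}=\scalar{\Phi}{T\Psi'}+(1+a)\scalar{\Phi}{\Psi'}=Q(\Phi,\Psi')+(1+a)\scalar{\Phi}{\Psi'}=\scalar{\Phi}{\Psi'}_Q$. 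Hence $\scalar{\Phi}{\Psi-\Psi'}_Q=0$ for all $\Phi\in\D$; choosing $\Phi=\Psi-\Psi'\in\D$ forces $\normm{\Psi-\Psi'}_Q=0$, so $\Psi=\Psi'\in\D(T)$ and $(T+(1+a)\1)\Psi=\eta$, that is $T\Psi=\chi$.

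The only genuine subtlety, and the step I would be most careful about, is the surjectivity of $T+(1+a)\1$: this is precisely where self-adjointness (rather than mere symmetry) of the representing operator is essential, since it is self-adjointness that converts, via the spectral theorem, a strictly positive lower bound into a bounded everywhere-defined inverse. An alternative that sidesteps the spectral theorem is to unwind the construction of $T$ inside the proof of Kato's theorem, identifying $T+(1+a)\1$ with the inverse of the Riesz isomorphism $B:\H\to\D$ defined by $\scalar{\Phi}{\psi}=\scalar{\Phi}{B\psi}_Q$; then $\Psi=B\eta\in\operatorname{ran}B=\D(T)$ directly. I would present the first route as the main argument and mention the second as a remark, since it makes transparent why the characterization holds.
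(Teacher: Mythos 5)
Your proof is correct. Note first that there is nothing in the paper to compare it against: the paper does not prove this lemma at all, but simply refers the reader to Davies' book \cite{davies95}. Your argument is a complete, self-contained proof and is essentially the standard one. The forward implication is, as you say, immediate from Kato's theorem together with $\D(T)\subset\D$. In the converse, every step checks out: the form inner product $\scalar{\Phi}{\Psi}_Q=(1+a)\scalar{\Phi}{\Psi}+Q\pair{\Phi}{\Psi}$ is positive definite (indeed $\normm{\Phi}_Q^2\geq\norm{\Phi}^2$), the numerical-range bound $Q(\Phi)\geq -a\norm{\Phi}^2$ on $\D(T)$ places $\operatorname{spec}(T)$ in $[-a,\infty)$, so $T+(1+a)\1$ is a bijection of $\D(T)$ onto $\H$, and then the identity $\scalar{\Phi}{\Psi-\Psi'}_Q=0$ for all $\Phi\in\D$, tested on $\Phi=\Psi-\Psi'$, forces $\Psi=\Psi'\in\D(T)$ and $T\Psi=\chi$. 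You are also right to flag the surjectivity of $T+(1+a)\1$ as the crux: this is exactly where self-adjointness (not mere symmetry) of the representing operator enters, and it is the reason the characterization can fail for symmetric restrictions. One small observation: completeness of $(\D,\normm{\cdot}_Q)$ is never actually used in your converse argument; closedness is only needed as the hypothesis that makes Kato's theorem applicable, while the argument itself only exploits positive definiteness of $\scalar{\cdot}{\cdot}_Q$. Your closing remark about the Riesz map $B$ is also apt, since that identification is precisely how the representing operator is constructed in the usual proofs of Kato's theorem (including Davies' own), but as you note it requires unwinding that construction rather than using the theorem as a black box.
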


We want to apply this characterization to our case. If we use Green's formula we get
$$Q(\Phi,\Psi)=\scalar{\Phi}{-\Delta\Psi}+\scalar{\varphi}{\dot{\psi}}\stackrel{!}{=}\scalar{\Phi}{\chi}\;$$
and hence in order for the last equality to hold we need to impose two extra conditions. First, for the first summand to be defined we need $\Psi\in\H^2(\Omega)$, since the Laplace operator is a second order differential operator. Second, the boundary term has to vanish. Since it has to vanish for all $\Phi\in\H^1$ and there is no restriction on the possible boundary values that $\Phi$ can take, this forces $\dot{\psi}=0$. Hence, if $\Psi\in\H^2(\Omega)$ and $\dot{\psi}=0$ we have that $\Psi\in\D(T)$ and moreover $T\Psi=\chi=-\Delta\Psi$\,, i.e. $T$ is the Neumann extension of the Laplace-Beltrami operator.
\end{example}

Our main aim is to introduce tools that allow us to characterize self-adjoint extensions in an unambiguous manner. Kato's representation theorem above allows us to characterize uniquely self-adjoint operators once we have found an appropriate closed or closable quadratic form. It is worth to notice that in order to obtain such operator there is no need to know the domain of the adjoint operator. The following result, again without proof, allows to characterize uniquely a self-adjoint extension of a given symmetric operator.

\begin{theorem}[Friedrichs extension theorem]\label{thm:friedrichs}
	Let $T_0$ be a symmetric, semibounded operator with domain $\D(T_0)$\,. Then the quadratic form 
	$$
		Q_{T_0}(\Phi,\Psi):=\scalar{\Phi}{T_0\Psi}\,,\quad\Phi,\Psi\in\D(T_0)
	$$
	 is closable.
\end{theorem}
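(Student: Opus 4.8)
The plan is to verify the standard sequential criterion for closability: a semibounded form $Q$ with domain $\D$ is closable precisely when every sequence $\{\Psi_n\}\subset\D$ that converges to $0$ in $\H$ and is Cauchy with respect to the form norm $\normm{\cdot}_Q$ necessarily satisfies $\normm{\Psi_n}_Q\to 0$ (equivalently $Q_{T_0}(\Psi_n)\to 0$). Intuitively, closability fails exactly when the form assigns a nonzero ``length'' to a sequence collapsing to $0$ in $\H$, and the content of the theorem is that semiboundedness forbids this.

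First I would normalize the problem. Since $T_0$ is semibounded from below, say $\scalar{\Psi}{T_0\Psi}\ge -a\norm{\Psi}^2$, I replace $T_0$ by the symmetric operator $\tilde T_0 = T_0 + (1+a)\1$ on the same domain $\D(T_0)$. The associated form $\tilde Q\pair{\Phi}{\Psi}=\scalar{\Phi}{\tilde T_0\Psi}$ then satisfies $\tilde Q(\Psi)\ge\norm{\Psi}^2\ge 0$, and the form norm $\normm{\cdot}_Q$ is, up to equivalence, just $\tilde Q(\cdot)^{1/2}$. Because the shift changes $Q_{T_0}$ only by a bounded multiple of the inner product, $Q_{T_0}$ is closable iff $\tilde Q$ is; so it suffices to prove closability for the nonnegative form $\tilde Q$.

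Next come the two tools. Since $\tilde Q$ is a nonnegative Hermitean sesquilinear form, it obeys the Cauchy--Schwarz inequality $|\tilde Q\pair{\Phi}{\Psi}|\le \tilde Q(\Phi)^{1/2}\tilde Q(\Psi)^{1/2}$. And because every $\Psi_n$ lies in $\D(T_0)$, symmetry lets me transpose the operator onto either argument, $\tilde Q\pair{\Psi_n}{\Psi_m}=\scalar{\tilde T_0\Psi_n}{\Psi_m}$. Now take a sequence as in the criterion: $\Psi_n\to 0$ in $\H$ and $\tilde Q(\Psi_n-\Psi_m)\to 0$. Being form-Cauchy, $\{\Psi_n\}$ is form-bounded, say $\tilde Q(\Psi_n)\le C$. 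Fix $\epsilon>0$ and $N$ with $\tilde Q(\Psi_n-\Psi_m)<\epsilon$ for $n,m\ge N$. For fixed $n\ge N$ I split
$$\tilde Q(\Psi_n)=\tilde Q\pair{\Psi_n}{\Psi_n-\Psi_m}+\tilde Q\pair{\Psi_n}{\Psi_m}\;,$$
bound the first summand by Cauchy--Schwarz through $|\tilde Q\pair{\Psi_n}{\Psi_n-\Psi_m}|\le C^{1/2}\epsilon^{1/2}$, and let $m\to\infty$ in the second, where $\tilde Q\pair{\Psi_n}{\Psi_m}=\scalar{\tilde T_0\Psi_n}{\Psi_m}\to 0$ since $\tilde T_0\Psi_n$ is a fixed vector and $\Psi_m\to 0$ in $\H$. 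This yields $\tilde Q(\Psi_n)\le C^{1/2}\epsilon^{1/2}$ for every $n\ge N$, and as $\epsilon$ was arbitrary, $\tilde Q(\Psi_n)\to 0$, which is exactly closability.

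The step I expect to be the crux is the second summand: the whole argument hinges on writing $\tilde Q\pair{\Psi_n}{\Psi_m}=\scalar{\tilde T_0\Psi_n}{\Psi_m}$ with the operator applied at the \emph{fixed} index $n$, so that the factor tending to $0$ in $\H$ (namely $\Psi_m$) kills the term without my ever having to control $\tilde T_0\Psi_m$, which need not remain bounded. This is precisely where both hypotheses are used --- symmetry, to transpose the operator, and membership in $\D(T_0)$, so that $\tilde T_0\Psi_n$ makes sense --- while semiboundedness enters through the Cauchy--Schwarz inequality, which requires $\tilde Q\ge 0$. The same observation explains why the statement can fail for symmetric operators that are not semibounded.
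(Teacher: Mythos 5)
Your proof is correct, but there is no proof in the paper to compare it against: the text states Friedrichs' theorem explicitly ``again without proof,'' deferring to its references (Kato, Reed--Simon, Adams--Fournier). What you have written is essentially the classical argument found in those sources (e.g.\ Reed--Simon, Theorem X.23): shift to a nonnegative form, use Cauchy--Schwarz for the shifted form, and kill the cross term by transposing the operator onto the \emph{fixed} index, $\tilde Q\pair{\Psi_n}{\Psi_m}=\scalar{\tilde T_0\Psi_n}{\Psi_m}\to 0$ as $\Psi_m\to 0$ in $\H$. You correctly identify this transposition as the crux --- it is exactly where symmetry and membership in the operator (not merely form) domain are used, and where the argument breaks for forms that are not associated to an operator. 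Two minor points would make the write-up airtight relative to the paper. First, the paper's definition of ``semibounded'' also allows bounds from above; that case follows by applying your argument to $-T_0$, which deserves one explicit sentence. Second, you start from the sequential criterion for closability (form-Cauchy plus $\Psi_n\to 0$ in $\H$ implies $\normm{\Psi_n}_Q\to 0$), whereas the paper defines closability as continuous extendability of $Q$ to $\overline{\D}^{\normm{\cdot}_Q}$; the two are equivalent --- well-definedness of the extended value at a limit point, tested against the sequence $\Psi_n$ and the constant zero sequence, is precisely your criterion --- but since the paper's definition is the one in force, a one-line remark establishing this equivalence would make your proof self-contained with respect to the text.
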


The above result is applied as follows. Given a symmetric, semibounded operator $T_0$ one can construct the quadratic form $\scalar{\Phi}{T_0\Psi}$\,, which is closable. Now one can consider its closure and apply Kato's representation theorem to characterize a unique self-adjoint operator $T$ that extends $T_0$. 
In the spirit of the approach using boundary conditions it means that it is enough to characterize boundary conditions that guarantee that the operator $T_0$ is symmetric. Its Friedrichs extension will then be a self-adjoint extension of it with the postulated boundary conditions.

%%%%%%%%%%%%%%%%%%%%%%%%%%%%%%%%%%%%%%%%%%%%%%%%%%%%%%%%%%%%%%%%%%%%%%%%%%%%%%%%%%%%%%%%%%%%%%%%%%%%%%%%%%%%%%%%%%%%%%%%
\subsection{The Laplace-Beltrami operator revisited}\label{sec:LBrevisited}

As we have seen, regardless of the analytical difficulties mentioned in Section \ref{sec:difficulties}, one can use the boundary equation \eqref{eq:asorey}, i.e.
$$
	\varphi-i\dot{\varphi}=U(\varphi+i\dot{\varphi})\;,
$$
 to define easily symmetric extensions of the Laplace-Beltrami operator \eqref{eq:LB}. In general for a generic $U\in\mathcal{U}(\L^2(\pO))$ the boundary conditions above will not describe a self-adjoint domain for the Laplace-Beltrami operator, however they always define symmetric domains for it. Nevertheless we can use Friedrichs' theorem, Theorem \ref{thm:friedrichs}, to characterize uniquely a self-adjoint extension. As we have seen we just need to ensure the semiboundedness of the corresponding symmetric operator.\\

We consider as symmetric operator the Laplace-Beltrami operator \eqref{eq:LB} defined on the domain
$$
	\D_U=\left\{  \Phi\in\H^2(\Omega)\mid \varphi-i\dot{\varphi}=U(\varphi+i\dot{\varphi})  \right\}\;.
$$
Since $U\in\mathcal{U}(\L^2(\pO))$ does not verify any special condition it may very well be that the boundary equation has only the trivial solution 
$$
	(\varphi,\dot{\varphi})=(0,0)\in\H^{3/2}(\pO)\times\H^{1/2}(\pO)\;.
$$
However, this condition does define a symmetric extension of the Laplace-Beltrami operator. Following the lines of Example~\ref{ex:quadneuman} at the end of the previous section it is easy to show that the Friedrichs extension associated to this symmetric operator is precisely the Dirichlet extension of the Laplace-Beltrami operator. We leave this as an exercise.

In order to apply Friedrichs' extension theorem to generic unitary operators at the boundary we need to ensure that the associated quadratic forms are semibounded. Using Green's formula for the Laplace operator once we get the following expression,
\begin{equation}\label{eq:greenLB}
	\scalar{\Phi}{-\Delta\Phi}=\scalar{\d\Phi}{\d\Phi}-\scalar{\varphi}{\dot{\varphi}}\;.
\end{equation}
The first summand at the right hand side is automatically positive. Hence, in order to analyse the semiboundedness of the Laplace-Beltrami operator it is enough to analyse the semiboundedness of the boundary term, i.e.
$$
	-\scalar{\varphi}{\dot{\varphi}}\stackrel{?}{\geq} a\norm{\Phi}^2
$$
for some $a\in \R$\,. 

In general this term does not satisfy the bound above for any constant $a\in\mathbb{R}$\,. However, under certain conditions on the spectrum of the unitary operator $U$ it is possible to show, cf. \cite{ibortlledo13}, that it is indeed semibounded. Showing this in detail would take us away from the scope of this introductory approach, so we will just sketch the proof. Any further details can be found at \cite{ibortlledo13}.

The main idea is to use the boundary equation in order to express $\dot{\varphi}$ as a function of $\varphi$. Solving for $\dot{\varphi}$ at the boundary equation \eqref{eq:asorey} we get formally that 
\begin{equation}\label{eq:dotphiproptophi}
	\dot{\varphi}=i\frac{U-\mathbb{I}}{U+\mathbb{I}}\varphi:=A_U\varphi\;.
\end{equation}
The expression $$i\frac{U-\mathbb{I}}{U+\mathbb{I}}=A_U\;,$$ which is known as the Cayley transform of the unitary operator $U$, defines a linear, self-adjoint operator $A_U$. However, as long as $-1$ is in the spectrum of the unitary operator $U$, which we denote $-1\in\sigma(U)$, this operator is not a bounded operator. We shall impose conditions on the unitary operator $U$ that guarantee that the relation \eqref{eq:dotphiproptophi} is given by a bounded operator.

\begin{definition}\label{def:gap}
We will say the the unitary operator $U\in\mathcal{U}(\L^2(\pO))$ has gap if there exists  a $\delta>0$ such that 
$$\sigma(U)\backslash\{-1\} \subset \{e^{i\theta} \mid -\pi+\delta \leq \theta \leq \pi-\delta\}\;.$$ 
\end{definition}
In other words, there must be a positive distance between $-1\in\mathbb{C}$ and the closest element of the spectrum different of $-1$, cf. Figure \ref{fig:spectrum}.\\

\begin{figure}[h]
\center
\includegraphics[width=8cm]{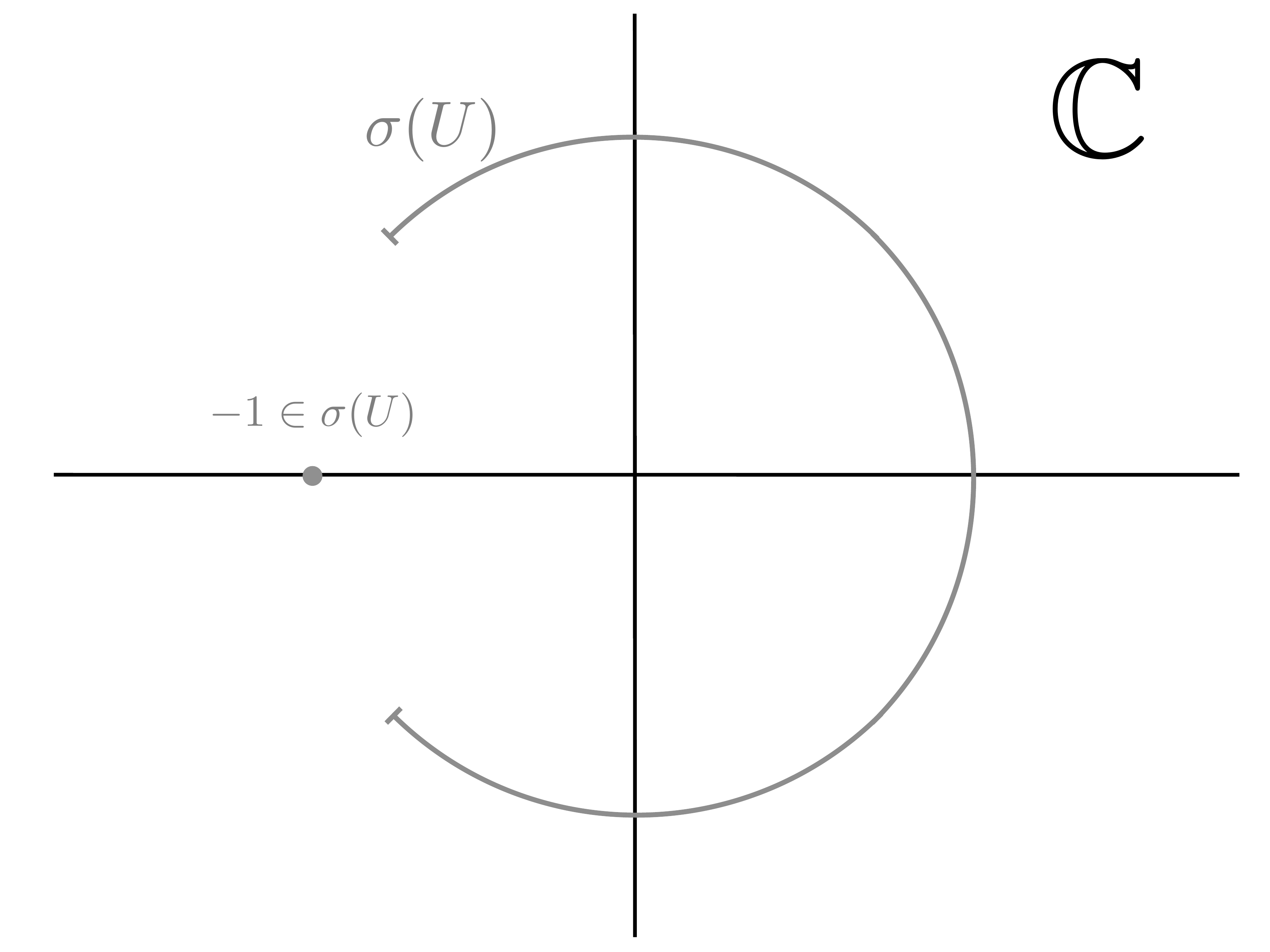}
\caption{\footnotesize{Spectrum of a unitary operator $U\in\mathcal{U}(\L^2(\pO))$ with gap.}}\label{fig:spectrum}
\end{figure}

We define the subspace $W\subset\L^2(\pO)$ to be the proper subspace associated to the eigenvalue $-1$. Notice that this subspace can be reduced to the case $W=\{0\}$ if $-1\notin\sigma(U)$\,. The Hilbert space at the boundary can be split as $$\L^2(\pO)=W \oplus W^{\bot}\;.$$ According to this decomposition we can rewrite the boundary equation as 
\begin{subequations}\label{eq:bcqf}
\begin{equation}
	(\mathbb{I}+U)\dot{\varphi}_W=-i(\mathbb{I}-U)\varphi_W\;\Rightarrow \varphi_W=0\;,
\end{equation}
\begin{equation}
	(\mathbb{I}+U)\dot{\varphi}_{W^\bot}=-i(\mathbb{I}-U)\varphi_{W^\bot}\;\Leftrightarrow \dot{\varphi}_{W^\bot}=A_{U_{W^\bot}}\varphi_{W^\bot}\;.
\end{equation}
\end{subequations}
Notice that $A_{U_{W^\bot}}$ is a bounded operator. Hence, the boundary term satisfies
$$
|\scalar{\varphi}{\dot{\varphi}}|=|\scalar{\varphi_{W}}{\dot{\varphi}_{W}}+\scalar{\varphi_{W^\bot}}{\dot{\varphi}_{W^\bot}}|=|\scalar{\varphi_{W^\bot}}{A_{U_{W^\bot}}\varphi_{W^\bot}}|\leq \norm{A_{U_{W^\bot}}} \norm{\varphi}^2\;.
$$
With this condition, Green's formula for the Laplace-Beltrami operator, Eq.~\eqref{eq:greenLB} can be bounded by
$$
	\scalar{\Phi}{-\Delta\Phi}\geq\scalar{\d\Phi}{\d\Phi}-K\norm{\varphi}^2\;.
$$
The latter is the quadratic form associated to Robin boundary conditions with constant parameter $K=\norm{A_{U_{W^\bot}}}$\,. If we are able to show that the Robin extension of the Laplace-Beltrami operator is bounded from below for any value of $K>0$ we will show that any extension determined by a unitary with gap will also be semibounded from below. The proof of the latter can be split in several steps which we sketch below.

\begin{enumerate}
	\item The one dimensional problem $-\frac{\d^2}{\d r^2}\phi(r)=\Lambda \phi(r)$ with Robin boundary conditions is semibounded from below. We define the operator $R:=-\frac{\d^2}{\d r^2}$\,.
	\item One can always choose a coordinate neighbourhood of the boundary such that our problem becomes 
	$$-\Delta\simeq R\otimes\mathbb{I}-\mathbb{I}\otimes\Delta_{\pO}\;.$$
	The Laplace-Beltrami operator at the boundary, $-\Delta_{\pO}$, is always positive defined and therefore the lower bound of $-\Delta$ with Robin boundary conditions is that of $R$\,.
	\item By introducing the coordinate neighbourhood at the boundary one is introducing an auxiliary boundary. One needs to introduce boundary conditions there that do not spoil the bounds above. Since it is enough to bound the associated quadratic forms in order to bound the corresponding operators it is enough to consider Neumann boundary conditions. Remember Example~\ref{ex:quadneuman} where we showed that, at the level of quadratic forms, selecting Neumann boundary conditions amounts to select no boundary conditions at all.
\end{enumerate}

We have showed that unitaries with gap lead to lower semibounded, symmetric extensions of the Laplace-Beltrami operator. Hence, together with Friedrichs' Extension Theorem, each boundary condition of this form leads to a unique, self-adjoint extension of the Laplace-Beltrami operator.

%%%%%%%%%%%%%%%%%%%%%%%%%%%%%%%%%%%%%%%%%%%%%%%%%%%%%%%%%%%%%%%%%%%%%%%%%%%%%%%%%%%%%%%%%%%%%%%%%%%%%%%%%%%%%%%%%%%%%%%%%%%%%%%%%%%%%%%%%%%%%%%%%%%%%%%%%%%%%%%%%%%%%%%%%%%%%%%%%%%%%%%%%%%%%%%%%%%%%%%%%%%%%%%%%%%%%%%%%%%%%%%%%%%%%%%%%%%%%%%%

\newpage

\section{Lecture III. Self-adjoint extensions of Dirac operators: First steps}\label{sec:Dirac}

To define the Laplace-Beltrami operator it is enough to have the structure of a smooth Riemannian manifold. However, in order to define Dirac operators one needs additional structures. In what follows we will briefly summarize the main notions that we shall need for the rest of this article.

\begin{alphenumerate}
	\item \label{alph:1} $(\Omega,\eta)$ is a smooth Riemannian manifold with smooth boundary $\pO$. As before we will only consider situations where $\pO\neq \emptyset$\,.
	\item The Clifford algebra $\mathrm{Cl}(T_x\Omega)$ generated by the tangent space $T_x\Omega$, $x\in\Omega$, is the associative algebra generated by $u\in T_x\Omega$ and such that 
	$$u\cdot v + v \cdot u = -2\eta_x(u,v)\,,\quad u,v\in T_x\Omega\;.$$
	The Clifford bundle over $\Omega$, denoted $\mathrm{Cl}(\Omega)$\,, is defined to be 
	$$\mathrm{Cl}(\Omega)=\bigcup_{x\in\Omega}\mathrm{Cl}(T_x\Omega)\;.$$
	\item A $\mathrm{Cl}(\Omega)-$bundle over $\Omega$ is a Riemannian vector bundle $$\pi:S\to\Omega$$ such that for all $x\in\Omega$ the fibre $S_x=\pi^{-1}(x)$ is a $\mathrm{Cl}(T_x\Omega)-$module, i.e. it exists a representation of the Clifford algebra $\gamma:\mathrm{Cl}(T_x\Omega) \to \mathrm{Hom}(S_x,S_x)$ such that for $u\in T_x\Omega$\,, $\xi \in S_x$
$$
\map[\gamma]{\mathrm{Cl}(T_x\Omega)\times S_x}{S_x}{(u,\xi)}{\gamma(u)\xi}\;.
$$
	For simplicity of the notation we will omit the explicit use of the representation of the Clifford algebra and write directly
$$
\map[u]{S_x}{S_x}{\xi}{u\cdot\xi}\;.
$$
	The left action of the Clifford algebra on the fibres of the Clifford bundle is called Clifford multiplication.
	\item Clifford multiplication by unit vectors acts unitarily with respect to the Hermitean structure $\pair{\cdot}{\cdot}_x$ of the Riemannian vector bundle $\pi:S\to\Omega$\,, i.e. for $u\in T_x\Omega$, such that $\eta_x(u,u)=1$
	$$
		\pair{u\cdot\xi}{u\cdot\zeta}_x=\pair{\xi}{\zeta}_x\quad \forall \xi,\zeta\in S_x\;.
	$$
	\item  A Hermitean connection on the vector bundle $\pi:S\to\Omega$, is a mapping $$\nabla: T_x\Omega \times \Gamma^\infty(S) \to \Gamma^\infty(S)$$ such that
	$$\pair{\nabla\xi}{\zeta}+\pair{\xi}{\nabla\zeta}=\d\pair{\xi}{\zeta}\;,$$ 
	or equivalently for $X\in \Gamma(T\Omega)=\mathfrak{X}(\Omega)$
	$$\pair{\nabla_{X}\xi}{\zeta}+\pair{\xi}{\nabla_X\zeta}=X\pair{\xi}{\zeta}\;.$$
	A Hermitean connection on the $\mathrm{Cl}(\Omega)-$bundle is a Hermitean connection compatible with the Levi-Civita connection $\nabla_{\eta}$ of the underlying Riemannian manifold $(\Omega,\eta)$, i.e. for $u\in \mathrm{Cl}(\Omega)$, $\xi\in \Gamma^\infty(S)$
	$$\nabla(u\cdot\xi)=\nabla_\eta u\cdot \xi + u\cdot\nabla\xi\;.$$
	\item Let $\nabla$ be an Hermitean connection on the $\mathrm{Cl}(\Omega)-$bundle and let $\{e_i\}\subset \Gamma^\infty(S)$ be an orthonormal frame. The Dirac operator $D:\Gamma^\infty(S)\to\Gamma^\infty(S)$ is the first order differential operator defined by 
	$$D\xi=\sum_i e_i\cdot\nabla_{e_i}\xi\;.$$
\end{alphenumerate}

We will illustrate the above construction with two simple examples.

\begin{example}
As an example where $\dim \Omega= 1$ we can take $\Omega=S^1$. The Clifford algebra of a one dimensional real vector space is isomorphic to the complex numbers. In this case one can take the Clifford bundle to be the complexified tangent space $TS^1\otimes\mathbb{C}$ and Clifford multiplication can be represented as complex multiplication. As orthonormal frame $\{e_1\}$ one can take the complex unit $i$ and therefore the Dirac operator becomes in this case $$D=i\frac{\partial}{\partial \theta}\;,$$
where $\theta$ is the coordinate in $S^1$\,.
\end{example}

\begin{example}\label{ex:diracstructureexample2}
We will consider now a two dimensional example. Consider that $\Sigma$ is a compact, orientable Riemannian surface with smooth boundary such that it has $r$ connected components $\partial\Sigma=\bigcup_{\alpha= 1}^r S_\alpha$, $S_\alpha\simeq S^1$\;. In Figure 2 there is an example with $r=2$\,. 

\begin{figure}[h]
\center
\includegraphics[width=8cm]{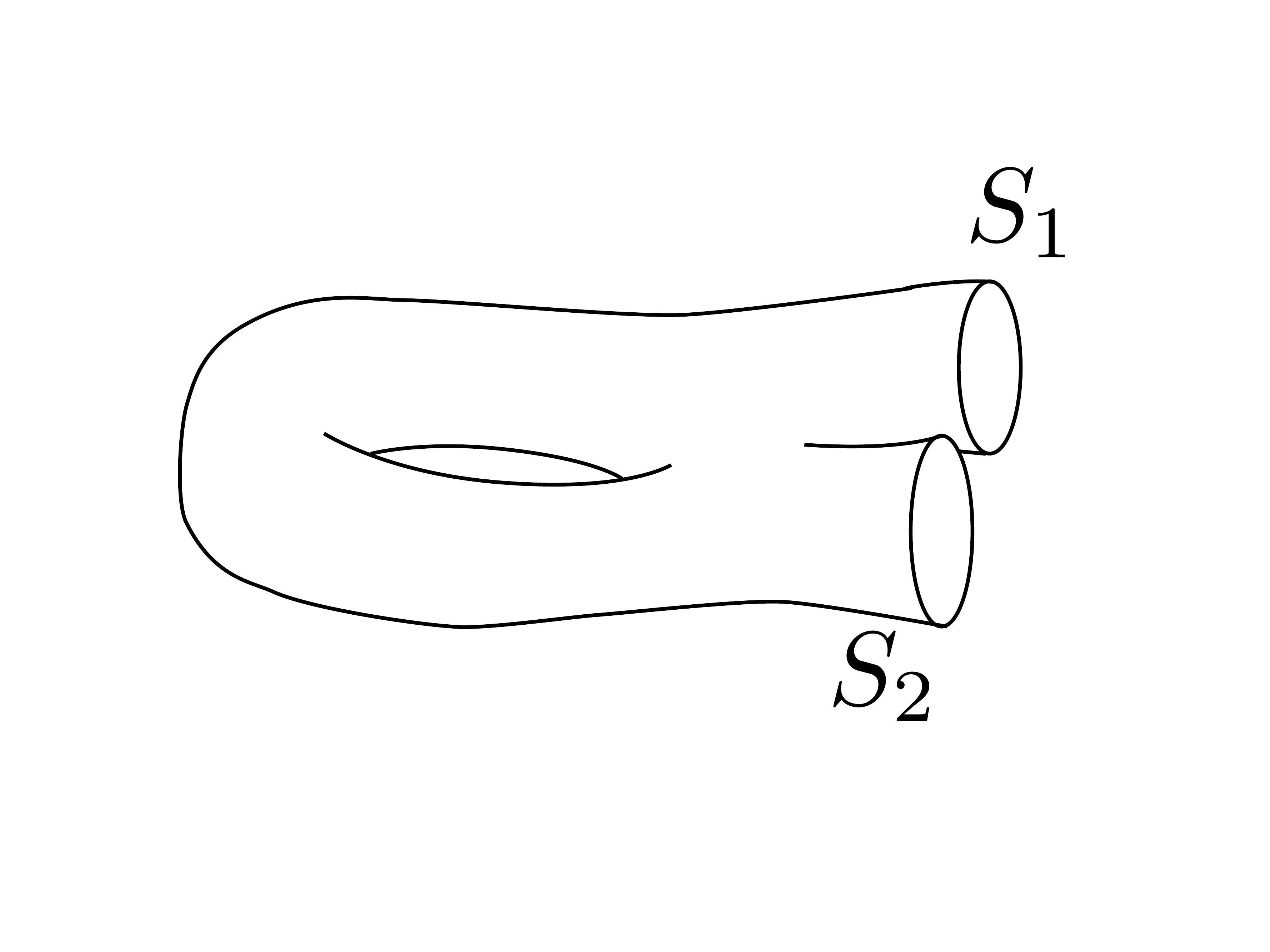}
\caption{\footnotesize{Riemannian surface with 2 connected components at the boundary}}
\end{figure}
For each $p\in\Sigma$ there is an open neighbourhood $U$ and a holomorphic transformation $\varphi$ such that 
$$
\map[\varphi]{U}{\mathrlap{\mathbb{C}}}{q}{\mathrlap{z=x+iy}}
$$
We can take now the orthonormal base of $T_p\Sigma$ generated by the coordinate vectors $e_1=\frac{\partial}{\partial x}$, $e_2=\frac{\partial}{\partial y}$ with $\scalar{\frac{\partial}{\partial x}}{\frac{\partial}{\partial y}}_p=0$\,. The Clifford algebra $\mathrm{Cl}(T_p\Sigma)$ is generated by the linear span of the elements $e_1, e_2, e_1\cdot e_2$\;, that satisfy the following relations: 
$$
e_1\cdot e_2+e_2\cdot e_1=0\,,\quad e_1^2=-1\,,\quad e_2^2=-1\;.
$$
This algebra can be represented in the algebra of 2 by 2 complex matrices. Take 
$$\sigma_1=\begin{bmatrix} 0 & i \\ i & 0  \end{bmatrix}\, ,\qquad
\sigma_2=\begin{bmatrix} 0 & -1 \\ 1 & 0  \end{bmatrix}\;,$$
which is the spin representation of the group $\mathcal{SO}(2)$ with spin group 
$$
	\map{\mathcal{U}(1)}{\mathcal{SO}(2)}{z}{z^2}\;.
$$

	Then the complexified tangent bundle $T\Sigma\otimes \mathbb{C}=T\Sigma^{\mathbb{C}}$ is a $\mathrm{Cl}(\Sigma)-$bundle and for $ \mathfrak{X}(\Sigma)\ni u=x e_1 + y e_2$ Clifford multiplication becomes 
\begin{align*}\notag
u\cdot\xi=(x e_1 + y e_2)\cdot \xi &= (x\sigma_1 + y \sigma_2)\xi\\
	&=\left(x \begin{bmatrix} 0 & i \\ i & 0  \end{bmatrix} + y \begin{bmatrix} 0 & -1 \\ 1 & 0  \end{bmatrix}\right)
	\begin{pmatrix} \xi_2 \\ \xi_1 \end{pmatrix}\\
	&=\begin{pmatrix} (ix - y)\xi_2 \\ (ix+y)\xi_1 \end{pmatrix}=
	\begin{pmatrix} iz \xi_2 \\ i\bar{z}\xi_1 \end{pmatrix}\;.
\end{align*}
Then, in geodesic coordinates around $p$ the Levi-Civita connection is $\nabla_{\eta}=\frac{\partial}{\partial x}\d x+\frac{\partial}{\partial y}\d y$ and taking it as the Hermitean connection on $T\Sigma^{\mathbb{C}}$ the Dirac operator becomes
\begin{align*}
	D&	=e_1\cdot \nabla_{e_1}+e_2\cdot \nabla_{e_2}\\ 
		&=\begin{bmatrix} 0 & i \\ i & 0  \end{bmatrix}\frac{\partial}{\partial x} + \begin{bmatrix} 0 & -1 \\ 1 & 0  \end{bmatrix}\frac{\partial}{\partial y}\\
		&=\begin{bmatrix} 0 & i\frac{\partial}{\partial x} - \frac{\partial}{\partial y} \\ i\frac{\partial}{\partial x}+\frac{\partial}{\partial y} & 0 \end{bmatrix}=\begin{bmatrix} 0 & i\partial \\ i\bar{\partial} & 0 \end{bmatrix}\;.
\end{align*}
As it should be we have that 
$$
D^2=\begin{bmatrix} -\partial\bar{\partial} & 0 \\ 0 & -\bar{\partial}\partial \end{bmatrix}=\nabla^\dagger\nabla + \frac{\kappa}{4}\;,$$
which is known as the Lichnerowicz identity. The first term at the right hand side is the connection Laplacian and $\kappa$ is the scalar curvature. An easy consequence of this equality is that for manifolds without boundary and with positive scalar curvature there are no Harmonic spinors. It is enough to realize that 
$$\norm{D\xi}^2=\int_{\Sigma}\norm{\nabla\xi}^2\d\mu_{\eta}+\frac{1}{4}\int_{\Sigma}\norm{\xi}^2\kappa\d\mu_{\eta}\;.$$
\end{example}

%%%%%%%%%%%%%%%%%%%%%%%%%%%%%%%%%%%%%%%%%%%%%%%%%%%%%%%%%%%%
%%%%%%%%%%%%%%%%%%%%%%%%%%%%%%%%%%%%%%%%%%%%%%%%%%%%%%%%%%%%

\subsection{Self-adjoint extensions of Dirac operators}

Once we have defined the Dirac operator as a first order differential operator on a Riemannian vector bundle we are interested in its properties as a linear operator acting on the Hilbert space of square integrable sections of the vector bundle $\pi:S\to\Omega$\,. The space of square integrable sections $\L^2(S)$ is defined to be the completion of the space of smooth sections $\Gamma^\infty(S)$ with respect to the norm induced by the scalar product 
$$
\scalar{\xi}{\zeta}=\int_{\Omega}\pair{\xi(x)}{\zeta(x)}_x\d\mu_{\eta}(x)\;,
$$
where $\d\mu_{\eta}$ is Riemannian volume form induced by the Riemannian metric $\eta$\,.

As the Laplace-Beltrami operator, and any differential operator, the Dirac operator is an unbounded operator and thus it needs a domain of definition. As we already mentioned, the natural subspaces for the definition of differential operators are the Sobolev spaces, which where introduced in the previous section. Since we are dealing now with vector bundles the definitions of Sobolev spaces change slightly and we introduce them in an independent manner.

Let $\beta\in\Lambda^{1}(\Omega;S)$ be a one-form on the Riemannian manifold $(\Omega,\eta)$ with values on the vector bundle $S$. We say that $\beta$ is a weak covariant derivative of $\xi\in\L^2(S)$ if 
$$
\int_\Omega\pair{\xi}{\nabla_V\zeta}_x\d\mu_\eta(x)=-\int_\Omega\pair{\mathrm{i}_V\beta}{\zeta}\d\mu_\eta(x)\,,\quad\forall \zeta\in\Gamma_c^\infty(S),\;\forall V\in\mathfrak{X}(\Omega)\;,
$$
where $\Gamma^\infty_c(S)$ is the space of smooth sections with compact support in the interior of $\Omega$\,. In such case we denote $\beta=\nabla\xi$\,.

The Sobolev space of sections of order $1$, $\H^1(S)$, is defined to be 
\begin{align*}
	\H^1(S):=\biggl\{\xi\in\L^2(S)\mid \exists \beta&=\nabla\xi\in\L^2(S)\otimes T^*\Omega \text{ and }  \\
	&\norm{\xi}^2_1=\int_\Omega \pair{\xi}{\xi}_x\d\mu_\eta(x) + \int_\Omega \pair{\nabla\xi}{\nabla\xi}_x\d\mu_\eta(x)<\infty \biggr\}\;.
\end{align*}

As before one can give an equivalent definition that is useful for manifolds without boundary. The Sobolev space of sections of order $k\in\mathbb{R}^+$ is defined to be 
$$
\H^k(S):=\biggl\{ \xi\in\L^2(S) \Bigr| \norm{\xi}^2_k=\int_\Omega\pair{\xi(x)}{(1+\nabla^\dagger\nabla)^{k/2}\xi(x)}_x\d\mu_\eta(x)<\infty \biggr\}\;,
$$
where $\nabla^\dagger:\Gamma^\infty(T^*\Omega\otimes S)\to \Gamma^\infty(S)$ is the formal adjoint of the covariant derivative. Notice that this definition also holds for $k\notin\mathbb{N}$.\\

A natural domain where Dirac operators are closed, symmetric but not self-adjoint is 
$$
	\D_0=\H^1_0(S):=\overline{\Gamma_c^\infty(S)}^{\norm{\cdot}_1}\;.
$$
The subindex in $\H^1_0(S)$ is there to stress that the closure of the compactly supported sections of $S$ with respect to the Sobolev norm of order $1$ is not a dense subset of $\H^1(S)$, even though it is a dense subset of $\L^2(S)$\,.

The Dirac operator $D$ in the domain $\D_0$, denoted as the couple $(D,\D_0)$, has as adjoint operator the Dirac operator defined on the full Sobolev space of order $1$, $\H^1(S)$, namely $(D^\dagger,\D_0^\dagger)=(D,\H^1(S))$\,. Any self-adjoint extension of $(D,\D_0)$ must be a restriction of $(D,\H^1(S))$ to a domain $\D_{s.a.}$ that satisfies 
$$\D_0\subset\D_{s.a.}=\D_{s.a}^\dagger\subset \D_0^\dagger\;. $$

In order to determine such domains we are going to follow the successful approach set out in \cite{asorey05}. First we are going to need the Green's formula for the Dirac operator. Without loss of generality one can always pick an orthonormal frame $\{e_i\}\subset\Gamma^\infty(S)$ that it self-parallel, i.e. $$\pair{e_i}{e_j}_x=\delta_{ij}\;,\quad \nabla_{e_i}e_j=0\;.$$

Consider two fixed sections $\xi,\zeta\in\L^2(S)$ of the vector bundle $S$. For each point $p\in\Omega$ one can define a continuous linear functional on $T_p\Omega$ using Clifford multiplication as follows 
$$
\map[L_{p,\xi,\zeta}]{T_p\Omega}{\mathbb{R}}{V}{\pair{\xi}{V\cdot\zeta}_p}\;.
$$
Then, according to Riesz's Representation Theorem, it exists a unique $X\in T_p\Omega$ such that 
$$
	\eta_p(X,V)=-\pair{\xi}{V\cdot\zeta}_p\;,\forall V\in T_p\Omega\,,\; \xi,\zeta\in\L^2(S)\;.
$$

Now if we denote the Lie derivative by $$\L: T\Omega\times \Gamma^\infty(\Omega)\to\Gamma^\infty(\Omega)$$ we have
\begin{align*}
	\sum_{i}\mathcal{L}_{e_i}\pair{\xi}{e_i\cdot\zeta} = \sum_ie_i\pair{\xi}{e_i\cdot\zeta}  &=\sum_i\pair{\nabla_{e_i}\xi}{e_i\cdot\zeta}+\sum_i\pair{\xi}{e_i\cdot\nabla_{e_i}\zeta}\\
	&=-\sum_i\pair{e_i\cdot\nabla_{e_i}\xi}{\zeta}+\sum_i\pair{\xi}{e_i\cdot\nabla_{e_i}\zeta}\\
	&=-\pair{D\xi}{\zeta}+\pair{\xi}{D\zeta}\;,
\end{align*}
where we have used (d), (e), (f) of Section \ref{sec:Dirac}. On the other hand we have that
$$
	\sum_{i}\mathcal{L}_{e_i}\pair{\xi}{e_i\cdot\zeta}=-\sum_i \mathcal{L}_{e_i}\eta(X,e_i)=-\operatorname{div} X\;.
$$
Now let $i:\pO \to\Omega$ be the canonical inclusion mapping, let $\nu\in \mathfrak{X}(\Omega)$ be the normal vector field to the boundary and let $\theta\in\Lambda^1(\Omega)$ such that  
$$\d\mu_\eta=\theta\wedge\d\mu_{\partial\eta}\;,\quad \theta(X)=\eta(X,\nu)\;.$$ Then we have that
\begin{align*}
	\scalar{D\xi}{\zeta}-\scalar{\xi}{D\zeta} &= \int_\Omega \operatorname{div}X \d\mu_\eta(x)\\
		&=\int_\Omega \mathcal{L}_X(\d\mu_\eta)\\
		&=\int_\Omega \d(\mathrm{i}_X \d\mu_\eta)\\
		&=\int_{\pO} i^*(\mathrm{i}_X \d\mu_\eta)\\
		&=\int_{\pO} i^*(\mathrm{i}_X \theta\wedge\d\mu_{\partial\eta})\\
		&=\int_{\pO} i^*(\eta(X,\nu)\wedge\d\mu_{\partial\eta})\\
		&=\int_{\pO}\eta(X,\nu)\d\mu_{\partial\eta}=-\int_{\pO}\pair{\xi}{\nu\cdot\zeta}\d\mu_{\partial\eta}=\int_{\pO}\pair{\nu\cdot\xi}{\zeta}\d\mu_{\partial\eta}\;.
\end{align*}
In the right hand side of the last equation one has to understand that $\xi$ and $\zeta$ are the restrictions to the boundary of the corresponding sections. We use the same symbol whenever there is no risk of confusion. So far the Green's formula above is purely formal and holds for smooth sections. In order to extend it to the appropriate Hilbert spaces we need to make use again of Theorem \ref{thm:lions}. We reformulate it here for the case of vector bundles.

\begin{theorem}
	The restriction map 
$$
	\map[b]{\Gamma^\infty(S)}{\Gamma^\infty(\partial S)}{\xi}{\xi|_{\pO}}
$$ can be extended to a continuous, surjective map 
$$
	b\,:\, \H^{k}(S)\to\H^{k-1/2}(\partial S)\,\quad k>1/2\;.
$$
\end{theorem}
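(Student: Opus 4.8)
The plan is to deduce the vector-bundle statement from the scalar version already available as Theorem~\ref{thm:lions}, exploiting the local triviality of $S$ together with a partition of unity. First I would choose a locally finite cover $\{U_\alpha\}$ of a collar neighbourhood of $\pO$ by open sets over which the bundle trivialises, $S|_{U_\alpha}\simeq U_\alpha\times\mathbb{C}^m$ with $m=\mathrm{rank}(S)$, arranged so that the induced boundary bundle $\partial S=i^*S$ trivialises compatibly over $U_\alpha\cap\pO$; let $\{\rho_\alpha\}$ be a subordinate partition of unity. In each chart a smooth section $\xi$ is represented by $m$ scalar functions $\xi^1_\alpha,\dots,\xi^m_\alpha$, and the restriction $\xi|_{\pO}$ is represented componentwise by their scalar restrictions. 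Applying Theorem~\ref{thm:lions} to each component then produces boundary values lying in $\H^{k-1/2}$ of the corresponding boundary chart.

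The key analytic input is the equivalence, over each relatively compact $U_\alpha$, between the intrinsic Sobolev norm defined through the connection $\nabla$ and the componentwise flat norm of the trivialisation. For $k=1$ this follows by comparing $\nabla\xi$ with the flat componentwise derivative: they differ by the connection coefficients, which are smooth and hence bounded on $\overline{U_\alpha}$; iterating covers higher $k$. With this equivalence in hand, the scalar estimate $\norm{\xi^i_\alpha|_{\pO}}_{k-1/2}\leq C\norm{\xi^i_\alpha}_k$, summed over components $i$ and patches $\alpha$, yields the continuity bound $\norm{b\xi}_{k-1/2}\leq C'\norm{\xi}_k$. Since $\Gamma^\infty(S)$ is dense in $\H^k(S)$, this extends $b$ uniquely to a continuous map on all of $\H^k(S)$.

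For well-definedness I would check that the locally produced boundary data glue to a genuine section of $\partial S$: on overlaps the two trivialisations differ by smooth transition functions, and multiplication by such functions preserves the boundary Sobolev spaces and commutes with the scalar trace, so the local pieces agree on overlaps and assemble via $\{\rho_\alpha\}$ into a single element of $\H^{k-1/2}(\partial S)$ independent of all choices. Surjectivity is obtained by reversing the construction: given $\psi\in\H^{k-1/2}(\partial S)$, decompose it through the boundary trivialisations, invoke the surjectivity half of Theorem~\ref{thm:lions} componentwise to lift each piece into $\H^k$ over a collar, patch with the partition of unity, and extend by zero into the interior; the result lies in $\H^k(S)$ and has trace $\psi$.

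The main obstacle is precisely this norm equivalence between the intrinsic, connection-defined Sobolev norm and the componentwise flat norm of a trivialisation, controlled uniformly across the cover; everything else is a faithful transcription of the scalar theorem through local charts. On a manifold with boundary one must pass to the relatively compact closures $\overline{U_\alpha}$ and use the smoothness of $\eta$, of $\nabla$, and of the transition functions to guarantee that the connection coefficients and their derivatives stay bounded. Granting that, the reduction to Theorem~\ref{thm:lions} is routine.
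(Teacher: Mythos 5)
Your proposal is correct in outline, but it is worth knowing that the paper itself offers no proof of this statement at all: the theorem is presented as a mere ``reformulation'' of the scalar trace theorem (Theorem~\ref{thm:lions}), whose own proof is delegated to the references \cite{adams03,lions72}. So what you have written is precisely the reduction argument that the paper leaves implicit, and it is the standard one: trivialise $S$ over a locally finite cover of a collar of $\pO$, compare the connection Sobolev norm with the componentwise flat norm (the difference being zeroth- and lower-order terms built from connection coefficients, bounded with all derivatives on $\overline{U_\alpha}$), apply the scalar theorem component by component, and glue with a partition of unity, using smoothness of the transition functions for well-definedness and the scalar surjectivity plus cutoffs for surjectivity. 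Two refinements would make the sketch airtight. First, your ``iterating covers higher $k$'' handles integer $k$ only; the paper's $\H^k(S)$ is defined for all real $k>1/2$ via $(1+\nabla^\dagger\nabla)^{k/2}$, so for fractional $k$ the local comparison with flat norms needs interpolation (or elliptic pseudodifferential calculus) rather than iteration --- this is standard but is a genuinely distinct step. Second, the uniformity of the local constants across the cover, which you correctly flag as the main obstacle, is automatic when $\overline{\Omega}$ is compact but requires a bounded-geometry hypothesis otherwise; the paper is silent on this point as well, so your proof is no less rigorous than the source it would be spliced into. In the surjectivity step, be slightly more careful than ``extend by zero'': multiply each local lift by a cutoff equal to $1$ on the support of the corresponding piece of boundary data before summing, so that the trace of the sum is exactly $\psi$.
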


Applying the trace theorem to the Green's formula for the Dirac operator we have that it can be extended to 
\begin{equation}\label{eq:green}
	\scalar{D\xi}{\zeta}-\scalar{\xi}{D\zeta}=\int_{\pO}\pair{\nu\cdot\varphi}{\psi}\d\mu_{\partial\eta}=\scalar{J\cdot\varphi}{\psi}_{\pO}\;,
\end{equation}
where $\xi\in\H^1(S)$, $\zeta\in\H^1(S)$ $\varphi:=b(\xi)\in \H^{1/2}(S)$, $\psi:=b(\zeta)\in \H^{1/2}(S)$ and $J$ is the extension to $\H^{1/2}(S)$ of Clifford multiplication by the normal vector $\nu$\,.

The space of self-adjoint extensions of the Dirac operator $\mathcal{M}(D)$ is characterized by the set of maximally isotropic subspaces of the boundary form 
$$\Sigma(\varphi,\psi)=\scalar{J\cdot\varphi}{\psi}_{\pO}\;.$$
This boundary form satisfies that 
$$\overline{\Sigma(\varphi,\psi)}=-\Sigma(\psi,\varphi)\;.$$

Notice that $J\cdot J= -\1$ and therefore the Hilbert space at the boundary carries a natural polarization $\L^2(\partial S)=\H_+\oplus\H_-$\,, in terms of the proper subspaces of $J$,
$$\H_{\pm}=\{\varphi\in\L^2(\partial S) \mid J\varphi= \pm i \varphi \}\;,$$
and each section $\varphi\in\L^2(\partial S)$ has a unique decomposition $\varphi=\varphi_+ + \varphi_-$ with $\varphi_\pm\in\H_\pm$\,.

\begin{theorem}\label{thm:diracbc}
	Isotropic subspaces of the boundary form $\Sigma$ are in one-to-one correspondence with graphs of unitary maps $U:\H_+\to\H_-$\;.
\end{theorem}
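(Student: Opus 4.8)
The plan is to reduce everything to a concrete computation of $\Sigma$ in the polarization $\L^2(\partial S)=\H_+\oplus\H_-$ and then read off unitarity. First I would record the algebraic properties of $J$: Clifford multiplication by the unit normal acts unitarily (property (d)) and satisfies $J\cdot J=-\1$, while the identity $\overline{\Sigma(\varphi,\psi)}=-\Sigma(\psi,\varphi)$ forces $J^\dagger=-J$. Hence $J$ is a skew-adjoint unitary, its eigenspaces $\H_\pm$ for the eigenvalues $\pm i$ are mutually orthogonal, and they span $\L^2(\partial S)$. Writing $\varphi=\varphi_++\varphi_-$, $\psi=\psi_++\psi_-$ and using $J\varphi=i\varphi_+-i\varphi_-$ together with $\H_+\perp\H_-$, a one-line computation gives (up to the inner-product convention) $\Sigma(\varphi,\psi)=i\big(\scalar{\varphi_+}{\psi_+}_{\pO}-\scalar{\varphi_-}{\psi_-}_{\pO}\big)$. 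This exhibits $\Sigma$ as $i$ times an indefinite Hermitean form whose positive and negative parts are exactly $\H_+$ and $\H_-$, which is the structure that makes the correspondence transparent.

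Second, I would dispatch the easy direction. Given a unitary $U:\H_+\to\H_-$, consider its graph $W_U=\{\,\varphi_++U\varphi_+\mid\varphi_+\in\H_+\,\}$. For two elements of $W_U$ the formula above gives $\Sigma(\varphi,\psi)=i\big(\scalar{\varphi_+}{\psi_+}_{\pO}-\scalar{U\varphi_+}{U\psi_+}_{\pO}\big)$, which vanishes precisely because $U$ preserves the inner product; so $W_U$ is isotropic, and one checks it is maximal since its $\H_+$-projection is already all of $\H_+$. Distinct unitaries visibly produce distinct graphs, so $U\mapsto W_U$ is injective.

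Third, the converse is where the real work sits. Starting from a closed maximal isotropic subspace $W$, I would set $\psi=\varphi$ in the isotropy condition to obtain $\norm{\varphi_+}=\norm{\varphi_-}$ for every $\varphi\in W$. This balance makes both projections $P_\pm\big|_W$ injective, so $W$ is the graph of a well-defined linear map $U:P_+(W)\to\H_-$, $U\varphi_+:=\varphi_-$, and the full polarized isotropy identity shows $U$ is inner-product preserving, i.e.\ an isometry on its domain. It then remains to promote $U$ to a genuine unitary, that is, to prove $P_+(W)=\H_+$ and $\operatorname{ran}U=\H_-$.

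The main obstacle is exactly this surjectivity. I would handle it by a maximality/extension argument: if there were nonzero $\eta_+\in\H_+\ominus P_+(W)$ and $\eta_-\in\H_-\ominus\operatorname{ran}U$ with $\norm{\eta_+}=\norm{\eta_-}$, then $W+\mathbb{C}(\eta_++\eta_-)$ would still be isotropic — the cross terms vanish by orthogonality to $P_\pm(W)$ and the diagonal term $i(\norm{\eta_+}^2-\norm{\eta_-}^2)$ vanishes by the norm balance — contradicting the maximality of $W$. The genuinely delicate point, which I would treat carefully (and where the standing structural hypotheses on the boundary enter), is to rule out the infinite-dimensional pathology in which $U$ is an isometry of $\H_+$ onto a \emph{proper} closed subspace of $\H_-$ (a shift-type map): maximality by itself only constrains the pair of codimensions on the domain and range sides, and one must combine the balance $\norm{\varphi_+}=\norm{\varphi_-}$ with the closedness of $W$ to force both deficiencies to vanish simultaneously, so that $U$ is onto and hence unitary.
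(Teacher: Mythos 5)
Your opening reduction and your easy direction coincide with what the paper itself does: its entire proof is the polarization identity $\scalar{J\cdot\varphi}{\psi}_{\pO}=-i\scalar{\varphi_+}{\psi_+}_{\pO}+i\scalar{\varphi_-}{\psi_-}_{\pO}$ followed by a citation of the known correspondence (Kochubei, Asorey--Ibort--Marmo), so your attempt to prove the hard direction from scratch goes beyond the paper. That is exactly where there is a genuine gap, and it is not the repairable technicality you suggest. Your closing claim --- that the shift-type pathology can be excluded by combining the balance $\norm{\varphi_+}=\norm{\varphi_-}$ with closedness of $W$ --- is false. Take any isometry $S:\H_+\to\H_-$ whose range is a \emph{proper} closed subspace (such $S$ exist, since here $\H_\pm$ are infinite-dimensional spaces of boundary spinors), and let $W=\{\varphi_++S\varphi_+ \mid \varphi_+\in\H_+\}$ be its graph. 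Then $W$ is closed, isotropic, satisfies the norm balance, and is \emph{inclusion-maximal}: if $W+\mathbb{C}\eta$ is isotropic, then $\Sigma(w,\eta)=0$ for all $w\in W$ forces $\eta_+=S^\dagger\eta_-$, while $\Sigma(\eta,\eta)=0$ gives $\norm{\eta_+}=\norm{\eta_-}$; since $\norm{S^\dagger\eta_-}^2=\norm{\eta_-}^2-\norm{P\eta_-}^2$ with $P$ the orthogonal projection onto $(\operatorname{ran}S)^{\perp}$, this forces $P\eta_-=0$, hence $\eta_-\in\operatorname{ran}S$ and $\eta\in W$. So there exist closed, inclusion-maximal isotropic subspaces that are graphs of non-surjective isometries. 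Your two-sided deficiency argument is correct, but it only excludes the case where \emph{both} deficiencies are nonzero, and no argument can do better: with inclusion-maximality the statement you are trying to prove is simply not true.

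The missing idea is that ``maximally isotropic'' must be taken in the stronger, self-orthogonal (Lagrangian) sense: $W=W^{\perp_\Sigma}$, where $W^{\perp_\Sigma}:=\{\eta \mid \Sigma(w,\eta)=0\ \ \forall w\in W\}$. With that definition your missing surjectivity is immediate: for the graph $W$ of an isometry $U$, every $\eta_-\in(\operatorname{ran}U)^{\perp}\subset\H_-$ lies in $W^{\perp_\Sigma}$, so $W=W^{\perp_\Sigma}$ forces $\operatorname{ran}U=\H_-$, i.e.\ $U$ unitary; conversely the graph of a unitary is self-orthogonal. This distinction is not pedantry but is the operator-theoretic content of the theorem: inclusion-maximal isotropic subspaces correspond to \emph{maximal symmetric} extensions, and the graphs of non-surjective isometries are precisely the maximal symmetric, non-self-adjoint ones --- the same dichotomy as in von Neumann's theorem (Theorem \ref{thm:vonneumann}), where isometries between deficiency spaces that fail to be onto yield no self-adjoint extension. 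So the correct repair of your proof is not ``norm balance plus closedness'' but replacing inclusion-maximality by self-orthogonality throughout (which is what the references cited in the paper's proof do), after which your construction of $U$ and your deficiency computation go through verbatim.
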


\begin{proof}
Recall that the subspaces $\H_{\pm}$ are mutually orthogonal. Then one has that
$$
	\scalar{J\cdot \varphi}{\psi}_{\pO}=\scalar{J\cdot (\varphi_++\varphi_-)}{\psi_++\psi_-}_{\pO}	=-i\scalar{\varphi_+}{\psi_+}_{\pO}+i\scalar{\varphi_-}{\psi_-}_{\pO}\;.
$$
It is well known, cf. \cite{kochubei75,asorey05,ibortlledo13}, that maximally isotropic subspaces of a bilinear form like the one at the right hand side are determined uniquely by unitaries $U:\H_+\to\H_-$\,.
\end{proof}

Hence the space of self-adjoint extensions of the Dirac operator $\mathcal{M}(D)$ is in one-to-one correspondence with the set of unitary operators $\mathcal{U}(\H_+,\H_-)$ that preserve the Sobolev space of sections of order $1/2$, $\H^{1/2}(\partial S)$. We denote such set $\mathcal{U}(\H_+,\H_-)_{\H^{1/2}(\partial S)}$\;. Notice that any unitary operator defines a maximally isotropic subspace of the boundary form. However, if $U(\H_+\cap\H^{1/2}(\partial S))\not\subset\H^{1/2}(\partial S)$ then the trace map $b:\H^1(S)\to\H^{1/2}(\partial S)$ can be ill-defined.

\begin{example}
Let us consider that the Riemannian manifold is the unit disk $\Omega=\mathfrak{D}$. We can now endow the disk with the $\mathrm{Cl}(\mathfrak{D})-$bundle structure that we considered in Example \ref{ex:diracstructureexample2}. The outer unit normal vector field to the boundary, $ \nu\in T_{p}\mathfrak{D}\,,p\in S^1=\partial \mathfrak{D}$, can be expressed in terms of the orthonormal basis in $\mathbb{R}^2$ as
$$\nu=\cos\theta e_1 + \sin\theta e_2\;.$$
Its representation as an element of the Clifford algebra is
$$\nu=\cos\theta 
		\begin{bmatrix}
			0 & i \\ i & 0 
		\end{bmatrix} 
		+ \sin\theta 
		\begin{bmatrix}
			0 & -1 \\ 1 & 0
		\end{bmatrix}=
		\begin{bmatrix}
			0 & ie^{i\theta} \\ ie^{-i\theta} & 0
		\end{bmatrix}\;.
		$$
The proper subspaces of $J$ then become 
$$
\H_+=\operatorname{span}\left\{|+\rangle= \frac{1}{\sqrt{2}}
		\begin{pmatrix}
			e^{i\theta/2} \\ e^{-i\theta/2}
		\end{pmatrix}\right\}\;,
$$
$$
\H_-=\operatorname{span}\left\{|-\rangle= \frac{1}{\sqrt{2}}
		\begin{pmatrix}
			-e^{i\theta/2} \\ e^{-i\theta/2}
		\end{pmatrix}\right\}\;.
$$
Finally, the space of self-adjoint extensions $\mathcal{M}(D)$ is determined by the set of unitary operators $$U:\H^{1/2}(S^1)\otimes|+\rangle \to \H^{1/2}(S^1)\otimes|-\rangle\;.$$
\end{example}

%%%%%%%%%%%%%%%%%%%%%%%%%%%%%%%%%%%%%%%%%%%%%%%%%%%%%%%%%%%%%%%%%%%%%%%%%%%%%%%%%%%%%%%%%%%%%%%%%%%%%%%%%%%%%%%%%%%%%%%%%%%%%%%%%%%%%%%%%%%%%%%%%%%%%%%%%%%%%%%%%%%%%%%%%%%%%%%%%%%%%%%%%%%%%%%%%%%%%%%%%%%%%%%%%%%%%%%%%%%%%%%%%%%%%%%%%%%%%%%%

\newpage

\section{Lecture IV.  The not semibounded case: New ideas and examples}\label{sec:notsemibounded}

Now that we have discussed how to define self-adjoint extensions of Dirac operators in terms of boundary conditions, we would like to have tools like those introduced in Lecture II that would allow us to avoid the analytic difficulties inherent to boundary value problems.  

Semiboundedness is a fundamental assumption in Kato's representation theorem and unfortunately the Dirac operators are not semibounded. Is it possible to find a generalization of Kato's representation theorem that can be applied to the generic case of non-semibounded operators? 

Given a generic, self-adjoint operator $T$ with domain $\D(T)$ one can always define a quadratic form associated to it using the spectral theorem. Indeed, consider the spectral decomposition of the self-adjoint operator $T =\int_{\mathbb{R}}\lambda \d E_\lambda$.  
The spectral theorem, cf. \cite{reed78}, ensures that any self-adjoint operator has a decomposition like the one above in terms of a resolution of the identity that in the case of self-adjoint operators with discrete spectrum reduces to the simple familiar expression $T = \sum_i\lambda_i P_i$.

The domain of the self-adjoint operator $T$ can be described using the spectral theorem as
\begin{equation}\label{eq:domainT}
	\D(T)=\left\{ \Phi\in\H \Bigr| \int_{\mathbb{R}}|\lambda|^2 \d(\scalar{\Phi}{E_\lambda\Phi})<\infty \right\}\;.
\end{equation}
However one can define the following domain
\begin{equation}\label{eq:domainQ}
	\D(Q_T)=\left\{ \Phi\in\H \Bigr| \int_{\mathbb{R}}|\lambda| \d(\scalar{\Phi}{E_\lambda\Phi})<\infty \right\}\;,
\end{equation}
that satisfies $\D(T)\subset\D(Q_T)$\,. Now for any $\Phi\in\D(Q_T)$ we can consider the quadratic form
\begin{equation}\label{eq:int_quadratic}
	Q_T(\Phi):=\int_\mathbb{R}\lambda \d(\scalar{\Phi}{E_\lambda\Phi})\;.
\end{equation}
This quadratic form is clearly represented by the operator $T$. Hence, to any self-adjoint operator one can always associate a quadratic form possessing an integral representation like \eqref{eq:int_quadratic}.  However, if the operator $T$ is not semibounded, one can not use Kato's representation theorem to recover the operator $T$ from its associated quadratic form. The rest of this section is devoted to obtain a generalization of Kato's representation theorem that can be applied to the not semibounded case.

%%%%%%%%%%%%%%%%%%%%%%%%%%%%%%%%%%%%%%%%%%%%%%%%%%%%%%%%%%%%%%%%%%%%%%%%%%%%%%%%%%%%%%%%%%%%%%%%%%%%%%%%%%%%%%%%%%%%%%%%

\subsection{Partially orthogonally additive quadratic forms}

Suppose that the Hilbert space can be decomposed as the orthogonal sum of two subspaces 
$$\H=W_+\oplus W_-\;.$$
Let $P_\pm$ denote the orthogonal projections onto the subspaces $W_\pm$\,. 

\begin{definition}
We say that a quadratic form $Q$ with domain $\D$ is partially orthogonally additive with respect to the decomposition $\H=W_+\oplus W_-$\,, if $P_\pm\D\subset\D$ and whenever
$$
	\scalar{P_+\Phi}{P_+\Psi}=0\quad\text{and}\quad\scalar{P_-\Phi}{P_-\Psi}=0\
$$
we have that
\begin{equation}\label{eq:poa}
	Q(\Phi+\Psi)=Q(\Phi)+Q(\Psi)\,.
\end{equation}

We will call the quadratic forms $Q_\pm(\Phi):=Q(P_\pm\Phi)$ the sectors of the quadratic form. Notice that from the definition of partial orthogonal additivity we have that 
$$Q(\Phi)=Q(P_+\Phi+P_-\Phi)=Q(P_+\Phi)+Q(P_-\Phi)=Q_+(\Phi)+Q_-(\Phi)$$
\end{definition}

Recall now that for semibounded quadratic forms one can always define an associated graph norm. For a lower semibounded quadratic form there exists $a>0$ such that $Q(\Phi)\geq -a\norm{\Phi}^2$\,. Then we define 
$$
	\normm{\Phi}^2_Q=(1+a)\norm{\Phi}^2+Q(\Phi)\;.
$$
For an upper semibounded quadratic form there exists $b>0$ such that $Q(\Phi)<b\norm{\Phi}^2$ and we define
$$
	\normm{\Phi}^2_Q=(1+b)\norm{\Phi}^2-Q(\Phi)\;.
$$

\begin{theorem}\label{thm:notsemibounded}
	Let the quadratic form $Q$ with dense domain $\D$ be a partially orthogonally additive quadratic form with respect to the decomposition $\H=W_+\oplus W_-$\,. Let each sector $Q_\pm$ of $Q$ be either semibounded from below or from above and closable. Then there exists a unique self-adjoint operator $T$, with domain $\D(T)$ such that 
$$
	Q(\Phi,\Psi)=\scalar{\Phi}{T\Psi}\,,\quad\Phi\in\D,\Psi\in\D(T)\;.
$$
\end{theorem}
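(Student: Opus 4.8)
The plan is to exploit the hypotheses so as to reduce the problem to two independent applications of Kato's representation theorem, Theorem~\ref{thm:kato}, one on each summand $W_+$ and $W_-$, and then to reassemble the resulting operators into a single self-adjoint operator by taking their orthogonal direct sum. The whole point of the sector decomposition is that, although $Q$ itself need not be semibounded, each sector $Q_\pm$ is, so the graph-norm machinery behind Kato's theorem becomes available on each subspace separately; the non-semiboundedness of $Q$ is then harmlessly absorbed by allowing one sector to be bounded below and the other above. As a preliminary observation, the hypothesis $P_\pm\D\subset\D$ together with the linearity of $\D$ gives $\D=\D_+\oplus\D_-$, where $\D_\pm:=P_\pm\D\subset W_\pm$, while the continuity of $P_\pm$ shows that each $\D_\pm$ is dense in $W_\pm$. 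Thus each $Q_\pm$, viewed as a form on $\D_\pm\subset W_\pm$, is densely defined, semibounded and closable, which is exactly the input required by Kato's theorem.

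Before invoking Kato I would first upgrade the partial orthogonal additivity of the \emph{quadratic} form to a full splitting of the associated \emph{sesquilinear} form, namely
\[
	Q(\Phi,\Psi)=Q(P_+\Phi,P_+\Psi)+Q(P_-\Phi,P_-\Psi)\,,\qquad \Phi,\Psi\in\D\,,
\]
equivalently the vanishing of the cross terms $Q(P_+\Phi,P_-\Psi)=0$. To establish this I would apply the additivity identity \eqref{eq:poa} to the two pairs $(P_+\chi,P_-\chi)$ and $(P_+\chi,iP_-\chi)$, both of which satisfy the orthogonality requirement since $P_+P_-=0$; expanding $Q$ by sesquilinearity and comparing with $Q_+(\chi)+Q_-(\chi)$ forces the real and then the imaginary part of $Q(P_+\chi,P_-\chi)$ to vanish. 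Polarizing this diagonal statement in $\chi$ (once more inserting a factor $i$ to separate the two halves) yields the vanishing of the general cross term. This is the step I expect to be the most delicate: the additivity hypothesis is given only on the diagonal, and one must genuinely use the complex structure, since a purely real polarization would control $\mathrm{Re}\,Q(P_+\Phi,P_-\Psi)$ while leaving the imaginary part undetermined.

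With the sector splitting in hand the construction is routine. For a sector semibounded from below I apply Theorem~\ref{thm:kato} to its closure $\overline{Q_\pm}$ in $W_\pm$ to obtain a self-adjoint operator $T_\pm$ on $W_\pm$ with $\D(T_\pm)\subset\D(\overline{Q_\pm})$; for a sector semibounded from above I apply the theorem to $-Q_\pm$ and take $T_\pm$ to be the negative of the operator so produced, which is again self-adjoint and represents $Q_\pm$. I then set
\[
	T\Psi:=T_+P_+\Psi+T_-P_-\Psi\,,\qquad \D(T):=\D(T_+)\oplus\D(T_-)\,,
\]
the orthogonal direct sum of the $T_\pm$, which is self-adjoint because each summand is and the two act in orthogonal reducing subspaces. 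For $\Phi\in\D$ and $\Psi\in\D(T)$ the cross terms in $\scalar{\Phi}{T\Psi}$ vanish by orthogonality of $W_+$ and $W_-$, and the two surviving terms equal $Q(P_+\Phi,P_+\Psi)$ and $Q(P_-\Phi,P_-\Psi)$ by the sector representations furnished by Kato's theorem; summing them and invoking the splitting of the previous paragraph yields $\scalar{\Phi}{T\Psi}=Q(\Phi,\Psi)$, as required.

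Finally, for uniqueness I would show that any self-adjoint $T'$ representing $Q$ must itself respect the decomposition $\H=W_+\oplus W_-$. Testing the identity $Q(\Phi,\Psi)=\scalar{\Phi}{T'\Psi}$ against $\Phi\in\D_+$ and then against $\Phi\in\D_-$, and using the sesquilinear splitting, shows that the functional $\phi_\pm\mapsto Q(\phi_\pm,P_\pm\Psi)$ is bounded and that $P_\pm T'\Psi$ depends only on $P_\pm\Psi$; by the characterization of the representing operator, Lemma~\ref{lem:characterizationdavies}, applied to each closed sector, this forces $P_\pm\Psi\in\D(T_\pm)$ and $P_\pm T'\Psi=T_\pm P_\pm\Psi$. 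Hence $T'$ is a restriction of $T$, and since both are self-adjoint we conclude $T'=T$, with the sector-wise uniqueness clause of Theorem~\ref{thm:kato} underlying the identification on each $W_\pm$.
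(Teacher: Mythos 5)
Your proof is correct and follows essentially the same route as the paper's: reduce to Kato's representation theorem (Theorem~\ref{thm:kato}) on each semibounded sector and assemble the two representing operators into an orthogonal direct sum on $\D(T)=\{\Phi\in\H \mid P_\pm\Phi\in\D(T_\pm)\}$. The only differences are organizational --- you close each sector separately in $W_\pm$ where the paper closes $\D$ all at once under the combined graph norm $\normm{\cdot}_Q$ --- and you additionally write out two details the paper leaves implicit, namely the polarization argument showing that the cross terms $Q(P_+\Phi,P_-\Psi)$ of the sesquilinear form vanish, and the uniqueness argument via Lemma~\ref{lem:characterizationdavies}.
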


\begin{proof}
	If the sectors $Q_\pm$ are both simultaneously lower semibounded or upper semibounded then the quadratic form itself is semibounded and a direct application of Kato's theorem proves the statement. Hence consider that $Q_+$ is lower semibounded and $Q_-$ is upper semibounded. First notice that we can use the graph norms of the sectors to define a norm for the linear space $\D$, namely
$$
	\normm{\Phi}^2_Q=(1+a)\norm{P_+\Phi}^2+Q_+(\Phi)+(1+b)\norm{P_-\Phi}^2-Q_-(\Phi)\;.
$$
It is immediate to check that $\normm{\Phi}_Q<\infty$ for all $\Phi\in\D$. Therefore $\D$ can be closed with respect to this norm. Let us denote 
$$
	\overline{\D}=\overline{\D}^{\normm{\cdot}_Q}\;.
$$
Moreover, the quadratic form $Q$ can be continuously extended to this domain and we shall denote the extension by $\overline{Q}$\,. The extended quadratic form $\overline{Q}$ is also partially orthogonally additive with respect to $\{W_\pm\}$ and its sectors are closed, semibounded quadratic forms. Kato's representation theorem ensures that there exist two self-adjoint operators $T_\pm:\D(T_\pm)\to W_\pm$ such that 
$$
	Q_{\pm}(\Phi,\Psi)=\scalar{\Phi}{T_\pm\Psi}\,,\quad\Phi\in\overline{D}\,,\Psi\in\D(T_\pm)\;.
$$
Hence we can define the domain
\begin{equation}\label{eq:domainrepT}
	\D(T)=\{\Phi\in\H | P_+\Phi\in\D(T_+)\,,P_-\Phi\in\D(T_-)\}\;,
\end{equation}
and the operator 
$$
	T\Phi:=T_+P_+\Phi+T_-P_-\Phi\;.
$$
It is immediate to verify that T is self-adjoint.
\end{proof}

\begin{example}[\, \, Position operator] \label{ex: position}
Consider the Hilbert space $\H=\L^2(\mathbb{R})$\,. Let $Q$ be the Hermitean quadratic form with domain $\D=\C^{\infty}_c(\mathbb{R})$ defined by 
	$$Q(\Phi)=\int_{\mathbb{R}}\bar{\Phi}(x)\, x\, \Phi(x)\d x\;.$$
Consider the subspaces 
	$$W_{+}=\{\Phi\in\H|\operatorname{supp}\Phi\subset\mathbb{R}^+\}\;,$$
	$$W_{-}=\{\Phi\in\H|\operatorname{supp}\Phi\subset\mathbb{R}^-\}\;.$$ 
The projections onto these subspaces are given by multiplication with the corresponding characteristic functions. The sectors of the quadratic form above are therefore 
	$$Q_+(\Phi)=\int_{\mathbb{R}^+}\bar{\Phi}(x)x\Phi(x)\d x\;,$$
	$$Q_-(\Phi)=\int_{\mathbb{R}^-}\bar{\Phi}(x)x\Phi(x)\d x\;.$$
The graph norms of the sectors become in this case
$$
	\normm{P_+\Phi}^2_{Q_+}=\norm{P_+\Phi}^2+Q_+(\Phi)\,,
$$
$$
	\normm{P_-\Phi}^2_{Q_-}=\norm{P_-\Phi}^2-Q_-(\Phi)\,,
$$
and we get for the norm associated to $Q$
$$
	\normm{\Phi}^2_Q=\norm{\Phi}^2+\int_\mathbb{R}|x|\,|\Phi(x)|^2\d x\;.
$$
Hence, the closure of the domain $\D=\C^{\infty}_c(\mathbb{R})$ with respect to this norm becomes the domain of the closed quadratic form $Q$ that agrees with the domain defined by \eqref{eq:domainQ}. Now we want to obtain the domain of the self-adjoint operator representing $Q$. We need first to characterize the domains of the self-adjoint operators $T_\pm$\,. Again, we are going to use the characterization provided by Lemma \ref{lem:characterizationdavies}. According to this theorem $\Phi\in W_+$ is an element of the domain $\D(T_+)$ if it is an element of $P_+\D$ and it exists $\chi\in W_+$ such that
$$
	Q_+(\Psi,\Phi)=\scalar{\Psi}{\chi}\,,\quad \forall \Psi\in P_+\D\;.
$$
Rewriting this equality we get that 
$$
	\int_{\mathbb{R}^+}x\bar{\Psi}(x)\Phi(x) \d x = \int_{\mathbb{R}^+}\bar{\Psi}(x)\chi(x) \d x\,,\quad \forall \Psi\in P_+\D
$$
Since $P_+\D$ is dense in $W_+$ this implies that $x\Phi(x)=\chi(x)\in W_+$ and hence 
$$
	\int_{\mathbb{R}^+} |x|^2|\Phi(x)|^2 \d x <\infty \;,
$$
since $\chi$ must be in $\L^2(\mathbb{R}^+)\,.$
A similar argument for $T_-$ and having into account the definition in Eq.~\eqref{eq:domainrepT} shows that the domain of the representing operator $T$ is precisely
$$
	\D(T)=\left\{ \Phi\in\L^2(\mathbb{R}) \Bigr| \int_{\mathbb{R}}|x|^2 |\Phi(x)|^2 \d x<\infty \right\}\;,
$$
which agrees with the domain of the position operator.
\end{example}

Now we are going to consider the example of the momentum operator. Of course these case can be reduced to the previous example by using the Fourier transform, which is a unitary operator on the space of square integrable functions, and transforms the momentum operator $i\frac{\d}{\d x}$ into the multiplication operator. However, in order to set the ground for the most important example of this section, the case for the Dirac operator, we will define all the spaces of functions in the so called \emph{position space}.

\begin{example}[\,\, Momentum operator]
Consider the Hilbert space $\H = \L^2(\mathbb{R}^2)$. Let $Q$ be the Hermitean quadratic form with domain $\D=\H^1(\mathbb{R})$ defined by 
$$
	Q(\Phi,\Psi)=\int_{\mathbb{R}}\bar{\Phi}(x)i\frac{\d \Psi}{\d x}(x) \d x\;.
$$
Consider the subspaces 
$$
	W_{\pm}=\left\{\Phi\in\L^2 \Bigr| \Phi(x)=\int_{\mathbb{R}}\hat{\Phi}(k)e^{ikx}\d k\,, \hat{\Phi}(k)\in \L^2(\mathbb{R}^\pm)\right\}\;.
$$
We need to verify that these two subspaces are orthogonal and that the quadratic form above is partially orthogonally additive with respect to them. Let $\Phi_+\in W_+$ and $\Phi_-\in W_-$\;.
\begin{align*}
	\scalar{\Phi_+}{\Phi_-}&=\int_{\mathbb{R}^3}\bar{\hat{\Phi}}_+(k')\hat{\Phi}_-(k)e^{i(k-k')x}\d k \d k' \d x\\
		&=\int_{\mathbb{R}}\bar{\hat{\Phi}}_+(k)\hat{\Phi}_-(k)\d k =0\;.
\end{align*}
The last inequality follows because the supports of both functions are disjoint.

Similarly we have that 
$$
	Q(\Phi_+,\Psi_-)= -\int_{\mathbb{R}}k\bar{\hat{\Phi}}_+(k)\hat{\Phi}_-(k)\d k =0\;.
$$
Hence, the quadratic form above is partially orthogonally additive with respect to the decomposition above. Moreover, it is immediate to verify that the sectors are also semibounded. We can therefore obtain a unique, self-adjoint operator associated to it according to Theorem \ref{thm:notsemibounded}. This operator agrees with the momentum operator in Quantum Mechanics.
\end{example}

The two self-adjoint operators obtained in the two examples above are in fact the unique self-adjoint extensions on the real line of the symmetric operators defined by the position operator and the momentum operator. However, we have obtained them in a non standard way by closing the associated quadratic forms. The next section is devoted to show that in the case of the Dirac operator defined in a manifold with boundary we can use the boundary conditions defined in terms of a generic unitary operator to characterize self-adjoint extensions of it uniquely. The advantage is that one can ignore the restrictions related with the boundary data being elements of $\H^{1/2}(\partial\Omega)$, cf. Theorem \ref{thm:diracbc}.

%%%%%%%%%%%%%%%%%%%%%%%%%%%%%%%%%%%%%%%%%%%%%%%%%%%%%%%%%%%%%%%%%%%%%%%%%%%%%%%%%%%%%%%%%%%%%%%%%%%%%%%%%%%%%%%%%%%%%%%%

\subsection{The Dirac Hamiltonian on $B^3$}

For the sake of simplicity we are going to consider as our base manifold the three dimensional unit ball, $\Omega=B^3$\,. As the Riemannian metric we are going to consider the restriction of the euclidean metric to the interior of ball. The results of this section can be extended straightforwardly to any bounded region in $\mathbb{R}^3$\,. As our $\mathrm{Cl}(B^3)-$bundle we are going to take the trivial bundle $S=\mathbb{C}^2\times B^3$\,,
$$
	\pi: S\to B^3\;,
$$
with flat Hermitean product given by the Hermitean product on $\mathbb{C}^2$. We need to choose a representation of the Clifford algebra $\mathrm{Cl}(\mathbb{R}^3)$ and we are going to take the representation $\gamma(e_j)=i\sigma_j$ given in terms of the Pauli matrices 
$$
\sigma_1=\begin{bmatrix} 0 & 1 \\ 1 & 0 \end{bmatrix}
\,,\quad
\sigma_2=\begin{bmatrix} 0 & i \\ -i & 0 \end{bmatrix}
\,,\quad
\sigma_3=\begin{bmatrix} 1 & 0 \\ 0 & -1 \end{bmatrix}
\;.
$$
Notice that, as they should, the generators satisfy 
$$
\gamma(e_i)\gamma(e_j)+\gamma(e_j)\gamma(e_i)=-2\mathbb{I}_{\mathbb{C}^2}\delta_{ij}\;.
$$
As the Hermitean connection we choose the flat connection defined by the componentwise derivatives. So finally we get that our Dirac operator is,
$$
D=\sum_{j=1}^3 i\sigma_j \frac{\partial}{\partial x^j}\;.
$$

As explained in Lecture III we can impose boundary conditions of the Dirac operator as follows. Let $\nu$ be the normal vector field to the boundary and consider the $\pm i$ eigenspaces $\H_\pm\subset \L^2(\partial S)$ of the operator defined by Clifford multiplication by $\nu$, that is:
$$
	\gamma(\nu)\Psi_\pm=\pm i\Psi_\pm\;; \qquad \Psi_\pm \in \mathcal{H}_\pm \, .
$$
Now, for any unitary operator $U:\H_+\to\H_-$, the boundary conditions $\varphi_-=U\varphi_+$ define domains where the Dirac operator is symmetric. In order to guarantee self-adjointness one needs to impose extra conditions on the unitary operator $U$. However, we can use the approach introduced in this section and the former to characterize uniquely a self-adjoint extension that satisfies the boundary condition. We only need to find a couple of subspaces that guarantee that the associated quadratic form is partially orthogonally additive with respect to them.\\

Consider the partial differential equation 
$$
D\Psi=E\Psi\,,\quad\Psi\in\H^1(S)\;.
$$

We are going to try solutions of the form $\Psi(\bx)=\xi e^{i\bk\cdot\bx}$\,, $\xi\in\mathbb{C}^2$\;. A short calculation shows that 
$$
	D\Psi=-\sum_{j=1}^3k_j\sigma_j\xi e^{i\bk\cdot\bx}=E\xi e^{i\bk\cdot\bx}
$$
Hence $\Psi$ is  a solution if $\xi$ is an eigenvalue of the Hermitean matrix $-\sum_{j}k_j\sigma_j$\,. From the properties of the Clifford algebra we have that $\left(-\sum_{j}k_j\sigma_j\right)^2=\sum_j|k_j|^2=:\bk^2$. Hence the matrix has two eigenspaces with values $\pm \sqrt{\bk^2}$\,. 
We can now select functions
$$
	\map[\xi_\pm]{\mathbb{R}^3}{\mathbb{C}^2}{\bk}{\xi_\pm(\bk)}
$$
such that for each fixed $\bk$ we have that
$$
-\sum_{j}k_j\sigma_j\xi_\pm(\bk)=\pm \sqrt{\bk^2} \xi_\pm(\bk)\;.
$$
Notice that such functions are orthogonal to each other for each value $\bk$\, since they are eigenvectors corresponding to different eigenvalues. Now we define the following subspaces of $\L^2(\mathbb{R}^3)$.
$$
	\mathfrak{h}=\{ \phi\in\L^2(\R^3) \,|\, \operatorname{supp}\phi\subset B^3 \}\;,
$$
$$
	\hat{\mathfrak{h}}=\left\{\hat{\phi}\in\L^2(\R^3) \,|\, \hat{\phi}(\bk)=\frac{1}{(2\pi)^{3/2}}\int_{\R^3}\phi(\bx)e^{-i\bk\cdot\bx} \d\bx \,;\phi\in\mathfrak{h} \right\}\;.
$$
These subspaces are closed subspaces of $\L^2(\R^3)$\,. Now we are ready to define a decomposition of $\L^2(S)$\,. Consider the closed subspaces of $\L^2(S)$ defined by
$$
	W_{\pm}=\left\{ \Phi\in\L^2(S) \Bigr| \Phi(\bx)=\frac{1}{(2\pi)^{3/2}}\int_{\R^3}\xi_{\pm}(\bk)\hat{\phi}(\bk)e^{i\bk\cdot\bx}\d \bk\,; \hat{\phi}\in\hat{\mathfrak{h}} \right\}\;.
$$
As before we have to show that they are orthogonal and that the quadratic form $Q=\scalar{\Phi}{D\Psi}$ is partially orthogonally additive with respect to them. Have in mind that $Q$ is a Hermitean quadratic form only if $D$ is defined on a symmetric domain.

Let $\Phi_+\in W_+$ and $\Phi_-\in W_-$.
\begin{align*}
	\scalar{\Phi_+}{\Phi_-}&=\int_{B^3}\int_{\R^6}\pair{\xi_+(\bk')}{\xi_-(\bk)}\bar{\hat{\phi}}_+(\bk')\hat{\phi}_-(\bk)e^{i(\bk-\bk')\cdot\bx} \d \bk' \d \bk \d \bx\\
		&=\int_{\R^3} \pair{\xi_+(\bk)}{\xi_-(\bk)}\bar{\hat{\phi}}_+(\bk)\hat{\phi}_-(\bk)\d \bk=0\;.
\end{align*}
The last equality follows from the fact that $\xi_+$ and $\xi_-$ are orthogonal for each $\bk$ and therefore $\pair{\xi_+(\bk)}{\xi_-(\bk)}\equiv0$\;.

Similarly we have that
\begin{align*}
	\scalar{\Phi}{D\Psi}&=\sum_j\int_{B^3}\int_{\R^6}\pair{\xi_+(\bk')}{i\sigma_j\xi_-(\bk)}\bar{\hat{\phi}}_+(\bk')\hat{\phi}_-(\bk)\frac{\partial}{\partial x^j}e^{i(\bk-\bk')\cdot\bx} \d \bk' \d \bk \d \bx\\
		&=\int_{B^3}\int_{\R^6}\pair{\xi_+(\bk')}{-\sum_jk_j\sigma_j\xi_-(\bk)}\bar{\hat{\phi}}_+(\bk')\hat{\phi}_-(\bk)e^{i(\bk-\bk')\cdot\bx} \d \bk' \d \bk \d \bx \\
		&=\int_{B^3}\int_{\R^6}\pair{\xi_+(\bk')}{-\sqrt{\bk^2}\xi_-(\bk)}\bar{\hat{\phi}}_+(\bk')\hat{\phi}_-(\bk)e^{i(\bk-\bk')\cdot\bx} \d \bk' \d \bk \d \bx=0
\end{align*}
The last inequality follows again from the fact that $\xi_\pm$ are orthogonal. Using very similar calculations it is easy to show that one sector is positive and that the other is negative and hence we can recover for each boundary condition $\varphi_-=U\varphi_+$ a unique self-adjoint extension.

%%%%%%%%%%%%%%%%%%%%%%%%%%%%%%%%%%%%%%%%%%%%%%%%%%%%%%%%%%%%%%%%%%%%%%%%%%%%%%%%%%%%%%%%%%%%%%%%%%%%%%%%%%%%%%%%%%%%%%%%%%%%%%%%%%%%%%%%%%%%%%%%%%%%%%%%%%%%%%%%%%%%%%%%%%%%%%%%%%%%%%%%%%%%%%%%%%%%%%%%%%%%%%%%%%%%%%%%%%%%%%%%%%%%%%%%%%%%%%%%

\newpage

\section{Lecture V: Symmetries and self-adjoint extensions}

In this lecture we study what role do quantum symmetries play when one faces the problem of selecting self-adjoint extensions.  It is not necessary to emphasize here the fundamental role that symmetries play in Quantum Mechanics, however it is relevant to point out that the presence of symmetries can restrict the possible self-adjoint extensions of a given symmetric operator, that is, contrary to what one could guess, not all self-adjoint extensions are compatible with the quantum symmetries that the symmetric problem has.

In order to fix the ideas we first show an example where these two notions arise together.

\begin{example}\label{ex:symmetry} Consider as Riemannian manifold $\Omega$ the unit ball in $\mathbb{R}^3$.  The boundary of this manifold is the unit sphere $S^2$.  Now consider that as symmetry group $G$ we have the unitary group $\mathcal{U}(1)$ acting by rotations around the $z-$axis.  One can consider now the quotient space $\Omega/G$ and its boundary $\partial\negthickspace\left( \Omega/G\right)$ (see Figure \ref{fig:symmetryball}).   Notice that the action of $G$ restricts to an action on $\partial \Omega$.

The boundary of $\Omega/G$ has two pieces.  One is the quotient of the original boundary $\partial \Omega = S^2$ by the symmetry group, i.e. $S^2/G$.  We shall call it the regular part of the quotient's boundary and denote it as $\partial_{\textrm{reg}}(\Omega/G)$. The other part, that corresponds to the $z$-axis, will be called the singular part of the boundary and will be denoted as $\partial_{\textrm{sing}}(\Omega/G)$.  The singular part arises when the action of the group has fixed points.
\end{example}

Thus two natural questions arise:  When are going to be the self-adjoint extensions of the Laplace-Beltrami operator compatible with the action of $G$?   What are the self-adjoint extensions of the Laplace-Beltrami operator that can be defined on $\Omega/G$?

\begin{figure}[h!]
\center
\includegraphics[width=12cm]{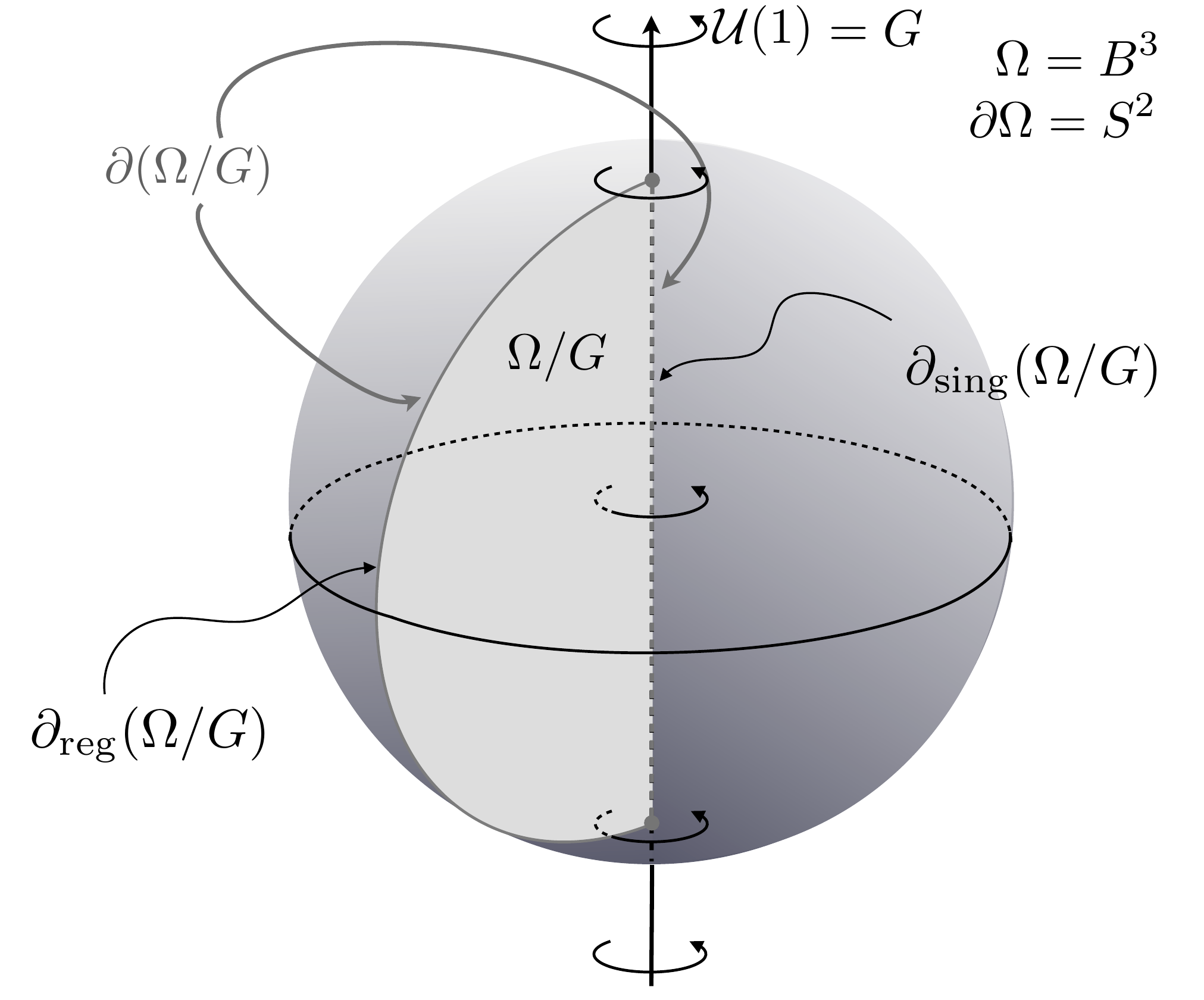}
\caption{\footnotesize{Solid ball  $B^3\subset\R^3$ under the action of the group $\mathcal{U}(1)$ acting by rotation on the $z$-axis.}}\label{fig:symmetryball}
\end{figure}

To avoid confusion between the terms symmetric operators and quantum symmetries we shall use the following terminology.  We will say that a symmetric operator is $G$-invariant if this operator possesses $G$ as a symmetry group. In what follows we will motivate the definition of $G$-invariant symmetric operators and show what role plays the $G$-invariance in the characterization of self-adjoint extensions. For further details we refer to \cite{ibortlledo14}.

%%%%%%%%%%%%%%%%%%%%%%%%%%%%%%%%%%%%%%%%%%%%%%%%%%%%%%%%%%%%%%%%%%%%%%%%%%%%%%%%%%%%%%%%%%%%%%%%%%%%%%%%%%%%%%%%%%%%%%

\subsection{Self-adjoint extensions of $G$-invariant operators}

Suppose that we are given a group $G$ and a strongly continuous unitary representation of it, i.e. a map 
$$
	\map[V]{G}{\mathcal{U}(\H)}{g}{V(g)}\;,
$$
that verifies $V(g_1g_2)=V(g_1)V(g_2)$, $V(e)=\1_{\H}$ for all $g_1,g_2\in G$\,.

Now let $T$ be a self-adjoint operator with domain $\D(T)$. According to Stone's Theorem it defines a one-parameter unitary group 
$$U_t=e^{itT}\;.$$
That the operator is $G$-invariant means that the action of the group and the unitary evolution generated by $T$ commute 
$$
	V(g)U_t=U_tV(g) , , \qquad \forall g \in G, \, \,  t \in \mathbb{R} \;.
$$
The latter is equivalent to $V(g)\D(T)\subset\D(T)$ for all $g\in G$ and $V(g)T \Phi = TV(g) \Phi $ for all $g\in G$ and for all $ \Phi  \in\D(T)\,.$  This motivates the definition of $G$-invariance for any operator.

\begin{definition}
	Let $T$ be a densely defined operator with domain $\D(T)$\;. We will say that $T$ is $G$-invariant if 
	$$V(g)\D(T)\subset\D(T)\,,\quad \forall g\in G$$ and $$V(g)T\Phi=TV(g)\Phi\,\quad\forall g\in G\,, \forall \Phi\in \D(T)\;.$$
\end{definition}

We will need the following result, cf. Theorem \ref{thm:vonneumann}.

\begin{lemma}
 If $T$ is a $G$-invariant, densely defined, symmetric operator, then
	\begin{enumerate}
		\item[i.] $T^\dagger$ is $G$-invariant.
		\item[ii.] The deficiency spaces $\mathcal{N}_+$\,, $\mathcal{N}_-$ are $G$-invariant.
	\end{enumerate}
\end{lemma}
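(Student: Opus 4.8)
The plan is to derive both statements directly from the definition of $G$-invariance together with the unitarity of the representation $V$, exploiting throughout the identity $V(g)^{\dagger}=V(g)^{-1}=V(g^{-1})$.

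First I would record a preliminary observation. Although $G$-invariance is phrased as the one-sided inclusion $V(g)\D(T)\subset\D(T)$, applying it to $g^{-1}$ gives $V(g^{-1})\D(T)\subset\D(T)$, and composing with $V(g)$ yields $\D(T)=V(g)V(g^{-1})\D(T)\subset V(g)\D(T)$. Hence each $V(g)$ restricts to a \emph{bijection} of $\D(T)$ onto itself, and the intertwining relation $V(g)T\Phi=TV(g)\Phi$ may be used freely for any group element, in particular for $g^{-1}$. This is the small structural fact that makes the adjoint computation go through cleanly.

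For part (i) I would unwind the definition of the adjoint. Fix $\Psi\in\D(T^\dagger)$ and set $\chi=T^\dagger\Psi$. To show $V(g)\Psi\in\D(T^\dagger)$ I must exhibit, for every $\Phi\in\D(T)$, a fixed vector $\eta$ with $\scalar{T\Phi}{V(g)\Psi}=\scalar{\Phi}{\eta}$. The computation proceeds by moving $V(g)$ onto the left factor as $V(g)^{\dagger}=V(g^{-1})$, then using the intertwining property at $g^{-1}$ to rewrite $V(g^{-1})T\Phi$ as $TV(g^{-1})\Phi$ (legitimate since $V(g^{-1})\Phi\in\D(T)$), then applying $\Psi\in\D(T^\dagger)$ to the admissible test vector $V(g^{-1})\Phi$, and finally moving $V(g^{-1})^{\dagger}=V(g)$ back across the inner product. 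The outcome is $\scalar{T\Phi}{V(g)\Psi}=\scalar{\Phi}{V(g)\chi}$ for all $\Phi\in\D(T)$, which is precisely the assertion that $V(g)\Psi\in\D(T^\dagger)$ with $T^\dagger V(g)\Psi=V(g)\chi=V(g)T^\dagger\Psi$. Since this gives both the inclusion $V(g)\D(T^\dagger)\subset\D(T^\dagger)$ and the intertwining relation, $T^\dagger$ is $G$-invariant.

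Part (ii) is then immediate from (i). Recalling $\mathcal{N}_\pm=\ker(T^\dagger\mp i\1)$, take $\xi\in\mathcal{N}_+$, so that $T^\dagger\xi=i\xi$. By part (i), $V(g)\xi\in\D(T^\dagger)$ and $T^\dagger V(g)\xi=V(g)T^\dagger\xi=iV(g)\xi$, so $V(g)\xi\in\mathcal{N}_+$; the argument for $\mathcal{N}_-$ is identical with $i$ replaced by $-i$. The only delicate point, and the step I would watch most carefully, is the bookkeeping in part (i): one must apply the intertwining relation at $g^{-1}$ rather than $g$, and at each stage verify that the vector an operator acts on truly lies in the relevant domain ($V(g^{-1})\Phi\in\D(T)$, and $V(g^{-1})\Phi$ an admissible test vector for $\Psi$). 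The unitarity of $V$ is exactly what collapses all adjoints to the clean $V(g)$ and $V(g^{-1})$, so no boundedness or extra domain issues arise beyond those already secured by the $G$-invariance hypothesis.
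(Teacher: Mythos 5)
Your proof is correct and follows essentially the same route as the paper: part (i) by unwinding the definition of the adjoint (using unitarity of $V$ and the intertwining relation at $g^{-1}$), and part (ii) by commuting $V(g)$ through $T^\dagger \mp i\1$ on the kernel. In fact the paper only states that (i) ``follows by applying directly the definition of the adjoint operator and its domain,'' so your write-up supplies exactly the details it leaves implicit.
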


\begin{proof}
	The first implication follows by applying directly the definition of the adjoint operator and its domain. For (ii) it is enough to consider that $\Psi\in\mathcal{N}_+$ iff $(T^\dagger-i\1)\Psi=0$ and therefore using (i) we get
	$$
		0=V(g)(T^\dagger-i\1)\Psi=(T^\dagger-i\1)V(g)\Psi\;.
	$$
\end{proof}

Then the answers to our previous questions are contained in the following theorem:

\begin{theorem}
	Let $T_K$ be a von Neumann self-adjoint extension of the $G$-invariant symmetric operator $T$ determined by the unitary operator $K \colon \mathcal{N}_+\to\mathcal{N}_-$. Then $T_K$ is $G$-invariant if and only if, 
	$$
	[V(g),K]\xi=0 \, , \qquad \forall \xi\in\mathcal{N}_+\;.
	$$
\end{theorem}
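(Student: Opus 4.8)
The plan is to feed von Neumann's explicit description of $T_K$ (Theorem~\ref{thm:vonneumann}) into the two defining conditions of $G$-invariance, using the preceding Lemma to control how $V(g)$ acts on the building blocks of $\dom(T_K)$. Recall that every $\Phi\in\dom(T_K)$ has a unique decomposition $\Phi=\Phi_0+(\1+K)\xi_+$ with $\Phi_0\in\dom(T)$ and $\xi_+\in\mathcal{N}_+$, inherited from the internal direct sum $\dom(T^\dagger)=\dom(T)\oplus\mathcal{N}_+\oplus\mathcal{N}_-$. First I would apply $V(g)$ and expand, writing $V(g)\Phi=V(g)\Phi_0+V(g)\xi_++V(g)K\xi_+$. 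By the Lemma these three summands lie in $\dom(T)$, $\mathcal{N}_+$ and $\mathcal{N}_-$ respectively (the last because $K\xi_+\in\mathcal{N}_-$ and $\mathcal{N}_-$ is $G$-invariant), so this expansion \emph{is} the von Neumann decomposition of $V(g)\Phi$.

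The key observation is then that $V(g)\Phi$ lies in $\dom(T_K)$ if and only if its $\mathcal{N}_-$-component equals $K$ applied to its $\mathcal{N}_+$-component, i.e. $V(g)K\xi_+=KV(g)\xi_+$. Because the splitting is a genuine direct sum, this matching cannot be evaded by reshuffling terms, and it reads exactly $[V(g),K]\xi_+=0$, where the commutator is understood as the element $V(g)K\xi_+-KV(g)\xi_+\in\mathcal{N}_-$. Taking $\Phi_0=0$ and letting $\xi_+$ range over all of $\mathcal{N}_+$ shows that the inclusion $V(g)\dom(T_K)\subset\dom(T_K)$ for every $g$ is equivalent to the stated commutation relation. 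Since $G$-invariance of $T_K$ entails precisely this inclusion, the ``only if'' direction is already obtained.

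For the ``if'' direction it remains to verify the intertwining identity $V(g)T_K\Phi=T_KV(g)\Phi$ once the commutator vanishes. Here I would simply compute both sides from the formula $T_K\Phi=T\Phi_0+i(\1-K)\xi_+$. The left-hand side becomes $TV(g)\Phi_0+i(\1-K)V(g)\xi_+$ after using the $G$-invariance of $T$ on $\Phi_0$ together with $V(g)K\xi_+=KV(g)\xi_+$; the right-hand side, evaluated on the decomposition $V(g)\Phi=V(g)\Phi_0+(\1+K)V(g)\xi_+$ established in the first paragraph, yields the same expression. Hence the commutation relation is in fact equivalent to invariance of the domain alone, and it automatically forces the operator intertwining as well.

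I expect the only genuinely delicate point to be the bookkeeping of \emph{where} each term lives: the whole argument hinges on the uniqueness of the $\dom(T)\oplus\mathcal{N}_+\oplus\mathcal{N}_-$ splitting and on the fact, supplied by the Lemma, that $V(g)$ maps each of $\dom(T)$, $\mathcal{N}_+$ and $\mathcal{N}_-$ into itself. Once these structural facts are invoked, no analytic estimates are needed and the proof reduces to the two short algebraic identities above.
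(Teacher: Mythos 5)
Your proof is correct, and for the direction the paper actually writes out it is essentially the paper's argument: substitute the von Neumann decomposition $\Phi=\Phi_0+(\1+K)\xi_+$ into the definition of $G$-invariance, using the Lemma together with the assumed $G$-invariance of $T$ to see that $V(g)\Phi_0$, $V(g)\xi_+$ and $V(g)K\xi_+$ land in $\dom(T)$, $\mathcal{N}_+$ and $\mathcal{N}_-$ respectively, so that $V(g)\Phi_0+(\1+K)V(g)\xi_+$ is again a von Neumann decomposition and the intertwining identity follows by a one-line computation. Note, however, that the paper only proves the ``if'' direction (commutation implies $G$-invariance) and defers the converse to \cite{ibortlledo14}; your second paragraph supplies exactly that missing half, and it does so correctly: by uniqueness of the splitting $\dom(T^\dagger)=\dom(T)\oplus\mathcal{N}_+\oplus\mathcal{N}_-$, the inclusion $V(g)\dom(T_K)\subset\dom(T_K)$ applied to elements with $\Phi_0=0$ forces $V(g)K\xi_+=KV(g)\xi_+$, and since $G$-invariance of $T_K$ in particular contains this domain inclusion, the ``only if'' direction follows. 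Your closing observation --- that the commutation relation is equivalent to invariance of the domain alone, the operator intertwining then being automatic from the $G$-invariance of $T$ --- is a genuine sharpening that the paper's sketch does not state, and it is the cleanest way to package the equivalence.
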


\begin{proof}
	We will just show that the commutation $[V(g),K]\xi=0$ implies that $[T_K,V(g)]=0$. The full details of the proof can be found at \cite{ibortlledo14}. Recall von Neumann's characterization of self-adjoint extensions, Theorem \ref{thm:vonneumann}. We will use the fact that $V(g)\D(T_K)\subset\D(T_K)$ which is a direct consequence of $[V(g),K]\xi=0$ and the previous Lemma.
	 Then,
	 \begin{align*}
		V(g)T_K(\Phi_0+(\1+K)\xi_+)&=V(g)T\Phi_0 + V(g)(\1+K)\xi_+\\
			&=TV(g)\Phi_0 + V(g)(\1+K)\xi_+\\
			&=T_K(V(g)(\Phi_0+(\1+K)\xi_+) )\;.
	 \end{align*}
\end{proof}

%%%%%%%%%%%%%%%%%%%%%%%%%%%%%%%%%%%%%%%%%%%%%%%%%%%%%%%%%%%%%%%%%%%%%%%%%%%%%%%%%%%%%%%%%%%%%%%%%%%%%%%%%%%%%%%%%%%%%%

\subsection{$G$-invariance and quadratic forms}

Once we have established the relationship between $G$-invariant symmetric operators and its $G$-invariant self-adjoint extensions it is natural to characterize the $G$-invariance properties at the level of quadratic forms. Moreover we want to know if we can use the same tools introduced in Lectures II and IV to characterize $G$-invariant self-adjoint extensions of the corresponding symmetric operators. Since quadratic forms take values in $\mathbb{C}$ a natural definition of $G$-invariance is the following.

\begin{definition}
	A quadratic form $Q:\D\subset\H\to\mathbb{C}$ is $G$-invariant if for all $g\in G$ and $\Phi\in\D$ we have that $V(g)\D\subset \D$ and $$Q(V(g)\Phi,V(g)\Psi)=Q(\Phi,\Psi) \, , \qquad \forall \Phi, \Psi \in \mathcal{D} \;.$$
\end{definition}

Now it is natural to ask what is the relationship that exists between the two notions of $G$-invariance that we have introduced so far. The answer to this question comes again from Kato's Representation Theorem (Theorem \ref{thm:kato}).

\begin{theorem}\label{thm:QAinvariant}
Let $Q$ be a closed, semi-bounded quadratic form with domain $\D$ and let $T$ be the representing semi-bounded, self-adjoint operator.
The quadratic form $Q$ is $G$-invariant if and only if the operator $T$ is $G$-invariant.
\end{theorem}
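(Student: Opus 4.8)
The plan is to prove the two implications separately, using Kato's representation theorem (Theorem~\ref{thm:kato}) to pass between the operator $T$ and its form $Q$, and the Davies characterization of the operator domain (Lemma~\ref{lem:characterizationdavies}) as the main bridge. Throughout I would use that each $V(g)$ is unitary, so $V(g)^{-1}=V(g)^\dagger=V(g^{-1})$, and that $G$-invariance of a domain is automatically two-sided: applying the hypothesis both to $g$ and to $g^{-1}$ upgrades an inclusion $V(g)\D\subset\D$ to the equality $V(g)\D=\D$ (and likewise for $\D(T)$).

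\emph{The implication $Q$ invariant $\Rightarrow$ $T$ invariant} is the one where Davies' lemma does all the work, so I would treat it first. Fix $g\in G$ and $\Psi\in\D(T)$. To show $V(g)\Psi\in\D(T)$ it suffices, by Lemma~\ref{lem:characterizationdavies}, to produce $\chi\in\H$ with $Q(\Phi,V(g)\Psi)=\scalar{\Phi}{\chi}$ for every $\Phi\in\D$. Given such $\Phi$, I would write $\Phi=V(g)\Phi'$ with $\Phi'=V(g^{-1})\Phi\in\D$ (using $V(g^{-1})\D\subset\D$). Then $G$-invariance of the form gives $Q(\Phi,V(g)\Psi)=Q(V(g)\Phi',V(g)\Psi)=Q(\Phi',\Psi)$, and since $\Psi\in\D(T)$ this equals $\scalar{\Phi'}{T\Psi}=\scalar{V(g^{-1})\Phi}{T\Psi}=\scalar{\Phi}{V(g)T\Psi}$ by unitarity. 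Hence $\chi=V(g)T\Psi$ works, so $V(g)\Psi\in\D(T)$ and $TV(g)\Psi=V(g)T\Psi$, which is exactly the $G$-invariance of $T$.

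\emph{The implication $T$ invariant $\Rightarrow$ $Q$ invariant} is more delicate, because $G$-invariance of $T$ concerns $\D(T)$ while that of $Q$ concerns the larger form domain $\D$. On $\D(T)$ the form invariance is immediate from unitarity: for $\Phi,\Psi\in\D(T)$,
\[
Q(V(g)\Phi,V(g)\Psi)=\scalar{V(g)\Phi}{TV(g)\Psi}=\scalar{V(g)\Phi}{V(g)T\Psi}=\scalar{\Phi}{T\Psi}=Q(\Phi,\Psi).
\]
Taking $\Phi=\Psi$ (in, say, the lower-semibounded case) shows $\normm{V(g)\Phi}_Q=\normm{\Phi}_Q$ for all $\Phi\in\D(T)$, i.e.\ $V(g)$ is an isometry for the form norm on $\D(T)$. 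Since $\D(T)$ is a core for $Q$—a standard consequence of Kato's theorem, namely that $\D(T)$ is dense in $\D$ for $\normm{\cdot}_Q$—this isometry extends continuously to all of $\D$; because the form norm dominates the Hilbert norm, the extension coincides with the original $V(g)$, giving $V(g)\D\subset\D$. Finally, both sides of $Q(V(g)\Phi,V(g)\Psi)=Q(\Phi,\Psi)$ are continuous in $\normm{\cdot}_Q$, so the identity valid on the core $\D(T)$ passes to all of $\D$. Alternatively I would argue spectrally: $V(g)$ commutes with the self-adjoint $T$, hence with $B=(T+(1+a)\1)^{1/2}$, whose domain is exactly $\D$, yielding $V(g)\D\subset\D$ and the invariance simultaneously.

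The main obstacle is precisely this last step: transporting invariance from $\D(T)$ to the full form domain $\D$. One must know that $\D(T)$ is a form core and that $V(g)$ is bounded (indeed isometric) in the graph norm $\normm{\cdot}_Q$, whereas the reverse implication is comparatively soft, since Davies' lemma reduces membership in $\D(T)$ to an identity tested against all of $\D$, which the hypothesis supplies directly. A minor bookkeeping point is to handle the upper-semibounded sign of $\normm{\cdot}_Q$ symmetrically, but this changes nothing essential.
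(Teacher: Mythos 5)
Your proof is correct and follows essentially the same route as the paper: Davies' characterization (Lemma~\ref{lem:characterizationdavies}) supplies the implication from form-invariance to operator-invariance, and the converse is obtained exactly as in the paper by proving invariance on $\D(T)$, using that $\D(T)$ is dense in $\D$ for $\normm{\cdot}_Q$, and invoking closedness of $Q$ to pass to the whole form domain. Your alternative spectral argument via $B=(T+(1+a)\1)^{1/2}$ is also valid (it is the second representation theorem, with $\D=\D(B)$), but it is an extra ingredient the paper does not need.
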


\begin{proof}%\noindent
To prove the direct implication recall from
Lemma~\ref{lem:characterizationdavies} that $\Psi\in\D(T)$ if and only if $\Psi\in\D$ and there exists $\chi\in\H$ such that
$$Q(\Phi,\Psi)=\scalar{\Phi}{\chi}\quad\forall \Phi\in\D\;.$$
Then, if $\Psi\in\D(T)$, and using the $G$-invariance of the quadratic form, we have that
\begin{align*}
Q(\Phi,V(g)\Psi)&=Q(V(g)^\dagger\Phi,\Psi)\\
                &=\scalar{V(g)^\dagger\Phi}{\chi}=\scalar{\Phi}{V(g)\chi}\;.
\end{align*}
This implies that $V(g)\Psi\in\D(T)$ and from
$$TV(g)\Psi=V(g)\chi=V(g)T\Psi\,,\quad \Psi\in\D(T)\,,g\in G\;,$$
we show the $G$-invariance of the self-adjoint operator $T$.

To prove the converse we use the fact that $\D(T)$ is dense in $\D$ with respect to the graph norm $\normm{\cdot}_{Q}$\,.
For $\Phi,\Psi\in\D(T)$ we have that
\begin{align*}
 Q(\Phi,\Psi)&= \scalar{\Phi}{T\Psi}=\scalar{V(g)\Phi}{V(g)T\Psi} \\
             &= \scalar{V(g)\Phi}{TV(g)\Psi}=Q(V(g)\Phi,V(g)\Psi)\;.
\end{align*}

These equalities show that the $G$-invariance of $Q$ is true at least for the elements in the domain of the operator.
Now for any $\Psi\in\D$ there is a sequence
$\{\Psi_n\}_n\subset\D(T)$ such that $\normm{\Psi_n-\Psi}_Q\to 0$\,. This, together with the equality above,
implies that $\{V(g)\Psi_n\}_n$ is a Cauchy sequence with respect to $\normm{\cdot}_{Q}$\,. Since $Q$ is closed,
the limit of this sequence is in $\D$\,. Moreover it is clear that $\lim_{n\to\infty}V(g)\Psi_n=V(g)\Psi$\,,
hence
$$\normm{V(g)\Psi_n-V(g)\Psi}_Q\to0\,.$$

So far we have proved that $V(g)\D\subset\D$. Now for any $\Phi,\Psi\in\D$ consider
sequences $\{\Phi_n\}_n,\{\Psi_n\}_n\subset\D(T)$ that converge respectively to $\Phi,\Psi\in\D$ in the topology induced by $\normm{\cdot}_{Q}$\,. Then the limit
\begin{align*}
Q(\Phi,\Psi)&=\lim_{n\to\infty}\lim_{m\to\infty}Q(\Phi_n,\Psi_m)\\
            &=\lim_{n\to\infty}\lim_{m\to\infty}Q(V(g)\Phi_n,V(g)\Psi_m)=Q(V(g)\Phi,V(g)\Psi)\;.
\end{align*}
concludes the proof.
\end{proof}

%%%%%%%%%%%%%%%%%%%%%%%%%%%%%%%%%%%%%%%%%%%%%%%%%%%%%%%%%%%%%%%%%%%%%%%%%%%%%%%%%%%%%%%%%%%%%%%%%%%%%%%%%%%%%%%%%%%%%%

\subsection{Symmetries and the Laplace-Beltrami operator}

In this section we are going to construct $G$-invariant self-adjoint extensions of the Laplace-Beltrami operator. In order to do so we will take the approach introduced in Section \ref{sec:LBrevisited}. That is, we will construct the quadratic form associated to the Laplace-Beltrami operator, Eq.~\ref{eq:greenLB} and we will select a domain for it such that it is semibounded, this is granted by the gap condition on the unitary operator $U$, cf. Definition~\ref{def:gap}, then we will impose further conditions on $U$ that ensure that the quadratic form is $G$-invariant. According to Theorem~\ref{thm:QAinvariant} the Friedrichs' extensions of the Laplace-Beltrami operator obtained this way will be $G$-invariant.

Let $U:\L^2(\pO)\to\L^2(\pO)$ be a unitary operator with gap, cf. Definition~\ref{def:gap}, and let $A_{U_{W^\bot}}$ be the partial Cayley transform associated to it. Recall that $A_{U_{W^\bot}}$ is a bounded self-adjoint operator. We will consider the quadratic form defined by 
$$Q_U(\Phi,\Psi)=\scalar{\d\Phi}{\d\Psi}-\scalar{\varphi}{A_{U_{W^\bot}}\psi}\;,$$
with domain $$\D=\{\Phi\in\H^1 | \varphi_W=P_W\varphi=0\}\;.$$
Notice that the boundary condition $\varphi-i\dot{\varphi}=U(\varphi+i\dot{\varphi})$ is implemented in this case in a separated way by means of the Eqs.~\ref{eq:bcqf}. Recall that $W$ is the closed subspace associated to the eigenvalue $\{-1\}$ of the unitary operator $U$ and that $P_W$ is the orthogonal projection onto it. 

The gap condition ensures that this quadratic form is semibounded from below. We are considering a slightly more general situation than the one considered in Section~\ref{sec:LBrevisited} and in order to ensure the closability of the quadratic form we will need to require that the unitary operator is well behaved with respect to the boundary data. As detailed in \cite{ibortlledo13} it is enough to consider that
$$A_{U_{W^\bot}}|_{\H^{1/2}(\pO)}:\H^{1/2}(\pO)\to\H^{1/2}(\pO)$$ is continuous.

Now we need to address the problem of characterizing those unitaries that lead to $G$-invariant quadratic forms. Of course, we need first to specify what kind of groups and group representations are we going to consider. Since the domain of the quadratic form is a dense subset of $\H^1(\Omega)$ we will need at least that the unitary representation $$V:G\to\L^2(\Omega)$$ of the group be such that it preserves $\H^1(\Omega)$, i.e., we will need that $$V(g)\H^1(\Omega)\subset\H^1(\Omega)\;.$$

\begin{definition}
	We will say that the unitary representation $V$ is traceable if there exists another unitary representation $V_{\pO}:G\to\L^2(\pO)$ such that 
	$$V_{\pO} \left(\H^{1/2}(\pO)  \right) \subset\H^{1/2}(\pO)$$ and makes the following diagram commutative
	$$
		\begin{array}{ccc}
		\H^{1}(\Omega)&\stackrel{V(g)}{\longrightarrow} &\H^1({\Omega})\\
		b \downarrow&&\downarrow b\\
		\H^{1/2}(\pO)&\stackrel{V_{\pO}(g)}{\longrightarrow} &\H^{1/2}({\pO})
		\end{array}
	$$	
\end{definition}

\begin{lemma}\label{lem:traceable}
	Let $V:G\to\L^2(\Omega)$ be a unitary representation of the group $G$ such that $V(g)\H^1(\Omega)\subset\H^1(\Omega)$\,. If in addition $V(g)\H^1_0(\Omega)\subset\H^1_0(\Omega)$ then $V$ is traceable.
\end{lemma}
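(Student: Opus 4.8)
The plan is to identify the boundary Sobolev space with a quotient of $\H^1(\Omega)$ and to let $V$ descend through the trace map. By Lions' trace theorem (Theorem~\ref{thm:lions}) the map $b:\H^1(\Omega)\to\H^{1/2}(\pO)$ is continuous and surjective, and its kernel is precisely $\H^1_0(\Omega)$; hence $b$ factors through a topological isomorphism $\bar b:\H^1(\Omega)/\H^1_0(\Omega)\xrightarrow{\;\sim\;}\H^{1/2}(\pO)$, the open mapping theorem guaranteeing continuity of the inverse.

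First I would record that each $V(g)$ is bounded on $\H^1(\Omega)$: it preserves $\H^1(\Omega)$ by hypothesis, it is $\L^2$-bounded, and $\H^1(\Omega)\hookrightarrow\L^2(\Omega)$ continuously, so a closed-graph argument gives $\norm{\cdot}_1$-boundedness. Since $V$ is a representation, $V(g^{-1})=V(g)^{-1}$ also preserves $\H^1(\Omega)$ and $\H^1_0(\Omega)$, whence $V(g)\H^1(\Omega)=\H^1(\Omega)$ and $V(g)\H^1_0(\Omega)=\H^1_0(\Omega)$; in particular $V(g)$ preserves $\ker b=\H^1_0(\Omega)$.

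Next I would define the boundary representation. Because $V(g)$ maps $\H^1_0(\Omega)$ onto itself, it descends to a bounded invertible operator $\tilde V(g)$ on the quotient, and I set $V_{\pO}(g):=\bar b\,\tilde V(g)\,\bar b^{-1}$ on $\H^{1/2}(\pO)$. By construction $b(V(g)\Phi)=V_{\pO}(g)\,b(\Phi)$ for all $\Phi\in\H^1(\Omega)$, which is exactly the commutativity of the required square, and $V_{\pO}(g)\H^{1/2}(\pO)\subset\H^{1/2}(\pO)$ is automatic. The relations $V_{\pO}(g_1g_2)=V_{\pO}(g_1)V_{\pO}(g_2)$ and $V_{\pO}(e)=\1$ descend directly from those of $V$, so at this stage $V_{\pO}$ is a representation of $G$ on $\H^{1/2}(\pO)$ by bounded, boundedly invertible operators.

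The hard part is the remaining requirement that $V_{\pO}$ be a \emph{unitary} representation on all of $\L^2(\pO)$, because the unitarity we are given lives in $\L^2(\Omega)$ and the trace does not interact continuously with the $\L^2(\Omega)$ inner product, so $\L^2(\pO)$-isometry cannot simply be read off from $V(g)^\dagger=V(g^{-1})$. The plan is to show that $V_{\pO}(g)$ preserves the $\L^2(\pO)$-norm on the dense subspace $\H^{1/2}(\pO)$ and then to extend it by continuity to a unitary of $\L^2(\pO)$, with inverse the extension of $V_{\pO}(g^{-1})$. Proving the boundary isometry is the genuine obstacle: it must come from the compatibility of the boundary action with the $\L^2$-unitarity of $V$, which is most transparent in the situation of interest where $V$ is induced by an isometric action of $G$ on $(\Omega,\eta)$ preserving $\pO$; there $V_{\pO}(g)$ is the pullback by the restriction $g|_{\pO}$, an isometry of $(\pO,\partial\eta)$, hence $\L^2(\pO)$-unitary. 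Completing the abstract statement amounts to isolating exactly this compatibility, for which I would follow \cite{ibortlledo14}.
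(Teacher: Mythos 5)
Your first two paragraphs \emph{are} the paper's proof, with the details made explicit: the paper's entire argument consists of invoking Theorem~\ref{thm:lions} to get that $b:\H^1(\Omega)\to\H^{1/2}(\pO)$ is continuous and onto, concluding $\H^1(\Omega)/\operatorname{ker}b\simeq\H^{1/2}(\pO)$ by the inverse mapping theorem, and noting that $\operatorname{ker}b=\H^1_0(\Omega)$. The descended operators $V_{\pO}(g)=\bar b\,\tilde V(g)\,\bar b^{-1}$, their boundedness, and the representation property are exactly what this quotient identification is meant to produce, and you have verified these points rather than asserted them.

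The obstacle you isolate in your last paragraph is genuine, and you should be aware that the paper's proof does not overcome it either: it simply stops at the quotient isomorphism. The definition of traceability asks for a \emph{unitary} representation $V_{\pO}:G\to\mathcal{U}(\L^2(\pO))$ preserving $\H^{1/2}(\pO)$, whereas the quotient construction only delivers bounded, boundedly invertible operators in the $\H^{1/2}(\pO)$ topology, and $\L^2(\pO)$-unitarity does not follow from the stated hypotheses by this argument. Moreover, commutativity of the diagram together with density of $\H^{1/2}(\pO)$ in $\L^2(\pO)$ forces $V_{\pO}(g)$ to agree with the descended operator, so no cleverer choice of $V_{\pO}$ can repair a failure of unitarity. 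To see that the issue is not vacuous, note that if $V(g)\Phi=\Phi\circ F_g^{-1}$ for volume-preserving diffeomorphisms $F_g$ of $\Omega$, then $V$ is a unitary representation on $\L^2(\Omega)$ preserving $\H^1(\Omega)$ and $\H^1_0(\Omega)$, while the descended boundary operator is composition with $F_g|_{\pO}$, which is unitary on $\L^2(\pO)$ only when $F_g|_{\pO}$ preserves the boundary measure --- something volume preservation in the bulk does not guarantee. So your proposal is not weaker than the paper's proof: it contains it, and in addition correctly identifies the unproved step, which in the intended applications (actions by isometries, as in Example~\ref{ex:symmetry}, cf.~\cite{ibortlledo14}) is precisely where the boundary unitarity actually comes from.
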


\begin{proof}
	By Theorem~\ref{thm:lions} we have that the map $b \colon \H^1(\Omega)\to\H^{1/2}(\pO)$ is continuous and onto. Hence, by the inverse mapping theorem $$\H^1(\Omega)/\operatorname{ker}b\simeq \H^{1/2}(\pO)\;.$$ Noticing that $\H^1_0(\Omega)=\operatorname{ker}b$ ends the proof.
\end{proof}

We are ready to state the main result of this lecture. We are going to assume that the unitary operator does not have $\{-1\}$ in the spectrum. This assumption is a bit stronger than the gap assumption but will simplify the statement of the next theorem. For a generalization for unitary operators with gap and the proof we refer to \cite{ibortlledo14}.

\begin{theorem}\label{thm:invariantrep}
	Let $U$ be a unitary operator such that $\{-1\}$ is not in its spectrum and such that 
	$$
	U|_{\H^{1/2}(\pO)}:\H^{1/2}(\pO)\to\H^{1/2}(\pO)
	$$ 
	is continuous. Let $V:G\to\L^2(\Omega)$ be a traceable strongly continuous unitary representation preserving $\H^1(\Omega)$. The quadratic form $Q_U$ is $G$-invariant if and only if $$[U,V(g)]=0\;,\forall g\in G\;.$$
\end{theorem}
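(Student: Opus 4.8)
The plan is to work directly with the sesquilinear form and to reduce the invariance statement to an operator identity on the boundary Hilbert space $\L^2(\pO)$, which is then transported to $U$ through the Cayley transform. Since $\{-1\}\notin\sigma(U)$ we have $W=\{0\}$, so the domain is the full $\D=\H^1(\Omega)$ and $A_U=i(U-\1)(U+\1)^{-1}$ is a genuine bounded, self-adjoint operator on $\L^2(\pO)$; the form reads $Q_U(\Phi,\Psi)=\scalar{\d\Phi}{\d\Psi}-\scalar{\varphi}{A_U\psi}_{\pO}$. I read the commutator $[U,V(g)]$ as $[U,V_{\pO}(g)]$, the boundary representation supplied by traceability; recall that traceability gives the intertwining $b\circ V(g)=V_{\pO}(g)\circ b$, i.e. the boundary value of $V(g)\Phi$ is $V_{\pO}(g)\varphi$, and that $V_{\pO}(g)$ is unitary on $\L^2(\pO)$ and preserves $\H^{1/2}(\pO)$. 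First I would dispose of the bulk term: because the representation $V$ is induced by the isometric action of $G$ on $(\Omega,\eta)$, it is unitary for the $\H^1$ inner product, for which $\norm{\Phi}_1^2=\norm{\Phi}^2+\scalar{\d\Phi}{\d\Phi}$, and subtracting the (already invariant) $\L^2$ part yields $\scalar{\d V(g)\Phi}{\d V(g)\Psi}=\scalar{\d\Phi}{\d\Psi}$. Hence the first summand of $Q_U$ is automatically $G$-invariant, and the invariance of $Q_U$ collapses to the equality of the two boundary terms.

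Next I would reduce to an identity on the boundary. Using the intertwining, the boundary term of $Q_U(V(g)\Phi,V(g)\Psi)$ equals $\scalar{V_{\pO}(g)\varphi}{A_U V_{\pO}(g)\psi}_{\pO}=\scalar{\varphi}{V_{\pO}(g)^{\dagger} A_U V_{\pO}(g)\psi}_{\pO}$. Thus $Q_U$ is $G$-invariant if and only if $\scalar{\varphi}{(V_{\pO}(g)^{\dagger} A_U V_{\pO}(g)-A_U)\psi}_{\pO}=0$ for all admissible boundary data $\varphi,\psi$. By Lions' trace theorem (Theorem \ref{thm:lions}) these data range over all of $\operatorname{ran}(b)=\H^{1/2}(\pO)$, which is dense in $\L^2(\pO)$; since $A_U$ and $V_{\pO}(g)$ are bounded on $\L^2(\pO)$, the vanishing of this bounded sesquilinear form on a dense subspace forces the operator identity $V_{\pO}(g)^{\dagger} A_U V_{\pO}(g)=A_U$, that is $[A_U,V_{\pO}(g)]=0$.

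Finally I would transport the commutation from $A_U$ to $U$. Since $-1\notin\sigma(U)$ and the self-adjoint $A_U$ has $\pm i\notin\sigma(A_U)$, the Cayley transform is a bijection: $U=-(A_U-i\1)^{-1}(A_U+i\1)$ is a bounded function of $A_U$, and $A_U$ is a bounded function of $U$. Consequently a bounded operator commutes with $A_U$ if and only if it commutes with $U$, so $[A_U,V_{\pO}(g)]=0\iff[U,V_{\pO}(g)]=0$. Reading the chain of equivalences in both directions closes the proof.

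I would expect the main obstacle to be the argument of the second paragraph: the passage from the equality of the boundary sesquilinear forms---valid \emph{a priori} only for data in the image $\H^{1/2}(\pO)$ of the trace map---to an operator identity on all of $\L^2(\pO)$. This is precisely the point where the regularity mismatch (boundary data forced to live in $\H^{1/2}(\pO)$ by Theorem \ref{thm:lions}, versus $U$ and $A_U$ naturally acting on $\L^2(\pO)$) must be bridged, and it rests on the surjectivity of the trace map, the density of $\H^{1/2}(\pO)$ in $\L^2(\pO)$, and the boundedness hypotheses on $U|_{\H^{1/2}(\pO)}$ and $A_U$. A secondary point deserving care is the claim that $V$ preserves the Dirichlet form: if one only assumes a unitary representation preserving $\H^1(\Omega)$ without it being $\H^1$-unitary, the bulk term need not be invariant, so the statement should be understood for isometry-induced (equivalently, $\H^1$-unitary) representations.
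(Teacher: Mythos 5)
There is no internal proof to compare against: the paper states Theorem~\ref{thm:invariantrep} without proof and explicitly refers to \cite{ibortlledo14} both for the proof and for the generalization to unitaries with gap. Judged on its own, your argument has the right architecture, and two of its three steps are exactly what the situation demands. Your reading of $[U,V(g)]$ as a commutator with the boundary representation $V_{\pO}(g)$ is the only meaningful one (and the one used in the example that follows the theorem). The reduction step is correct: since $-1\notin\sigma(U)$ one has $W=\{0\}$, so $\D=\H^1(\Omega)$ and $A_U$ is bounded and self-adjoint; traceability plus the surjectivity of $b:\H^1(\Omega)\to\H^{1/2}(\pO)$ shows that invariance of the boundary term is equivalent to the vanishing of the bounded sesquilinear form $\scalar{\varphi}{(V_{\pO}(g)^\dagger A_U V_{\pO}(g)-A_U)\psi}_{\pO}$ for all $\varphi,\psi\in\H^{1/2}(\pO)$, and density of $\H^{1/2}(\pO)$ in $\L^2(\pO)$ then yields $[A_U,V_{\pO}(g)]=0$. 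The Cayley-transform step is also correct: $-1\notin\sigma(U)$ makes $U\mapsto A_U$ an algebraic bijection onto bounded self-adjoint operators, with $U=-(A_U-i\1)^{-1}(A_U+i\1)$, so a bounded operator commutes with $A_U$ if and only if it commutes with $U$.

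The problem is your first step, and your closing remark about it should be promoted from a ``secondary point'' to the central one: the hypotheses of the theorem as stated do not imply invariance of the bulk term $\scalar{\d\Phi}{\d\Psi}$, and without that invariance the statement is actually false, so your argument (which assumes $V$ is induced by an isometric action) proves a corrected statement rather than the stated one. Concretely, let $\Omega$ be the unit ball, $\theta$ a nonconstant smooth real function on $\overline{\Omega}$, $G=\R$, $V(t)\Phi=e^{it\theta}\Phi$, and $U=\1$ (so $A_U=0$ and $Q_{\1}$ is the Neumann form). This $V$ is a strongly continuous unitary representation, preserves $\H^1(\Omega)$, and is traceable with $V_{\pO}(t)$ given by multiplication by $e^{it\theta|_{\pO}}$; the condition $[U,V_{\pO}(t)]=0$ holds trivially. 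Yet for $\Phi\equiv 1$ one gets $Q_{\1}(\Phi)=0$ while $Q_{\1}(V(t)\Phi)=t^2\int_\Omega|\d\theta|^2\,\d\mu_\eta>0$, so $Q_{\1}$ is not $G$-invariant. Hence the theorem requires the additional hypothesis that each $V(g)$ leaves the Dirichlet form invariant (equivalently, is unitary for the $\H^1$ inner product), which holds for representations induced by isometric diffeomorphisms as in Example~\ref{ex:symmetry}, and which is the setting the paper implicitly has in mind. With that hypothesis made explicit, your proof is complete; note that both directions of the equivalence use it, so it cannot be dispensed with.
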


In order to show how to use the previous theorem to obtain $G$-invariant self-adjoint extensions, we will go back to the Example~\ref{ex:symmetry}.

\begin{example}  Let us consider again the situation in Example \ref{ex:symmetry}.  There the manifold $B^3$, see Figure~\ref{fig:symmetryball}, is invariant under rotations on the $z$-axis. Moreover, the group $\mathcal{U}(1)$ acts by isometric diffeomorphisms on the manifold. It is easy to show that any such action induces a unitary representation of the group that preserves all the Sobolev spaces, in particular $\H^1(\Omega)$. Moreover, the representation induced this way is always traceable since $\H^1_0(\Omega)$ is also preserved, cf. Lemma~\ref{lem:traceable}. 
	
Now consider that $U\in\mathcal{U}(\L^2(S^2))$ is such that 
	$$
	\map[U]{\H^{1/2}(S^2)}{\H^{1/2}(S^2)}{f(\theta,\varphi)}{(Uf)(\theta,\varphi)}\;.
	$$
We want to characterize those unitaries that commute with the action of the group $(V(\alpha)f)(\theta,\varphi)=f(\theta,\varphi+\alpha)$\, (cf. Theorem~\ref{thm:invariantrep}).

To analyse what are the possible unitary operators that lead to $G$-invariant quadratic forms it is convenient to use the Fourier 
series expansions of the elements in $\L^2(S^2)$\,. Let $f\in\L^2(S^2)$\,, then
$$f(\theta,\varphi)=\sum_{n\in\mathbb{Z}}\hat{f}_n(\theta)e^{\mathrm{i}n\varphi}\;,$$
where the coefficients of the expansion are given by $$\hat{f}_n(\theta)=\frac{1}{2\pi}\int_0^{2\pi}f(\theta,\varphi)e^{-\mathrm{i}n\varphi}\d\varphi\;.$$
We can therefore consider the induced action of the group $G$ as a unitary operator on $\L^2([0,\pi])\otimes{\ell}_2$\,. 
In fact we have that:
\begin{align*}
\widehat{v (\alpha)f}_n(\theta)&
=\frac{1}{2\pi}\int_0^{2\pi}f(\theta,\varphi+\alpha)e^{-\mathrm{i}n\varphi}\d\varphi\\
&=\sum_{m\in\mathbb{Z}}\hat{f}_m(\theta)e^{\mathrm{i}m\alpha}\int_0^{2\pi}\frac{e^{\mathrm{i}(m-n)\varphi}}{2\pi}\d\varphi
 =e^{\mathrm{i}n\alpha}\hat{f}_n(\theta)\;.
\end{align*}
This shows that the induced action of the group $G$ is represented by a unitary operator in $\mathcal{U}(\ell_2)$ that acts 
diagonally in the Fourier series expansion.
More concretely, we can represent it as $\widehat{v (\alpha)}_{nm}=e^{\mathrm{i}n\alpha}\delta_{nm}\,$\,.
From all the possible unitary operators acting on the Hilbert space of the boundary, only those whose representation in $\ell_2$
commutes with the above operator will lead to $G$-invariant quadratic forms, cf. Theorem~\ref{thm:invariantrep}.
Since $\widehat{v (\alpha)}$ acts
diagonally on $\ell_2$ it is clear that only operators of the form $\hat{U}_{nm}=u\otimes e^{\mathrm{i}\beta_n}\delta_{nm}$\,, 
$\{\beta_n\}_n\subset\mathbb{R}$\,, $u\in\mathcal{U}(\mathcal{L}^2[0,\pi])$ will lead to $G$-invariant quadratic forms.

As a particular case we can consider that all parameters are equal, i.e., $\beta_n=\beta$, $n\in\mathbb{Z}$\,.
In this case it is clear that $(\widehat{Uf})_n(\theta)=e^{\mathrm{i}\beta}(uf_n)(\theta)$\,, which gives the following 
admissible unitary with spectral gap at $-1$:
$$Uf=e^{\mathrm{i}\beta}uf\;.$$
Consider that $ug(\theta)=e^{\mathrm{i}\gamma(\theta)}g(\theta)$. This means that the unitary operator $u\in\mathcal{U}(\mathcal{L}^2[0,\pi])$ is of Robin type or, in other words, that it leads to Robin boundary conditions with respect to the variable $\theta$. The equation above shows that the only Robin functions compatible with the $\mathcal{U}(1)$-invariance are those that are defined by functions that do not depend on the angle $\varphi$\,.
\end{example}

%%%%%%%%%%%%%%%%%%%%%%%%%%%%%%%%%%%%%%%%%%%%%%%%%%%%%%%%%%%%%%%%%%%%%%%%%%%%%%%%%%%%%%%%%%%%%%%%%%%%%%%%%%%%%%%%%%%%%%%%%%%%%%%%%%%%%%%%%%%%%%%%%%%%%%%%%%%%%%%%%%%%%%%%%%%%%%%%%%%%%%%%%%%%%%%%%%%%%%%%%%%%%%%%%%%%%%%%%%%%%%%%%%%%%%%%%%%%%%%%

{\footnotesize

}

\begin{thebibliography}{ABPP13}

\bibitem[ABPP13]{asorey13}
{\sc M.~Asorey, A.P. Balachandran, and J.M. P\'{e}rez-Pardo}.
\newblock Edge states: Topological insulators, superconductors and {QCD} chiral
  bags.
\newblock {\em JHEP}, {\bf 2013}(12) (2013), 073.

\bibitem[AF03]{adams03}
{\sc R.A. Adams and J.J.F. Fournier}.
\newblock {\em {S}obolev Spaces}.
\newblock Pure and Applied Mathematics. Academic Press, Oxford, second edition,
  2003.

\bibitem[AG61]{akhiezer61b}
{\sc N.I. Akhiezer and I.M. Glazman}.
\newblock {\em Theory of Linear Operators in {H}ilbert Spaces}, volume~II.
\newblock Dover Publications, New York, 1961.

\bibitem[AIM05]{asorey05}
{\sc M.~Asorey, A.~Ibort, and G.~Marmo}.
\newblock Global theory of quantum boundary conditions and topology change.
\newblock {\em Int. J. Mod. Phys. A}, {\bf 20} (2005), 1001--1025.

\bibitem[BGP08]{bruning08}
{\sc J.~Br\"{u}ning, V.~Geyler, and K.~Pankrashkin}.
\newblock Spectra of self-adjoint extensions and applications to solvable
  {S}chr\"{o}dinger operators.
\newblock {\em Rev. Math. Phys.}, {\bf 20}(1) (2008), 1--70.

\bibitem[Dav95]{davies95}
{\sc E.B. Davies}.
\newblock {\em Spectral Theory and Differential Operators}.
\newblock Cambridge University Press, 1995.

\bibitem[Gru68]{grubb68}
{\sc G.~Grubb}.
\newblock A characterization of the non-local boundary value problems
  associated with an elliptic operator.
\newblock {\em Ann. Sc. Norm. Sup.}, {\bf 22}(3) (1968), 425--513.

\bibitem[Ib12]{Ib12} {\sc A. Ibort.} 
\newblock {\em Three lectures on global boundary conditions and the 
theory of self--adjoint extensions of the covariant Laplace--Beltrami and Dirac operators on Riemannian manifolds with boundary}. 
\newblock XX International Fall Workshop on Geometry and Physics.  
\newblock AIP Conference Proceedings {\bf 1460}, 15-54. (2012)  doi:10.1063/1.4733360.

\bibitem[ILPP13]{ibortlledo13}
{\sc A.~Ibort, F.~Lled\'{o}, and J.M. P\'{e}rez-Pardo}.
\newblock Self-adjoint extensions of the {L}aplace-{B}eltrami operator and unitaries at the boundary.
\newblock {\em J.\ Funct.\ Anal.} {\bf 268} (2015) 634--670. doi:10.1016/j.jfa.2014.10.013

\bibitem[ILPP14]{ibortlledo14}
{\sc A.~Ibort, F.~Lled\'{o}, and J.M. P\'{e}rez-Pardo}.
\newblock On self-adjoint extensions and  symmetries in Quantum Mechanics.
\newblock {\em Ann. Henri Poincar\'e}. doi:10.1007/s00023-014-0379-4.

\bibitem[IMPP12]{Ibort12}
{\sc A.~Ibort, G.~Marmo, and J.M. P\'{e}rez-Pardo}.
\newblock Boundary dynamics driven entanglement.
\newblock {\em J. Phy. A: Math. Theor} {\bf 47} (2014) 385301].

\bibitem[IPP13]{Ibort13}
{\sc A.~Ibort and J.M. P\'{e}rez-Pardo}.
\newblock Numerical solutions of the spectral problem for arbitrary
  seld-adjoint extensions of the one-dimensional {S}chr\"{o}dinger equation.
\newblock {\em SIAM J. Numer. Anal.}, {\bf 51} (2013), 1254 -- 1279.

\bibitem[Kat95]{kato95}
{\sc T.~Kato}.
\newblock {\em Perturbation Theory for Linear Operators}.
\newblock Classics in Mathematics. Springer, 1995.

\bibitem[Koc75]{kochubei75}
{\sc A.N. Kochubei}.
\newblock Extensions of symmetric operators and symmetric binary relations.
\newblock {\em Math. Notes}, {\bf 17}(1) (1975), 25--28.

\bibitem[LL97]{LiebLoss}
{\sc E.H.~Lieb and M.~Loss}.
\newblock {\em Analysis},
{\em Graduate Studies in Mathematics} {\bf14}.
\newblock American Mathematical Society, Providence, 1997.


\bibitem[LM72]{lions72}
{\sc J.L. Lions and E.~Magenes}.
\newblock {\em Non-homogeneous Boundary Value Problems and Applications},
  volume~I of {\em Grundlehren der mathematischen Wissenschaften}.
\newblock Springer, 1972.

\bibitem[RS75]{reed75}
{\sc M.~Reed and B.~Simon}.
\newblock {\em Methods of Modern Mathematical Physics}, volume~II.
\newblock Academic Press, New York {and} London, 1975.

\bibitem[RS78]{reed78}
{\sc M.~Reed and B.~Simon}.
\newblock {\em Methods of Modern Mathematical Physics}, volume~IV.
\newblock Academic Press, New York {and} London, 1978.

\bibitem[WE80]{We80}  J. Weidmann. {\it Linear Operators in Hilbert Spaces}. Springer Verlag
(1980).


\end{thebibliography}
\end{document}